\newcommand{\edit}[1]{{#1}}
\tikzset{partial ellipse/.style args={#1:#2:#3}{insert path={+ (#1:#3) arc (#1:#2:#3)} }}
\tikzset{->-/.style={decoration={ markings, mark=at position #1 with {\arrow{>}}},postaction={decorate}}}
\definecolor{Dgreen}{RGB}{0,153,0}
\newcommand*{\mailto}[1]{\href{mailto:#1}{\nolinkurl{#1}}}
\newtheorem{theorem}{Theorem}[section]
\newtheorem{lemma}[theorem]{Lemma}
\newtheorem{remark}[theorem]{Remark}
\newtheorem{prop}[theorem]{Proposition}
\newtheorem{thm}[theorem]{Theorem}
\newtheorem{oss}[theorem]{Remark}
\theoremstyle{definition}
\newcommand{\R}{\mathbb{R}}
\newcommand{\Z}{\mathbb{Z}}
\newcommand{\C}{\mathbb{C}}
\newcommand{\be}{\begin{equation}}
\newcommand{\ee}{\end{equation}}
\newcommand{\bea}{\begin{eqnarray}}
\newcommand{\eea}{\end{eqnarray}}
\def\pmtwo#1#2#3#4{ \begin{bmatrix}#1&#2\\#3&#4\end{bmatrix}}
\DeclareMathOperator{\res}{Res}
\newcommand{\lb}{\lambda}
\def\XXint#1#2#3{{\setbox0=\hbox{$#1{#2#3}{\int}$}
     \vcenter{\hbox{$#2#3$}}\kern-.5\wd0}}
\def\d{{\rm d}}
\def\i{{\rm i}}
\def\1{\operatorname{Id}}
\def\wt{\widetilde}
\def\ov{\overline}
\def\res{\mathop{{\rm res}}}
\def\Ai{\mathrm{Ai}}
\def\Re{\operatorname{Re}}
\def\Im{\operatorname{Im}}
\def\exp{\operatorname{exp}}
\def\Bes{\mathrm{Bes}}
\def\Jac{\operatorname{dn}}
\def\le{\left}
\def\ri{\right}
\numberwithin{equation}{section}
\let\oldtocsection=\tocsection
\let\oldtocsubsection=\tocsubsection
\let\oldtocsubsubsection=\tocsubsubsection
\renewcommand{\tocsection}[2]{\hspace{0em}\oldtocsection{#1}{#2}}
\renewcommand{\tocsubsection}[2]{\hspace{2em}\oldtocsubsection{#1}{#2}}
\renewcommand{\tocsubsubsection}[2]{\hspace{4em}\oldtocsubsubsection{#1}{#2}}
\title{Rigorous asymptotics of a KdV soliton gas}
\author{M. Girotti}
\address{Mila Institute, Universit\'e de Montr\'eal, 6666 St-Urbain, H2S 3H1, Montr\'eal, QC}
\email{manuela.girotti@umontreal.ca}
\author{T. Grava}
\address{SISSA, via Bonomea 265, 34136 Trieste, Italy and School of Mathematics, University of Bristol, UK}
\email{grava@sissa.it}
\author{R. Jenkins}
\address{Department of Mathematics, University of Central Florida, 4393 Andromeda Loop N, Orlando, FL 32816}
\email{robert.jenkins@ucf.edu}
\author{K. D. T.-R. McLaughlin}
\address{Department of Mathematics, Colorado State University, 1874 campus delivery, Fort Collins, CO 80523}
\email{kenmcl@rams.colostate.edu}
\date{}
\begin{document}

\maketitle

\begin{abstract}
We analytically study the long time and large space asymptotics of a new  broad class of  solutions of the KdV equation introduced by Dyachenko, Zakharov, and Zakharov.  These solutions are characterized by a Riemann--Hilbert problem which we show arises as  the  limit  $N\to + \infty$ of a  gas of  $N$-solitons. We show that this gas of solitons  in the limit $N\to\infty$ is  slowly  approaching 
a cnoidal wave solution for $x \to - \infty$ (up to terms of order $\mathcal{O} (1/x)$), while  approaching zero exponentially fast for $x\to+\infty$.   
We establish an asymptotic description of the gas of solitons  for large times that is valid over the entire spatial domain, in terms of Jacobi elliptic functions.
\end{abstract}

\tableofcontents

\section{Introduction}
This paper concerns the concept of a gas of solitons for the Korteweg-de Vries (KdV) equation,
\begin{eqnarray}
\label{eq:KdV}
u_{t} - 6 u u_{x} + u_{xxx} = 0 \ .
\end{eqnarray}
It is well known that this nonlinear partial differential equation is integrable, arising as the compatibility condition of a Lax pair of linear differential operators.  The compatibility condition can be presented as the existence of a simultaneous solution to the pair of equations
\begin{eqnarray}
\label{eq:LP1}
&&-\psi_{xx} + u \psi = E \psi \, , \\
&&\psi_{t} -4\psi_{xxx} + 6 u \psi_{x} + 3 u_{x} \psi = 0 \ ,
\end{eqnarray}
where $E$ is the spectral parameter and  $\psi=\psi(x,t)$.
The Lax pair formulation yields a complete solution procedure for the initial value problem for (\ref{eq:KdV}) via the inverse scattering transform in the case of rapidly decaying  or step-like initial data, and has led to a large and ever-growing collection of results concerning the analysis of the initial value problem in many different asym\-ptotic regimes, including the behaviour in the small dispersion limit, as well as a complete description of the long-time behaviour for fairly general decaying or step-like initial conditions.  In the case of periodic boundary conditions as well, there have been many works that are aimed at understanding the behaviour of solutions as well as the geometry of the space of solutions.  These works have all been driven by the physical 
origins of the KdV equation as a basic model for one-dimensional wave motion of the interface between air and water, and in particular the discovery of the soliton.
The soliton is a rapidly decreasing  travelling wave  solution of the KdV equation, namely a solution of the form $u(x,t)=f(x-vt)$ and takes the form
\begin{equation}
\label{soliton}
u(x,t)=-2\eta^2\operatorname{sech}^2\le( 2\eta(x-4\eta^2t-x_0) \ri)
\end{equation}
where  $E=-\eta^2$  is  the energy  parameter of   Schr\"odinger equation in the Lax pair \eqref{eq:LP1}. 
The periodic travelling wave that can  be obtained by direct integration   of the KdV equation takes the form
\begin{equation}
\label{tw}
u(x,t)=\beta_1+\beta_2-\beta_3-2(\beta_1-\beta_3)\mbox{dn}^2(\sqrt{\beta_1-\beta_3}(x+2(\beta_1+\beta_2+\beta_3)t+x_0)|m)
\end{equation}
where $\mbox{dn}(z|m)$ is the  Jacobi elliptic function of modulus $m^2=\frac{\beta_2-\beta_3}{\beta_1-\beta_3}$ and $\beta_1>\beta_2>\beta_3$. In both formulas $x_0$ is an arbitrary phase.
Let us introduce the $\vartheta$ function
\[
\vartheta_3 (z;\tau)= \sum_{n\in \mathbb{Z}} e^{2\pi i \, nz +  \pi  n^2 i \tau} \ , \qquad z \in \mathbb{C} \ ,\quad \Im \tau>0.
\]
Using the standard relation between  Jacobi  elliptic functions and $\vartheta$-function  (see eg. \cite{Lawden}  pg. 45 exercise 16 and 3.5.5)
we re-write \eqref{tw}  as
\begin{equation}
\begin{split}
\label{elliptic}
u(x,t)&= \bar{u}-2\frac{\partial^2}{\partial
x^2}\log\vartheta_3\left(\dfrac{\sqrt{\beta_1-\beta_3}}{2 K(m)}[x+2 t(\beta_1+\beta_2+\beta_3) +x_0]-\frac{1}{2};\hat{\tau}\right)\,\\
\bar{u}&=\beta_1+\beta_2-\beta_3-2(\beta_1-\beta_3)\dfrac{E(m)}{K(m)}\, 
\end{split}
\end{equation}
with  $K(m)=\int_0^{\pi/2}\frac{\d\vartheta}{\sqrt{1-m^2\sin^2\vartheta}}$ and $E(m)=\int_0^{\pi/2}\d\vartheta\sqrt{1-m^2\sin^2\vartheta}$, the complete elliptic integrals of the first 
and second kind respectively,  $\hat{\tau}=iK'(m)/K(m)$ and  $K'(m)=K(\sqrt{1-m^{2}})$.  We observe that $\bar{u}$ is the average value   of $u(x,t)$ over an oscillation.
 The above formula coincides with the genus-one case of the more general  Its-Matveev  and Dubrovin-Novikov  formula  \cite{ItsMatveev},\cite{DubrovinNovikov}  for finite-gap solutions of KdV.

With the above potential (\ref{tw}),  the Schr\"odinger equation  (\ref{eq:LP1}) coincides with the Lam\'e equation  and the stability zones  (or Bloch spectrum)  of the potential are $[\beta_3,\beta_2]\cup[\beta_1,+\infty)$.

  Of the highest importance for applications to the theory of water waves was the discovery of families of explicit   more complex solutions, such as N-soliton solutions when the Schr\"odinger equation in \eqref{eq:LP1} has $N$ simple eigenvalues, or a $N$-gap solution  when there are $N+1$ disjoint  stability zones of the corresponding Schr\"odinger equation   or  solutions that connect to Painlev\'{e} transcendents.  

Since the early days of integrable nonlinear PDEs, researchers have considered the notion of a soliton gas (see \cite{ZintTurb}, and references contained therein).  The quest is for an understanding of the properties of an interacting ensemble of many solitons, ultimately in the presence of randomness.  However, even in the absence of randomness, the dynamics of a large collection of solitons is only understood with mathematical precision in a few specific settings (the small-dispersion limit of the KdV equation, as considered in the works of Lax and Levermore \cite{LaxLevI,LaxLevII,LaxLevIII}, could be interpreted as a highly concentrated soliton gas, with a smooth and rapidly decaying function being represented as an infinite accumulation of solitons).

Within integrable turbulence, the interest is in the computation of statistical quantities describing the evolution of random configurations of solitons.  In \cite{DutPel} and \cite{PelShu} the authors used computational methods to approximate such statistical quantities via the Monte-Carlo method, and presented a formal derivation of evolution equations for the first four statistical moments of the solution.  In another direction \cite{Z71,ElKa05} the interest is in computing a kinetic equation describing the evolution of the spectral distribution functions.  This has been extended to similar formal considerations based on properties of fundamental solutions in the periodic setting, as opposed to solitonic gasses \cite{Tovbis2,El16,ElKamPavZyk11}.  

\subsection{The soliton gas}

Towards the goal of discovering new, broad families of solutions to integrable nonlinear PDEs, the ``dressing method" as developed by Zakharov and Manakov \cite{ZM85} has yielded some interesting new results in \cite{ZakZakDya}.  In that paper, the authors show how the dressing method can be used to produce a new family of solutions they refer to as ``primitive potentials" which, although are not random, can be naturally interpreted as a soliton gas.  Cutting to the chase, the authors derive a Riemann--Hilbert problem which seeks a vector $\Xi = \begin{bmatrix} \Xi_{1}& \Xi_{2} \end{bmatrix}^{T}$ satisfying  a normalization condition at $\infty$, and the jump relations

\begin{eqnarray}
\Xi_{+}(i \lambda) = J(\lambda) \Xi_{-}( i \lambda) \, , \qquad \Xi_{+}(-i \lambda) = J^{T}(\lambda) \Xi_{-}( -i \lambda) \, ,  \qquad \lambda \in (\eta_1,\eta_2)\, 
\end{eqnarray}
where the jump matrix $J(\lambda)$ is given by
\begin{eqnarray}
\label{eq:DZZRHP}
J(\lambda) = \frac{1}{ 1 +r_1(\lb)r_2(\lb)}\begin{bmatrix}
\displaystyle 1 -r_1(\lb)r_2(\lb) &   \displaystyle 2 ir_1(\lb)  e^{ - 2 \lambda x} \\
\displaystyle 2 ir_2(\lb)  e^{2 \lambda x}  & \displaystyle  1 -r_1(\lb) r_2(\lb)  \end{bmatrix} \ .
\end{eqnarray}
The parameters $\eta_{1}$ and $\eta_{2}$ are taken to be real with $0 < \eta_{1} < \eta_{2}$, and the intervals $(i\eta_1,i\eta_2)$ and $(-i\eta_2,-i\eta_1)$ are oriented downwards.

The reflection coefficients  $r_1(\lb)=r_1(\lb; t)$ and $r_2(\lb)=r_2(\lb; t)$  evolve in time according to 
\begin{eqnarray}
 r_1(\lambda; t)=r_1(\lambda; 0) e^{ ( 8 \lambda^{3} - 12 \lambda) t} \, , \qquad r_2(\lambda; t) =r_2(\lambda; 0)e^{-(  8 \lambda^{3} - 12 \lambda) t} \ .
\end{eqnarray}
The authors consider a number of different settings, and use a combination of analytical and computational methods to provide a description of the solutions of the KdV equation determined by this Riemann--Hilbert problem.  In the case that $r_{2} \equiv 0$, the potential is exponentially decaying as $x \to +\infty$.   But the behavior as $x$ grows in the other direction (as well as the the asymptotic behavior for $|x|$ large in the case that both reflection coefficients are nontrivial) was mentioned as a challenging problem for both analysis and computation.

The configuration of solitons considered in \cite{ZakZakDya} is somewhat different than the solitonic gas configurations considered in \cite{DutPel} and \cite{PelShu}, where they considered a large number of solitons that were spaced quite far apart from each other at $t=0$.  In other words, they considered a dilute gas of solitons that had enough space between them to evolve as isolated solitons until they interact, usually in a pair-wise fashion.  In contrast, the soliton gas considered in \cite{ZakZakDya} (and considered here as well) is a configuration that cannot be viewed as a collection of isolated solitons.  Indeed, as we show, they are overlapping to the extent that,  at $t=0$  the potential approaches zero exponentially fast as $x \to +\infty$, while  for $x\to-\infty$ the potential approaches the cnoidal wave solution of KdV very slowly---the error decays with a rate of $O(\frac{1}{x})$.  Because of these different behaviors, these potentials represent a new large class of potentials which have not been previously  considered  in the literature.  This model is substantially different from the model of infinite solitons considered in \cite{Boyd} and \cite{Zaitsev} where  an infinite number of equally spaced and identical solitons can be identified with the cnoidal wave solution of KdV.

\subsection{Statement of the results} 
In Section \ref{sec:2} we consider a sequence of Riemann--Hilbert problems, indexed by $N$, for a pure $N$-soliton solution,  with spectrum  confined to the intervals $(-i\eta_2,-i\eta_1)\cup (i\eta_1,i\eta_2)$ for some $\eta_2>\eta_1>0$  and show that for this sequence, as $N \to + \infty$, the solution of the Riemann--Hilbert problem converges to the solution of the Riemann--Hilbert problem studied in \cite{ZakZakDya}, for the case $ r_2(\lambda) \equiv 0$.

{\bf Remark}. Since the Riemann--Hilbert problem emerges in a limit, the existence and uniqueness of a solution is not a-priori known.  For completeness, we provide a proof of existence which is valid for all $x$ and $t$ in the Appendix.

In Section \ref{sec:3}  (Theorem \ref{thm:3.4}) we establish that the potential  $u(x,0)$ determined by this Riemann--Hilbert problem coincides with the  periodic travelling wave  as 
 $x  \to - \infty$:
\begin{equation}
\label{udn_intro}
 u(x,0)=\eta_2^2-\eta_1^2-2\eta_2^2\Jac^2\le( \eta_2(x+\phi) + K(m)\le| \, m\ri. \ri)+  \mathcal{O}\le(x^{-1}\ri) \,.
\end{equation}
The function $\Jac \le( z\le| \, m\ri. \ri)$ is  the Jacobi elliptic function of modulus $m=\eta_1/\eta_2$.  It is  periodic with period $2K(m)$, and satisfies $\Jac \le( 0 \le| \, m\ri. \ri)=1$ and $\Jac \le( K(m) \le| \, m\ri. \ri)=\sqrt{1-m^2}$. 
The expression \eqref{udn_intro} for the elliptic solution of KdV coincides with  the   travelling wave solution \eqref{tw}  in the introduction   by identifying $\beta_1=0$, $\beta_2=-\eta_1^2$ and $\beta_3=-\eta_2^2$.

The     function (\ref{udn_intro}) is periodic in $x$ with period $ 2K(m)/\eta_2$.The minimum amplitude  of the oscillations is $-\eta_2^2-\eta_1^2$ and the maximum amplitude is 
$ \eta_1^2-\eta_2^2$ so that the amplitude of the oscillations is $2\eta_1^2$. The average value of $u(x)$ over an oscillation   can be obtained from \eqref{elliptic}.

The phase $\phi$ in formula (\ref{udn_intro}) depends on the coefficient $r_1(\lb)$ that characterizes the continuum limit of the norming constants of the soliton gas and it is equal to
\begin{equation}
\label{phi_intro}
\phi=\int_{\eta_1}^{\eta_2}\dfrac{\log 2r_1(i\zeta)}{\sqrt{(\zeta^2-\eta_1^2)(\zeta^2-\eta_2^2)}}\dfrac{\d\zeta}{\pi i}\in\R \, .
\end{equation}
\begin{remark}
The potential $u(x,0)$ is a step-like finite gap potential. The slow decay rate as $x \to -\infty$  implies that such potential does not fall in the class considered in \cite{Monvel}.
When $\eta_1=0$ the potential $u(x,0)=-\eta_2^2+  \mathcal{O}\le(x^{-1}\ri)$ as $x  \to - \infty$. Such a potential 
is a step-like potential with zero reflection coefficient on the real axis.
It  is not included in the class of potentials studied in \cite{EGKT} and \cite{CohenKappeler} because  of  the low  decaying condition at $x\to-\infty$.
 Potentials with a low decay rate have appeared when studying rogue waves of infinite order of the focusing nonlinear Schr\"odinger equation \cite{Bilman1}, see also \cite{Bilman2}.
\end{remark}

Finally in Sections \ref{sec:4}-\ref{sec:6} we provide a global long-time asymptotic description of the solution $u(x,t)$ to the KdV equation with this  initial data $u(x,0)$.  The asymptotic behaviour depends on the quantity $\xi = x / 4 t$.  There are  three main regions:  (1) a constant  region;   (2)   a region where the solution is approximated by a  periodic  traveling wave with constant coefficients  specified by the spectral data; and (3) a region where the solution is approximated by a  periodic travelling wave with  modulated coefficients  (see Figure~\ref{fig1}). More precisely:
\begin{itemize}
\item[(1)]  for fixed $\xi > \eta_{2}^{2}$, there is a positive constant $C = C(\xi)$ so that 
\[
u(x,t)= \mathcal{O}\le(e^{- C t}\ri) \, .
\]
\item[(2)] For $\xi<\xi_{\rm crit}$ we have 
\begin{equation}
\label{u_dn_intro2}
u(x,t)=\eta_2^2-\eta_1^2-2\eta_2^2\Jac^2\le( \eta_2(x-2(\eta_1^2+\eta_2^2)t+\phi) + K(m)\le| \, m\ri. \ri)+  \mathcal{O}\le(t^{-1}\ri) \, ,
\end{equation}
with $m=\eta_1/\eta_2$ and $\phi$ as in (\ref{phi_intro}).  The critical value $\xi_{\rm crit}$ is obtained from the equation
\begin{equation}
\label{Whitham}
\xi_{\rm crit}=\dfrac{\eta_2^2}{2}W(m)\, ,\quad   W(m)=1+m^2+2\dfrac{m^2(1-m^2)}{1-m^2-\frac{E(m)}{K(m)}}\, ,\quad m=\frac{\eta_1}{\eta_2}\, .
\end{equation}

\item[(3)]  For  $\xi_{\rm crit}<\xi<\eta_2^2$  we have that 
\begin{equation}
\label{u_dn_intro1}
u(x,t)=\eta_2^2-\alpha^2-2\eta_2^2\Jac^2\le( \eta_2(x-2(\alpha^2+\eta_2^2)t+\wt{\phi}) + K(m_\alpha) \le| \, m_\alpha\ri. \ri)+  \mathcal{O}\le(t^{-1}\ri) \, ,
\end{equation}
where $\Jac\le( z\le| \, m_{\alpha}\ri. \ri)$ is the Jacobi elliptic function of modulus $m_\alpha=\alpha/\eta_2$,
\[
\wt{\phi}=\int_{\alpha}^{\eta_2}\dfrac{\log 2r_1(i\zeta)}{\sqrt{(\zeta^2-\alpha^2)(\zeta^2-\eta_2^2)}}\dfrac{\d\zeta}{\pi i}\in\R\,
\]
and the coefficient $\alpha=\alpha(\xi)$ is determined from  the Whitham  modulation equation \cite{Whitham}
\begin{equation}
\label{xim_intro}
\xi  =\dfrac{x}{4t}=\dfrac{\eta_2^2}{2}W(m_\alpha) \, ,
\end{equation}
where $W(m)$ has been defined in (\ref{Whitham}).

 \end{itemize}
 The  equation \eqref{xim_intro} was used by Gurevich and Pitaevskii \cite{GP73} to describe the modulation of the travelling wave  that is formed in the solution of the KdV equation
with a step initial data $u(x,0)=-\eta_2^2$ for $x<0$ and $u(x,0)=0$ for $x>0$.   Such a modulated travelling wave is also called a dispersive shock wave.
The rigorous analysis of the dispersive shock wave emerging from step-like initial data problem was  obtained via inverse scattering in \cite{Hruslov} and more recently via Riemann--Hilbert methods in \cite{EGKT}.  

 \begin{figure}[h]
\includegraphics[width=.5\textwidth]{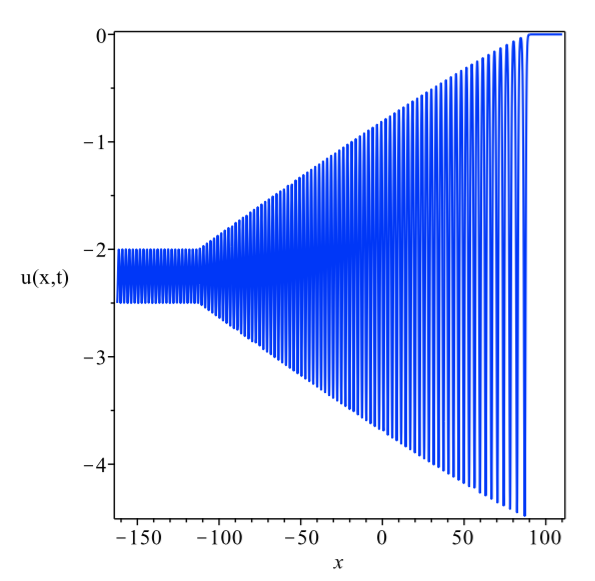}
\caption{Soliton gas behaviour at $t=10$ with endpoints $\eta_1=0.5$ and $\eta_2=1.5$ and reflection coefficient $r_1(\lambda) \equiv 1$. \label{fig1}}
\end{figure}
\section{Soliton gas as limit of $N$ solitons as $N\to +\infty$}
\label{sec:2}

The Riemann--Hilbert  problem for a pure $N$-soliton solution (see for example \cite{Teschl}) is described as follows:  find a $2$-dimensional row vector $M$ such that
\begin{enumerate}
\item $M(\lb)$ is meromorphic in $\mathbb{C}$, with simple poles at $\{\lambda_{j} \}_{j=1}^{N}$ in $i\mathbb{R}_{+}$, and at the corresponding conjugate points $\{\overline{\lambda_{j}} \}_{j=1}^{N}$ in $i\mathbb{R}_{-}$;
\item $M$ satisfies the residue conditions
\begin{align}
\label{residue_soliton}
\res_{\lambda=\lambda_{j}} M (\lb)= \lim_{\lambda \to \lambda_{j}} M(\lambda) \begin{bmatrix} 0 & 0 \\ \displaystyle \frac{c_j e^{2 i \lambda_{j} x }}{N} & 0 \end{bmatrix}\, ,  \quad \res_{\lambda=\overline{\lambda_{j}}} M(\lb) = \lim_{\lambda \to \overline{\lambda_{j}}} M(\lambda) \begin{bmatrix} 0 & \displaystyle \frac{- c_j e^{-2 i \overline{\lambda_{j}} x }}{N} \\ 0 & 0\end{bmatrix} \, ,
\end{align}
where $c_j\in i\R_+$;
\item $\displaystyle M(\lb)= \begin{bmatrix}1&1 \end{bmatrix} + \mathcal{O}\le(\frac{1}{\lambda}\ri)$ as $\lambda \rightarrow \infty$,
\item $M$ satisfies the symmetry
\[
M(-\lb)=M(\lb)\begin{bmatrix}0&1\\1&0\end{bmatrix}\,.
\]
\end{enumerate}
The solution of the above Riemann--Hilbert problem is determined from the relation
\begin{eqnarray}
\label{M_soliton}
M(\lambda) = \left(  1 + \sum_{j=1}^{N} \frac{e^{i\lambda_j x}\alpha_{j}}{\lambda- \lambda_{j}} , \ 1 - \sum_{j=1}^{N} \frac{e^{i\lambda_j x}\alpha_{j}}{\lambda+ \lambda_{j}}  \right) \ , 
\end{eqnarray}
where the constants $\alpha_j$ are uniquely determined by the residue conditions \eqref{residue_soliton}.
The $N$-soliton potential $u(x)$ is determined from $M$ via
\begin{gather}
u(x) = 2 \frac{\d}{\d x} \left(
\lim_{\lambda \to \infty} \frac{ \lambda}{i} \left( M_{1}(\lambda) - 1 \right)
\right) \ ,
\end{gather}
where $M_1(\lambda)$ is the first entry of the vector $M(\lambda)$.
In particular, for a one-soliton potential, namely $N=1$, one recovers  the expression (\ref{soliton}) where the shift  $ x_0$ is given by
\[
x_0=\dfrac{1}{4\eta_1}\log\frac{c_1}{2i\eta_1}\in\R.
\]

We are interested in the limit as $N  \to + \infty$, under the additional assumptions: 
\begin{enumerate}
\item The poles $\left\{ \lambda_{j}^{(N)} \right\}_{j=1}^{N}$ are sampled from a density function $\varrho(\lambda)$ so that $\int_{\eta_{1}}^{-i  \lambda_{j}} \varrho( \eta ) d \eta = j / N $, for $j = 1, \ldots, N$.
\item The coefficients $\{c_j\}_{j=1}^N$ are purely imaginary (in fact $c_{j} \in i \mathbb{R}_{+}$) and are assumed to be discretizations of a given function:
\begin{gather}
c_j = \frac{i (\eta_2-\eta_1) r_1(\lambda_j)}{\pi} \qquad j=1,\ldots, N \ . \label{cjasR1}
\end{gather}
where $ r_1(\lambda)$ is an analytic function for $\lambda$ near the intervals $(i \eta_{1}, i \eta_{2})$ and $(-i \eta_{2}, -i \eta_{1})$, with the symmetry $ r_1(\ov \lambda) =   r_1(\lambda)$, and is further assumed to be a real valued positive and non-vanishing function of $\lambda$ for $\lambda \in [i \eta_{1}, i \eta_{2}]$.
\end{enumerate}

In the regime $x \to + \infty$, it is easy to notice that all residue conditions   \eqref{residue_soliton} contain only exponentially small terms and therefore, by a small norm argument, the potential is exponentially small.  

On the other hand, for $x \to - \infty$ all of those terms are exponentially large.  To show that the solution is also exponentially small in this latter case, one may reverse the triangularity of the residue conditions, by defining
\begin{eqnarray}
A(\lambda)=M(\lambda) \prod_{j=1}^{N} \left( \frac{\lambda - \lambda_j }{\lambda-\overline{\lambda_j}} \right)^{\sigma_{3}} \ .
\end{eqnarray}

Now the residue conditions are
\begin{eqnarray}&&
\res_{\lambda=\lambda_{j}} A(\lb) = \lim_{\lambda \to \lambda_{j}} A(\lambda) \begin{bmatrix} 0 &\displaystyle  \frac{N}{c_j} e^{-2 i \lambda_{j} x }(\lambda_{j}-\overline{\lambda_{j}})^{2}\prod_{k\neq j} \left( \frac{\lambda_{j}-\overline{\lambda_{k}}}{\lambda_{j}-\lambda_{k}}\right)^{2} \\ 0 & 0 \end{bmatrix} \\ &&
\res_{\lambda=\overline{\lambda_{j}}} A(\lb) = \lim_{\lambda\to \overline{\lambda_{j}}} A(\lambda) \begin{bmatrix} 0 & 0 \\ \displaystyle \frac{-N}{c_j} e^{2 i \overline{\lambda_{j}} x }
(\overline{\lambda_{j}}-\lambda_{j})^{2} \prod_{k\neq j} \left( \frac{\overline{\lambda_{j}}-\lambda_{k}}{\overline{\lambda_{j}}-\overline{\lambda_{k}}}\right)^{2} & 0 \end{bmatrix} 
\end{eqnarray}
while the potential $u(x)$ is still extracted from $A$ via the same calculation:
\begin{eqnarray*}
u(x) = 2  \frac{\d}{\d x} \left(
\lim_{\lambda \to \infty} 
\frac{\lambda}{i} \left( A_{1}(\lambda) - 1 
\right)
\right) \, .
\end{eqnarray*}

%
%
The quantity $e^{ -2 i \lambda_{j} x} $ now decays exponentially as $x  \to - \infty$, and this implies (again by a standard small-norm argument) exponential decay of the potential $u(x)$ for $x  \to - \infty$.  On the other hand, the product term is exponentially large in $N$.  One may show that there is $C>0$ so that 
\begin{gather} \left| \frac{N}{c_{j}} ( \lambda_{j} - \overline{\lambda_{j}})
\prod_{k\neq j } \left(\frac{\lambda_{j}-\overline{\lambda_{k}}}{\lambda_{j}-\lambda_{k}} \right)^{2} \right| < D e^{CN}  \qquad \text{for all } j=1,\ldots, N \ .
\end{gather}

Therefore this exponential decay does not set in until $x$ is rather large.  Indeed, in order for the residue conditions to all be exponentially small, it must be that $x \ll - C N$. 
In other words, the $N$-soliton solution that we are considering has very broad support, and in the large-$N$ limit, it is not exponentially decaying for $x \to -\infty$.  To be more precise, the above computations can be used to show the following lemma.
\begin{lemma}
For any $0 < \tilde{k} < \eta_{1}/2$, there exists a constant $\tilde{C}$ so that if $\displaystyle x < - \tilde{C} N $, then 
\begin{eqnarray}
|u(x)| < e^{-\tilde{k} |x|} \ .
\end{eqnarray}
In other words, for $x < - \tilde{C} N$, the potential $u(x)$ is exponentially decreasing.  
\end{lemma}
The proof of this lemma is straightforward: under the hypotheses of the lemma, all residue conditions are exponentially small.  One may replace the residue conditions with jumps across small circles encircling the poles, and the jumps are all of the form $I + \mathcal{O}\left(e^{-\tilde{k} |x|} \right)$, so small norm existence theory applies.

We will show here how to derive the Riemann--Hilbert  problem for a soliton gas with one reflection coefficient $ r_1$ (as described in \cite{ZakZakDya}) from a meromorphic Riemann--Hilbert  problem for $N$ solitons in the limit as $N \to+\infty$.  First, we remove the poles by defining
\begin{eqnarray}
\label{eq:2.10Z}
Z(\lambda) = M(\lambda) \begin{bmatrix} 1 & 0 \\ \displaystyle - \frac{1}{N}\sum_{j=1}^{N}\frac{c_j e^{2 i \lambda x }}{\lambda - \lambda_{j}} & 1\end{bmatrix}
\end{eqnarray}
within a closed curve $\gamma_+$ encircling the poles counterclockwise in the upper half plane $\mathbb{C}_{+}$, and
\begin{eqnarray}
\label{eq:2.11Z}
Z(\lambda) = M(\lambda) \begin{bmatrix}{1} & \displaystyle { \frac{1}{N}\sum_{k=1}^{N}\frac{c_j e^{-2 i \lambda x }}{\lambda - \overline{\lambda_{j}}}}\\
{0}&{1} \end{bmatrix}
\end{eqnarray}
within a closed curve $\gamma_-$ surrounding the poles clockwise in the lower half plane $\mathbb{C}_{-}$.  Outside these two sets, we take $Z(\lambda)=M(\lambda)$.

Then the jumps are
\begin{gather}
\label{eq:2.12Z}
Z_+(\lambda) = Z_-(\lambda) \begin{cases} \begin{bmatrix} 1 & 0 \\ \displaystyle - \frac{1}{N}\sum_{j=1}^{N}\frac{c_j e^{2 i \lambda x }}{\lambda - \lambda_{j}} & 1\end{bmatrix} & \lambda \in \gamma_+ \\
 \begin{bmatrix}{1} & \displaystyle { -\frac{1}{N}\sum_{k=1}^{N}\frac{c_j e^{-2 i \lambda x }}{\lambda - \overline{\lambda_{j}}}}\\
{0}&{1} \end{bmatrix} & \lambda \in \gamma_- \ 
\end{cases}
\end{gather}
where, for $\lambda \in \gamma_{+}$ or $\gamma_{-}$, the boundary values $Z_{+}(\lambda)$ are taken from the left side of the contour as one traverses it according to its orientation, and the boundary values $Z_{-}(\lambda)$ are taken from the right.  The quantity $Z(\lambda)$ is normalized so that $Z(\lambda) = \begin{bmatrix}1&1 \end{bmatrix} + \mathcal{O}\le(\lambda^{-1}\ri)$ as $\lambda \to \infty$.

We assume now that in the limit as the number of poles goes to infinity, the poles are distributed according to some distribution $\varrho(\lambda)$ with density compactly supported in $(i\eta_1,i\eta_2)$ (and extended by symmetry on the corresponding interval in the lower half plane). 

For the sake of simplicity, we can assume that the $N$ poles are equally spaced along $(i\eta_1,i\eta_2)$ with distance between two poles equal to $| \Delta \lambda| = \frac{\eta_2-\eta_1}{N}$ and with (atomic) density:
\begin{gather}
\varrho_{N}(\lambda) = \frac{1}{Z_N} \sum_{j=1}^N c_j \delta_{\lambda_j}(\lambda) \qquad \lambda \in (i\eta_1,i\eta_2) \ ,
\end{gather}
for some normalization constant $Z_N$.

\begin{oss}
In the case where the poles are distributed according to a more general measure $\varrho(\lambda)$, the steps to follow are very similar. The entries of the jump matrices will carry the density function along, which can be eventually incorporated in the definition of the reflection coefficient $ r_1(\lambda)$.
\end{oss}

As the number of poles increases within the support of the measure, the following result holds.
\begin{prop}For any open set $K_{+}$ containing the interval $[i \eta_{1}, i \eta_{2}]$, and any open set  $K_{-}$ containing the interval $[-i \eta_{2}, - i \eta_{1}]$,  the following limit holds uniformly for all  $\lambda \in \mathbb{C }\backslash K_{+}$:
\begin{gather}
\lim_{N\to +\infty} \frac{1}{N}\sum_{j=1}^{N}\frac{c_j }{\lambda - \lambda_{j}} =\int_{i\eta_1}^{i\eta_2} \frac{2i  r_1(\zeta)}{\lambda - \zeta}\frac{\d \zeta}{2\pi i} \,,
\end{gather}
and the following limit holds uniformly for all $\lambda \in \mathbb{C} \backslash K_{-}$:
\begin{gather}
\lim_{N\to +\infty} \frac{1}{N}\sum_{j=1}^{N}\frac{c_j }{\lambda - \overline{\lambda_{j}}} = \int_{-i\eta_2}^{-i\eta_1} \frac{2i  r_1(\zeta)}{\lambda - \zeta}\frac{\d \zeta}{2\pi i} \ ,
\end{gather}
 where $ r_1(\lambda)$ is an analytic function for $\lambda$ near the intervals $(i \eta_{1}, i \eta_{2})$ and $(-i \eta_{2}, -i \eta_{1})$,  and it is   assumed to be a positive real-valued and non-vanishing function of $\lambda$ for $\lambda \in [i \eta_{1}, i \eta_{2}]$.
\end{prop}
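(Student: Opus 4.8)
The plan is to recognize the left-hand side as a Riemann sum, bound its discretization error against the integral uniformly on a fixed closed curve surrounding $[i\eta_1,i\eta_2]$, and then propagate that bound to all of $\mathbb{C}\setminus K_+$ (including a full neighbourhood of $\lambda=\infty$) by a maximum-modulus argument. For the first reduction: inserting $c_j=i(\eta_2-\eta_1)r_1(\lambda_j)/\pi$ from \eqref{cjasR1} and writing $\lambda_j=i\zeta_j$ with $\zeta_j$ equally spaced in $(\eta_1,\eta_2)$ at spacing $\Delta\zeta=(\eta_2-\eta_1)/N$, one gets
\[
\frac1N\sum_{j=1}^N\frac{c_j}{\lambda-\lambda_j}=\sum_{j=1}^N\frac{i\,r_1(i\zeta_j)}{\pi(\lambda-i\zeta_j)}\,\Delta\zeta ,
\]
which is a left-endpoint Riemann sum for $\int_{\eta_1}^{\eta_2}\frac{i\,r_1(i\zeta)}{\pi(\lambda-i\zeta)}\,\d\zeta$; the substitution $\zeta\mapsto i\zeta$ turns this integral into $\int_{i\eta_1}^{i\eta_2}\frac{2i\,r_1(\zeta)}{\lambda-\zeta}\frac{\d\zeta}{2\pi i}$, which is exactly the claimed limit. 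So the statement is reduced to a uniform Riemann-sum error estimate.

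Next, fix an open set $U$ with $[i\eta_1,i\eta_2]\subset U$ and $\overline U\subset K_+$ (possible since $[i\eta_1,i\eta_2]$ is compact and $K_+$ is open). For $\lambda\in\partial U$ one has $\operatorname{dist}\bigl(\lambda,i[\eta_1,\eta_2]\bigr)\ge\delta$ and $|\lambda|\le R$ for constants $\delta,R>0$ depending only on $U$, so $\zeta\mapsto r_1(i\zeta)/(\lambda-i\zeta)$ is $C^1$ on $[\eta_1,\eta_2]$ with first derivative bounded by a constant depending only on $\delta$, $R$ and $r_1$ (this is where analyticity, hence $C^1$ regularity, of $r_1$ near $[i\eta_1,i\eta_2]$ enters). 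The elementary bound for the error of a Riemann sum of a $C^1$ function then gives a constant $C$, independent of $N$ and of $\lambda\in\partial U$, with
\[
\Bigl|\,\frac1N\sum_{j=1}^N\frac{c_j}{\lambda-\lambda_j}-\int_{i\eta_1}^{i\eta_2}\frac{2i\,r_1(\zeta)}{\lambda-\zeta}\frac{\d\zeta}{2\pi i}\,\Bigr|\le\frac{C}{N},\qquad \lambda\in\partial U .
\]

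Finally, let $g_N(\lambda)$ be the difference of the two sides above. Its only possible singularities — the poles $\lambda_j$ of the sum and the branch cut $i[\eta_1,\eta_2]$ of the integral — all lie in $U$, so $g_N$ is analytic on $(\mathbb{C}\cup\{\infty\})\setminus\overline U$, with $g_N(\infty)=0$ since both terms are $\mathcal{O}(1/\lambda)$. By the maximum-modulus principle on the Riemann sphere, $\sup_{\lambda\notin\overline U}|g_N(\lambda)|=\sup_{\lambda\in\partial U}|g_N(\lambda)|\le C/N$; since $\mathbb{C}\setminus K_+\subset\mathbb{C}\setminus\overline U$, this yields the first limit, uniformly on $\mathbb{C}\setminus K_+$ and up to $\lambda=\infty$. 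The second limit follows by the identical argument on the conjugate arc, invoking the symmetry $r_1(\overline\lambda)=r_1(\lambda)$ to match the orientation and endpoints of the integral; and for a general sampling density $\varrho$ the weights in the Riemann sum simply pick up the factor $\varrho$, which, as noted in the Remark, is absorbed into the definition of $r_1$.

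The only step that is not completely routine is the passage to uniformity on the unbounded set $\mathbb{C}\setminus K_+$, in particular in a neighbourhood of $\lambda=\infty$: the naive Riemann-sum estimate is naturally stated on compact sets, and it is the maximum-modulus argument on the sphere that reduces the global claim to the compact curve $\partial U$, where that estimate is available.
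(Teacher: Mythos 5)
Your proposal is correct, and its core reduction is the same as the paper's: substitute $c_j=i(\eta_2-\eta_1)r_1(\lambda_j)/\pi$ and recognize the sum as a Riemann sum for the stated contour integral. The paper's proof stops essentially there, asserting that ``the convergence follows from the convergence of the Riemann sum to the Riemann--Stieltjes integral'' on compact subsets, and does not explicitly address uniformity on the unbounded set $\mathbb{C}\setminus K_+$ or near $\lambda=\infty$. You go further in two ways: first, you quantify the discretization error as $O(1/N)$ on a fixed curve $\partial U$ with $[i\eta_1,i\eta_2]\subset U$ and $\overline U\subset K_+$, using the $C^1$ regularity of $r_1$ and the uniform lower bound on $\operatorname{dist}(\lambda,i[\eta_1,\eta_2])$; second, you propagate that bound to all of $\mathbb{C}\setminus\overline U$ (including a neighbourhood of $\infty$, where the difference $g_N$ extends analytically with $g_N(\infty)=0$) by the maximum-modulus principle on the Riemann sphere. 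This maximum-modulus step is not in the paper and is the cleanest way to obtain the claimed uniformity on the full unbounded complement with an explicit $O(1/N)$ rate; the alternative, implicit in the paper, would be to combine uniform Riemann-sum convergence on a compact annular region with the fact that both the sum and the integral are $O(1/|\lambda|)$ with constants independent of $N$ for $|\lambda|$ large. Your treatment of the conjugate arc via the symmetry $r_1(\overline\lambda)=r_1(\lambda)$ and of general densities $\varrho$ matches the paper's remark. No gaps.
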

\begin{proof}
Using (\ref{cjasR1}), the expressions in the jumps can be rewritten as 
\begin{gather}
\frac{1}{N}\sum_{j=1}^{N}\frac{c_j }{\lambda - \lambda_{j}} = \frac{1}{N}\sum_{j=1}^{N}\frac{1}{\lambda - \lambda_{j}}  \frac{(\eta_2-\eta_1)i  r_1(\lambda_j)}{\pi} 
= \frac{1}{2\pi i}\sum_{j=1}^{N}\frac{ 2i r_1(\lambda_j)  }{\lambda - \lambda_{j}} \Delta \lambda  \ .
\end{gather}
The convergence follows from the convergence of the Riemann sum to the Riemann--Stieltjes integral for $x \in K$ any compact subset of $\R$.
Positivity of $r_1(\lb)$ follows from the fact that $c_j\in i\R_+$.
\end{proof}

Thanks to the proposition above and a small norm argument, we arrive at a limiting Riemann--Hilbert  problem (which we still call $Z$ with abuse of notation) 
\begin{align}
&Z_{+} (\lambda)= Z_{-}(\lambda) 
\begin{cases}
\label{eq:2.17Z}
\begin{bmatrix}{1}& {0}\\ \displaystyle {e^{2i\lambda x}\int_{i\eta_1}^{ i\eta_2} \frac{ 2i  r_1(\zeta)}{\zeta - \lambda }\frac{\d \zeta}{2\pi i}} & {1} \end{bmatrix}  \qquad \lambda \in  \gamma_+ \\
\begin{bmatrix} {1}&\displaystyle {e^{-2i\lambda x} \int_{-i\eta_2}^{-i\eta_1}  \frac{ 2i  r_1(\zeta)}{    \zeta- \lambda }\frac{\d \zeta}{2\pi i} } \\{0}& {1} \end{bmatrix} \qquad \lambda \in \gamma_-
\end{cases} \\
\label{eq:2.18Z}
&Z(\lambda) = \begin{bmatrix}1&1 \end{bmatrix} + \mathcal{O}\le(\frac{1}{\lambda}\ri) \qquad \lambda \to \infty \ .
\end{align}
At this point it is important to point out to the reader that the contour $(i \eta_{1}, i \eta_{2})$   and  $(-i \eta_{2}, -i \eta_{1})$ are both oriented upwards.

Next, we define
\begin{eqnarray}
X(\lambda) = Z(\lambda) \begin{bmatrix} {1}& {0}\\ -\displaystyle { e^{ 2 i \lambda x} \int_{i\eta_1}^{ i\eta_2} \frac{2i  r_1(\zeta)}{\zeta-\lambda} \frac{\d \zeta}{2\pi i} }& {1}\end{bmatrix}
\end{eqnarray}
within the loop  $\gamma_+$, and
\begin{eqnarray}
X(\lambda)= Z(\lambda) \begin{bmatrix} {1}&\displaystyle { e^{- 2 i \lambda x} \int_{-i\eta_2}^{-i\eta_1} \frac{2i  r_1(\zeta)}{ \zeta-\lambda } \frac{\d \zeta}{2\pi i}  }\\ {0}& {1} \end{bmatrix}
\end{eqnarray}
within the loop $\gamma_-$.  Outside these two curves, we define $X(\lambda) = Z(\lambda)$.

The jumps across the curves are no longer present, but there are jumps across $(i \eta_{1}, i \eta_{2})$ and $(-i \eta_{2}, -i \eta_{1})$ because the integrals have jumps across those intervals. Using the Sokhotski-Plemelj formula,
we arrive at a Riemann--Hilbert  problem for $X$: 
\begin{align}
&X_{+}(\lambda) = X_{-}(\lambda) \begin{cases} \begin{bmatrix} {1}& {0} \\ \displaystyle {-2i   r_1(\lambda) e^{ 2 i \lambda x} } & {1} \end{bmatrix} & \lambda \in (i\eta_1,i\eta_2) \\
\begin{bmatrix} {1}& \displaystyle {2i  r_1(\lambda) e^{ -2 i \lambda x} } \\{0} & {1} \end{bmatrix} &  \lambda \in (-i\eta_2,-i\eta_1) 
\end{cases} \label{2.16} \\
& X(\lambda) = \begin{bmatrix}1&1 \end{bmatrix} + \mathcal{O}\le(\frac{1}{\lambda}\ri) \qquad \lambda \rightarrow \infty \ , \label{2.20}\\
&X(-\lb)=X(\lb)\begin{bmatrix}0&1\\1&0\end{bmatrix}\,.\label{2.21}
\end{align}

This Riemann-Hilbert  problem is equivalent to the one described in \cite{ZakZakDya} with $r_2(\lambda) = 0$, up to a transposition ($X$ is a row vector here, while the solution of the Riemann--Hilbert  problem in \cite{ZakZakDya} is a column vector) and using the symmetry that $ r_1(\overline{\lambda})  =  r_1(\lambda) $ for $\lambda \in (-i \eta_{1}, - i \eta_{2})$.  We note that there is a sign discrepancy between this Riemann-Hilbert problem and the one appearing in \cite{ZakZakDya}, which is resolved by a careful interpretation of the sign conventions therein.

Since this Riemann-Hilbert problem has been derived through a limiting process, it is not at all clear that it actually possesses a solution.  Although this will eventually follow for large $x$ from the asymptotic analysis presented herein, we present a self-contained proof of existence and uniqueness in Appendix~\ref{appendix} of this paper.  In fact in the appendix we establish the existence of a matrix-valued solution ${\bf Y}$.  Equipped with that, it is straightforward to prove the following lemma.
\begin{lemma}
Let $B$ be an arbitrary positive number.  For all $x$ such that $|x|<B$, the quantity $Z$ defined in (\ref{eq:2.10Z})-(\ref{eq:2.11Z}) satisfying the jump relation (\ref{eq:2.12Z}) converges as $N \to \infty$ to the solution of the Riemann-Hilbert problem (\ref{eq:2.17Z})-(\ref{eq:2.18Z}),  and the $N$-soliton potential $u(x)$ converges to the potential determined by the solution to the soliton gas Riemann-Hilbert problem (\ref{2.16})-(\ref{2.21}).
\end{lemma}
Note:  A similar  limiting procedure introduced in this section has already appeared in the literature when studying the focusing nonlinear Schr\"odinger equation \cite{Bilman2,Bilman3}.
In those papers   the limiting procedure $N\to\infty$ refers  to the order of a soliton or a breather of the nonlinear Schr\"odinger equation.
From a different point of view, our limiting procedure can be understood as  replacing the reflection coefficient with its semiclassical limit, see for example
\cite{Tovbis},\cite{LaxLevI}.

In what follows we have already taken the $N \to \infty$ limit, and we are considering the behavior for $x$ (and later $x$ and $t$) large for this soliton gas Riemann-Hilbert problem. 

\section{Behaviour of the potential $u(x,0)$ as $x\to -\infty$}
\label{sec:3}

We consider a soliton gas Riemann--Hilbert  problem as in (\ref{2.16})--(\ref{2.20}) with $0 < \eta_{1} < \eta_{2}$, and reflection coefficient $ r_1(\lambda)$   defined on $( i \eta_{1}, i \eta_{2})$  such that it has an analytic extension to a neighbourhood of this interval.  Furthermore, we assume that $r_1(-\lb)=r_1(\lb)$ on the imaginary axis.
We set $\Sigma_1 = (\eta_1,\eta_2)$ and $\Sigma_2 = (-\eta_2,-\eta_1)$. The vector valued function $X$ that will determine the KdV potential $u(x)$ is the solution to the following Riemann--Hilbert  problem:
\begin{align}
& X(\lb) \text{ is analytic for } \lambda \in  \C \backslash \le\{ i\Sigma_1 \cup i\Sigma_2 \ri\} \nonumber \\
&X_+(i\lambda) = X_-(i\lambda) \begin{cases}
\displaystyle \begin{bmatrix} 1 & 0 \\ -2i  r_1(i\lb)e^{-2\lambda x} & 1 \end{bmatrix} &\quad \lambda \in \Sigma_1\\
\displaystyle \begin{bmatrix} 1 &2i r_1(i\lb) e^{2\lambda x}  \\ 0 & 1 \end{bmatrix} & \quad \lambda  \in \Sigma_2
 \end{cases} 
 \label{RHP00} \\
 & X (\lambda) = \begin{bmatrix}1 &1\end{bmatrix} + \mathcal{O}\le(\frac{1}{\lambda}\ri) \qquad  \lambda \rightarrow \infty \ .\nonumber
\end{align}

As explained in \cite{ZakZakDya}, we can recover the potential $u(x)$ of the Schr\"odinger operator via the formula 
\begin{gather}
u(x) = 2\frac{\d}{\d x} \le[ \lim_{\lambda \rightarrow \infty} \frac{\lambda}{i} (X_1(\lambda;x) - 1) \ri] \ ,
\end{gather}
where $X_1(\lambda;x)$ is the first component of the solution vector $X$.


We first perform a rotation of the problem in order to place the jumps on the real line. By setting
\begin{gather}
Y (\lambda)= X(i\lambda) \, ,\quad r(\lb)=2r_1(i\lb)\, ,
\end{gather}
the Riemann--Hilbert  problem for $Y$ reads as follows:
\begin{align}\label{jumpY}
&Y_+ (\lambda) = Y_-(\lambda) \begin{cases}
\displaystyle \begin{bmatrix} 1 &0  \\ -i r(\lb) e^{-2\lambda x}  & 1 \end{bmatrix} &\quad \lambda \in  \Sigma_1\\
\displaystyle \begin{bmatrix} 1 & i r(\lb)e^{2\lambda x} \\ 0  & 1 \end{bmatrix} & \quad \lambda \in  \Sigma_2
 \end{cases}\\
&Y(\lambda) = \begin{bmatrix}1&1 \end{bmatrix} + \mathcal{O}\le(\frac{1}{\lambda}\ri) \qquad \lambda \rightarrow \infty\\
\label{symY}
&Y(-\lambda) = Y(\lb)\begin{bmatrix}0&1\\1&0 \end{bmatrix} \,.
\end{align}
The contours $\Sigma_1$ and $\Sigma_2$ are shown in \figurename~\ref{RHPY}. 
We can recover $u(x)$ from 
\begin{gather}
u(x) = 2\frac{\d}{\d x} \le[ \lim_{\lambda \rightarrow \infty} \lambda (Y_1(\lambda;x) - 1) \ri] \, .
\end{gather}

\begin{figure}
\centering
\scalebox{1}{
\begin{tikzpicture}[>=stealth]
\path (0,0) coordinate (O);

\draw[dashed, ->] (-5,0) -- (5,0) coordinate (x axis);
\draw[dashed, ->] (0,-1.5) -- (0,1.5) coordinate (y axis);

\draw[->- = .5,thick] (1,0) -- (3,0);
\node [above] at (2,.5) {\large $\Sigma_1$};

\draw[fill] (1,0) circle [radius=0.05];
\node[above] at (1,0) {$\eta_1$};
\draw[fill] (3,0) circle [radius=0.05];
\node[above] at (3,0) {$\eta_2$};

\draw[->- = .5,thick] (-3,0) -- (-1,0);
\node [above] at (-2,.5) {\large $\Sigma_2$};

\draw[fill] (-1,0) circle [radius=0.05];
\node[above] at (-1,0) {$-\eta_1$};
\draw[fill] (-3,0) circle [radius=0.05];
\node[above] at (-3,0) {$-\eta_2$};

\end{tikzpicture}
}
\caption{Riemann--Hilbert  problem for $Y$.}
\label{RHPY}
\end{figure}
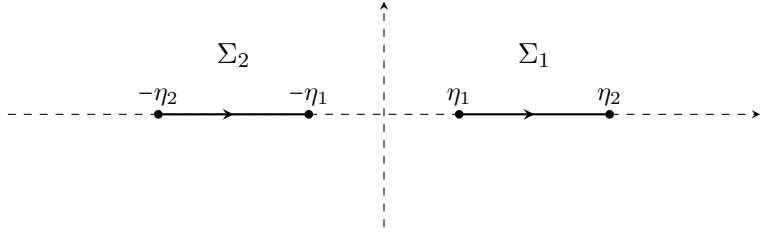

\subsection{Large $x$ asymptotic}
Introduce the following new  vector function
\[
T(\lb)=Y(\lb)e^{x g(\lambda)\sigma_3}f(\lambda)^{\sigma_3}
\]
where $g(\lb)$ and $f(\lambda)$ are  scalar  functions to be determined below.
We require that
\begin{itemize}
\item $g(\lb)$ is analytic in $\C\backslash[-\eta_2, \eta_2] $ and  
\begin{align}
& g_+(\lambda) + g_-(\lambda) = 2\lambda & \lambda \in \Sigma_1\cup\Sigma_2 \label{gconstraint1} \\
& g_+ (\lambda) - g_-(\lambda) = \Omega & \lambda \in [-\eta_1,\eta_1]  \label{gconstraint2}\\
& g(\lambda) =\mathcal{O}\le(\frac{1}{\lambda}\ri) & \lambda \rightarrow  \infty \, , \label{gconstraint3}
\end{align}
where   $\Omega$ is a constant independent of $x$, still to be determined, and 
\item $f(\lb)$ is analytic in $\C\backslash[-\eta_2, \eta_2] $ and  
\begin{equation}
\label{f1}
f(\lb)=1+\mathcal{O}\le(\frac{1}{\lb}\ri),\quad \mbox{as $ \lambda \rightarrow  \infty \,$}\, .
\end{equation}
\end{itemize}

In order to solve the  scalar  Riemann--Hilbert  problem  \eqref{gconstraint1} -- \eqref{gconstraint3} for $g$ we observe that
\begin{align}
&g'_+(\lambda) +g'_-(\lambda) =2 &\lambda \in \Sigma_1\cup \Sigma_2 \\
&g'_+(\lambda) -g'_-(\lambda) =0 &\lambda \in [-\eta_1,\eta_1] \\
&g'(\lambda) =\mathcal{O}\le(\frac{1}{\lambda^2}\ri) &\lambda \rightarrow \infty \ .
\end{align}
From the above,   we can write $g'(\lb)$ as
\begin{gather}
g'(\lambda) = 1 - \frac{\lambda^2+\kappa}{R(\lambda)}\, ,
\end{gather}
where
\begin{gather}
\label{eq:Rdef}
R(\lambda) = \sqrt{(\lambda^2-\eta_1^2)(\lb^2-\eta_2^2)}\, ,
\end{gather}
is real and positive on $(\eta_2,+\infty)$ with branch cuts on the contours $\Sigma_1$ and $\Sigma_2$ 
and $\kappa$  is a constant  to be determined.
By integration we obtain 
\begin{equation}
\label{ggg}
g(\lb)=\lb-\int_{\eta_2}^{\lb}\frac{\zeta^2+\kappa}{R(\zeta)}\d\zeta \, .
\end{equation}

The condition \eqref{gconstraint1}  implies that $$\int_{-\eta_1}^{\eta_1}\frac{\zeta^2+\kappa}{R(\zeta)}\d\zeta=0$$ and the condition  \eqref{gconstraint2} implies that
\[
\Omega=2\int_{\eta_1}^{\eta_2}\frac{\zeta^2+\kappa}{R_+(\zeta)}\d\zeta \, .
\]
This gives
\begin{gather}
\Omega = \frac{2\pi i}{\int_{-\eta_1}^{\eta_1}\frac{\d \zeta}{R(\zeta)} } =-\dfrac{i\pi \eta_2}{K(m)}\in i \R_- \ ,\quad m=\frac{\eta_1}{\eta_2}\, ,\label{Omega}
\end{gather}
where $K(m)=\int_0^{\frac{\pi}{2}}\frac{\d\vartheta }{\sqrt{1-m^2\sin\vartheta }}$ is the complete elliptic integral of the first kind with modulus $m=\eta_1/\eta_2$ and 
\begin{gather}
\kappa = - \int_{-\eta_1}^{\eta_1} \frac{ \zeta^2 \Omega }{R(\zeta)} \frac{\d \zeta}{2\pi i}=\eta_2^2\left(\dfrac{E(m)}{K(m)}-1\right) \quad  \in \R_- \ , \label{kappag'}
\end{gather}
where $E(m)=\int_0^{\frac{\pi}{2}}\sqrt{1-m^2\sin\vartheta }\, \d\vartheta $ is the complete elliptic integral of the second kind.

%
%
The Riemann--Hilbert  problem for $T(\lb)$ is 
\begin{align}
\label{RH_T}
&T_+(\lb)=T_-(\lb)V_T(\lambda)\\
&V_T(\lambda)=\begin{cases}
\displaystyle \begin{bmatrix} e^{x\le(g_+(\lb) - g_-(\lb)\ri)}\dfrac{f_+(\lb)}{f_-(\lb)} & 0 \\- i r(\lambda)f_+(\lb)f_-(\lb)  & e^{-x\le(g_+(\lb) - g_-(\lb)\ri)}\dfrac{f_-(\lb)}{f_+(\lb)} \end{bmatrix} &\quad \lambda \in  \Sigma_1\\
\displaystyle \begin{bmatrix} e^{x\le(g_+(\lb) - g_-(\lb)\ri)}\dfrac{f_+(\lb)}{f_-(\lb)}&\dfrac{ i r(\lambda)}{f_+(\lb)f_-(\lb)}   \\ 0 & e^{-x\le(g_+(\lb) - g_-(\lb)\ri)}\dfrac{f_-(\lb)}{f_+(\lb)} \end{bmatrix} & \quad \lambda \in \Sigma_2 \\
\displaystyle \begin{bmatrix} e^{x\Omega }\dfrac{f_+(\lb)}{f_-(\lb)} & 0 \\0& e^{-x \Omega}\dfrac{f_-(\lb)}{f_+(\lb)} \end{bmatrix} &\quad \lambda \in  [-\eta_1,\eta_1]\\
  \end{cases} \\
  &T(\lambda) = \begin{bmatrix}1&1 \end{bmatrix} + \mathcal{O}\le(\frac{1}{\lambda}\ri) \qquad \lambda \rightarrow \infty \ .
\end{align}

In order to solve the Riemann--Hilbert  problem for $T(\lb)$ we   wish  to obtain a constant jump matrix $J_T$. For this purpose we make the  following ansatz on the function $f$
\begin{align}
& f_+(\lambda) f_-(\lambda) = \frac{1}{r(\lambda)} & \lambda \in \Sigma_1 \label{fconstraint1} \\
&f_+(\lambda) f_-(\lambda) = r(\lambda) & \lambda \in \Sigma_2\\
&  \frac{f_+(\lambda)}{ f_-(\lambda) }=e^{\Delta} & \lambda \in [-\eta_1,\eta_1] \label{fconstraint2}\\
& f(\lambda) = 1+\mathcal{O}\le(\frac{1}{\lambda}\ri) & \lambda \rightarrow  \infty\, . \label{fconstraint3}
\end{align}
It is easy to check that the function $f(\lb)$ is given by
\begin{equation}
\label{f}
f(\lb)=\exp\le\{\frac{R(\lb)}{2\pi i}\left[\int_{ \Sigma_1}  \frac{\log \frac{1}{r(\zeta)} }{R_+(\zeta)(\zeta-\lambda)} \d \zeta+
\int_{\Sigma_2}  \frac{\log r(\zeta) }{R_+(\zeta)(\zeta-\lambda)} \d \zeta+\int^{\eta_1}_{ -\eta_1}  \frac{\Delta}{R(\zeta)(\zeta-\lambda)} \d \zeta \ri]
\right\} \, .
\end{equation}
\edit{The inclusion of the constant jump \eqref{fconstraint2} allows us to satisfy \eqref{fconstraint3} by taking} 
\begin{align}
\Delta&=\left[\int_{\Sigma_1}  \frac{\log r(\zeta)}{R_+(\zeta)} \d \zeta-
\int_{\Sigma_2}  \frac{\log r(\zeta)}{R_+(\zeta)} \d \zeta\right]\left[\int^{\eta_1}_{ -\eta_1}  \frac{ \d \zeta}{R(\zeta)}\right]^{-1} = -\dfrac{\eta_2}{K(m)}\int^{ \eta_2}_{ \eta_1}  \frac{\log r(\zeta)}{R_+(\zeta)} \d \zeta\, ,
\label{Delta}
\end{align}
where in the last equality  in (\ref{Delta}) we use the fact that $r(-\lb)=r(\lb)$.  We remind the reader that we are assuming the function $r$ to be  real, positive, and non-vanishing on $\Sigma_{1}$ and $\Sigma_{2}$.  
The positivity of $r(\lb)$ guarantees that $\Delta$ is pure imaginary.


\subsection{Opening lenses} \label{TtoS}
We start by defining  the analytic  continuation  $\hat{r}(\lb)$  of the function $r(\lb)$  off the interval $(-\eta_2,-\eta_1)\cup (\eta_1,\eta_2)$  with the requirement that
\begin{equation}\label{rhat}
\hat{r}_{\pm}(\lb)=\pm r(\lb)\, , \quad \lb\in(-\eta_2,-\eta_1)\cup(\eta_1,\eta_2)\, .
\end{equation}
We can factor the jump matrix $J_T$  on $\Sigma_{1}$ as follows
\begin{align*}
&\begin{bmatrix} e^{x\le(g_+(\lb) - g_-(\lb)\ri)}\dfrac{f_+(\lb)}{f_-(\lb)} & 0 \\ -i  & e^{-x\le(g_+(\lb) - g_-(\lb)\ri)}\dfrac{f_-(\lb)}{f_+(\lb)} \end{bmatrix} =\\
&\quad\quad
\begin{bmatrix}1&-\dfrac{ie^{x\le(g_+(\lb) - g_-(\lb)\ri)}}{\hat{r}_-(\lambda)f_-^2 (\lb)}\\ 0 &1 \end{bmatrix} 
\begin{bmatrix}0&-i \\ -i &0\end{bmatrix} 
\begin{bmatrix}1&\dfrac{ie^{-x\le(g_+(\lb) - g_-(\lb)\ri)}}{  \hat{r}_+(\lambda)f_+^2(\lb)}&  \\0&1 \end{bmatrix} 
\end{align*}
and on $\Sigma_2$ as
\begin{align*}
& \begin{bmatrix} e^{x\le(g_+(\lb) - g_-(\lb)\ri)}\dfrac{f_+(\lb)}{f_-(\lb)}&i \\ 0 & e^{-x\le(g_+(\lb) - g_-(\lb)\ri)}\dfrac{f_-(\lb)}{f_+(\lb)} \end{bmatrix}=\\
&\quad\quad
=\begin{bmatrix}1&0\\i\dfrac{f_-^2 (\lb)}{\hat{r}_-(\lambda)}e^{-x\le(g_+(\lb) - g_-(\lb)\ri)} &1 \end{bmatrix} 
\begin{bmatrix}0&i\\i&0\end{bmatrix} 
\begin{bmatrix}1&0\\ -i\dfrac{f_+^2(\lb)}{ \hat{r}_+(\lambda)}e^{x\le(g_+(\lb) - g_-(\lb)\ri)}&  &1 \end{bmatrix}\, .
\end{align*}

%
%
%
%

We can now proceed with ``opening lenses". 
We define a new vector function  $S$  as follows
\begin{gather}
S(\lambda) = \begin{cases}
\displaystyle T(\lambda)\begin{bmatrix}1& \dfrac{-i}{  \hat{r}(\lambda)f^2(\lb)}e^{-2x(g(\lb)-\lb)}&  \\0&1 \end{bmatrix} 
 &\quad \text{in the upper lens, above } \Sigma_{1} \\
\displaystyle T(\lambda)\begin{bmatrix}1&\dfrac{-i}{\hat{r}(\lambda)f^2 (\lb)}e^{-2x\le(g(\lb) - \lb\ri)}\\ 0 &1 \end{bmatrix} &\quad \text{in the lower lens, below } \Sigma_{1} \\
\displaystyle T(\lambda)\begin{bmatrix}1&0\\ i\dfrac{f^2(\lb)}{ \hat{r}(\lambda)}e^{2x\le(g(\lb)-\lb \ri)}&  &1 \end{bmatrix}
&\quad \text{in the upper lens, above } \Sigma_{2} \\
\displaystyle T(\lambda)  \begin{bmatrix}1&0\\ i\dfrac{f^2 (\lb)}{\hat{r}(\lambda)}e^{2x\le( g(\lb)-\lb\ri)} &1 \end{bmatrix}  &\quad \text{in the lower lens, below } \Sigma_{2}\\
\displaystyle T(\lambda) &\quad \text{outside the lenses}\,.
\end{cases} \label{RHPS}
\end{gather}
The vector $S(\lb)$ satisfies 
\begin{equation}
\label{VS}
\begin{split}
S_+(\lambda)&=S_-(\lb)V_S(\lb),\\
S(\lb)&=\begin{bmatrix}1&1 \end{bmatrix} + \mathcal{O}\le(\frac{1}{\lambda}\ri) \qquad \lambda \rightarrow \infty \ .
\end{split}
\end{equation}
where the matrix $V_S$ for the jumps of $S(\lb)$  is depicted in \figurename \ \ref{openinglenses}.
In order to proceed we need the following lemma
\begin{lemma}\label{lemma3.2}
The following inequalities are satisfied
\begin{align}
\label{in1}
&\operatorname{Re} \le(g(\lambda)-\lb\ri) <0\, ,\quad \lb\in {\mathcal C}_1\backslash\{\eta_1,\eta_2\}\\
\label{in2}
&\operatorname{Re} \le(g(\lambda)-\lb \ri) >0\, ,\quad  \lb\in {\mathcal C}_2\backslash\{-\eta_1,-\eta_2\}\, ,
\end{align}
where $\mathcal{C}_1$ and $\mathcal{C}_2$ are the contours defining the lenses as shown in \figurename \ \ref{openinglenses}.
\end{lemma}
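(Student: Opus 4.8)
I would deduce the lemma from structural properties of the single scalar function $h(\lambda):=g(\lambda)-\lambda$, which by \eqref{ggg} equals $-\int_{\eta_2}^{\lambda}\frac{\zeta^2+\kappa}{R(\zeta)}\,\d\zeta$ and is analytic in $\C\setminus[-\eta_2,\eta_2]$. I would begin with two symmetries. Since $g'(\lambda)=1-\frac{\lambda^2+\kappa}{R(\lambda)}$ is even in $\lambda$ (both $\lambda^2+\kappa$ and $R$ are even, $R$ being the branch with $R\sim\lambda^2$ at infinity) and $g(\lambda)=\mathcal{O}(1/\lambda)$ by \eqref{gconstraint3}, the function $g(\lambda)+g(-\lambda)$ has vanishing derivative on the connected domain $\C\setminus[-\eta_2,\eta_2]$ and vanishes at infinity, so $g$, hence $h$, is odd; and since $g$ is real on $(\eta_2,+\infty)$, Schwarz reflection gives $h(\overline{\lambda})=\overline{h(\lambda)}$. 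These two facts force $\operatorname{Re}h$ to vanish on the bands $\Sigma_1,\Sigma_2$ — on $\Sigma_1$, \eqref{gconstraint1} gives $h_++h_-=0$ while $h_-=\overline{h_+}$, so $\operatorname{Re}h_\pm=0$, and likewise on $\Sigma_2$ — and on the imaginary axis — for $s\in\R\setminus\{0\}$ one has $\overline{h(is)}=h(-is)=-h(is)$, and the value at $0$ follows by continuity. Since the integral in \eqref{ggg} is $\mathcal{O}(|\lambda-\eta_j|^{1/2})$ near each endpoint, $h$ extends continuously to $\pm\eta_1,\pm\eta_2$, where $\operatorname{Re}h$ therefore also vanishes.

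The next point is that $\operatorname{Re}h$ is harmonic on $\C\setminus(\Sigma_1\cup\Sigma_2)$. Off $[-\eta_2,\eta_2]$ this is analyticity; the only thing to check is harmonic continuation across $(-\eta_1,\eta_1)$, and here I would use that by \eqref{gconstraint2} and \eqref{Omega} the jump of $g$ across $(-\eta_1,\eta_1)$ is the \emph{purely imaginary} constant $\Omega$. Gluing $g$ (above) to $g+\Omega$ (below) then produces a function continuous, hence analytic, across $(-\eta_1,\eta_1)$; its real part coincides with $\operatorname{Re}g$ on both sides and is harmonic there, so $\operatorname{Re}h=\operatorname{Re}g-\operatorname{Re}\lambda$ is harmonic across $(-\eta_1,\eta_1)$. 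I would also record the growth bound: $g$ is $\mathcal{O}(1/\lambda)$ at infinity and bounded near the four endpoints, so $\operatorname{Re}h(\lambda)=-\operatorname{Re}\lambda+\operatorname{Re}g(\lambda)$ is bounded above on $\{\operatorname{Re}\lambda\geq0\}$ and satisfies $\limsup_{|\lambda|\to\infty}\operatorname{Re}h(\lambda)\leq0$ there.

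I would then finish with the maximum principle on the slit half-plane $D_1:=\{\operatorname{Re}\lambda>0\}\setminus[\eta_1,\eta_2]$, on which $\operatorname{Re}h$ is harmonic and bounded above. Its boundary consists of the imaginary axis, the two sides of the slit $[\eta_1,\eta_2]$, and the point at infinity; as established above the boundary values of $\operatorname{Re}h$ are $0$ on the finite part, and $\limsup\operatorname{Re}h\leq0$ at infinity. A Phragm\'en--Lindel\"of argument (truncate $D_1$ to $|\lambda|<R$, use the growth bound on the circular arc, and let $R\to\infty$) gives $\operatorname{Re}h\leq0$ in $D_1$, and the strong maximum principle improves this to $\operatorname{Re}h<0$ throughout the interior since $\operatorname{Re}h\not\equiv0$. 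Because the lens $\mathcal{C}_1$ minus its endpoints lies in $\{\operatorname{Re}\lambda>0\}\setminus\R\subset D_1$, this is precisely \eqref{in1}; running the identical argument on $D_2:=\{\operatorname{Re}\lambda<0\}\setminus[-\eta_2,-\eta_1]$ (where $\operatorname{Re}h$ is now bounded below with vanishing boundary values), or simply invoking $\operatorname{Re}h(-\lambda)=-\operatorname{Re}h(\lambda)$, yields \eqref{in2}.

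The step I expect to be the main obstacle is controlling $\operatorname{Re}h$ at infinity in the maximum-principle argument: $\operatorname{Re}h$ itself is \emph{not} bounded (it behaves like $-\operatorname{Re}\lambda$), so one must exploit the exact identity $\operatorname{Re}h+\operatorname{Re}\lambda=\operatorname{Re}g=\mathcal{O}(1/|\lambda|)$ to get a one-sided bound on $\{\operatorname{Re}\lambda\geq0\}$ before Phragm\'en--Lindel\"of can be applied; the harmonic continuation across $(-\eta_1,\eta_1)$ through the imaginary constant $\Omega$ is the other observation that does genuine work. Everything else — the oddness and Schwarz symmetry of $h$, and the vanishing of $\operatorname{Re}h$ on $\Sigma_1\cup\Sigma_2$ and on $i\R$ — is routine. (An alternative route, valid for a sufficiently thin lens, is to argue locally from the transversal derivative of $\operatorname{Re}h$ along $\Sigma_1$, but that requires the additional input $\kappa\in(-\eta_1^2,0)$, i.e. $E(m)>(1-m^2)K(m)$, which the global argument above avoids and which also lets it cover the lens as drawn.)
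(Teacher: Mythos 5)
Your proof is correct, but it takes a genuinely different route from the paper's. The paper argues locally: on $\Sigma_1$ the boundary value $g_+(\lambda)-\lambda$ is purely imaginary, and the sign of $v_x=\operatorname{Im}(g_+'(\lambda)-1)=(\lambda^2+\kappa)/|R_+(\lambda)|$ — established to be positive via the integral representation $\lambda^2+\kappa=\frac{\Omega}{2\pi i}\int_{-\eta_1}^{\eta_1}\frac{\lambda^2-\zeta^2}{R(\zeta)}\,\d\zeta$, which is exactly the input $\kappa>-\eta_1^2$ you flag in your closing remark — combined with the Cauchy--Riemann equation $u_y=-v_x$ gives $\operatorname{Re}(g(\lambda)-\lambda)<0$ just above and below $\Sigma_1$, hence on a sufficiently thin lens $\mathcal{C}_1$ (with the mirror argument on $\Sigma_2$). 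Your global argument — oddness and Schwarz symmetry of $h=g-\lambda$, vanishing of $\operatorname{Re}h$ on the bands, the imaginary axis and the endpoints, harmonic continuation across $(-\eta_1,\eta_1)$ via the purely imaginary constant jump $\Omega$, and then Phragm\'en--Lindel\"of plus the strong maximum principle on $\{\operatorname{Re}\lambda>0\}\setminus[\eta_1,\eta_2]$ — buys strict negativity on \emph{any} lens contained in the open right half-plane, not just a thin one, and sidesteps the sign analysis of $\lambda^2+\kappa$ entirely; the price is the more delicate bookkeeping at infinity (the one-sided bound $\operatorname{Re}h\leq O(1/|\lambda|)$ on $\{\operatorname{Re}\lambda\geq 0\}$), which you handle correctly. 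Both proofs are sound; the paper's is shorter and suffices because the lenses may always be taken thin, while yours is more robust and makes the mechanism (a harmonic function vanishing on the whole boundary of a half-plane slit along the band) more transparent.
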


\begin{proof}
Given $\lambda = x+iy$, we write $g_+(\lambda)-\lb= u(x,y)+ iv(x,y)$. 
From the formula (\ref{ggg}) for $g$, it follows that $g_+(\lambda)-\lambda $ is purely imaginary on $\Sigma_1 \cup \Sigma_2$; furthermore, for $\lambda \in \Sigma_1$
\begin{gather}
v_x = \operatorname{Im} \le(g'_+(\lambda)-1\ri) = \frac{\lambda^2+\kappa}{\le|R_+(\lambda)\ri|} = \frac{ \Omega}{\le|R_+(\lambda)\ri|} \int_{-\eta_1}^{\eta_1} \frac{ \lambda^2- \zeta^2 }{R(\zeta)} \frac{\d \zeta}{2\pi i}  > 0 \ .
\end{gather}
Using the Cauchy--Riemann equation it follows that $ u_y = - v_x <0$ for $\lambda \in \Sigma_1$ and thus  $\Re \left( g(\lambda)-\lambda \right) <0 $  for $\lb$ above $\Sigma_1$ and $\lambda\in  {\mathcal C}_1$. 
Repeating the same reasoning for the function $g_-(\lambda)-\lambda $   we obtain that  $\Re \left( g(\lambda)-\lambda \right) <0 $  for $\lb$ below $\Sigma_1$ and $\lambda\in  {\mathcal C}_1$. 
In a similar way the inequality  (\ref{in2}) can be obtained.
\end{proof}

Lemma \ref{lemma3.2} guarantees that the off-diagonal entries of the jump matrices along the upper and lower lenses are exponentially small in the regime as $x \to -\infty$, therefore those jump matrices are asymptotically close to the identity  outside a small neighbourhoods of $\pm\eta_1$ and $\pm\eta_2$.
We are left with  the model problem
\begin{gather}
\label{Sinfinity1}
S^{\infty}_+(\lambda) = S^{\infty}_-(\lambda) 
\begin{cases} 
\begin{bmatrix} e^{x\Omega+\Delta} &0 \\ 0 & e^{-x\Omega-\Delta}\end{bmatrix}  & \lambda \in [-\eta_1,\eta_1]  \\
 \begin{bmatrix}0 & -i\\ -i & 0 \end{bmatrix} &\lambda \in \Sigma_{1}  \\
  \begin{bmatrix}0 & i\\ i & 0 \end{bmatrix} &\lambda \in \Sigma_{2}  \\
\end{cases}\\
 S^{\infty}(\lb)=\begin{bmatrix} 1& 1\end{bmatrix}+\mathcal{O}\le(\frac{1}{\lb}\ri),\quad \lb\to\infty\,.
 \label{Sinfinity2}
\end{gather}
The Riemann--Hilbert problem for $S^{\infty}$  has previously appeared in the study of long time asymptotics for KdV with step-like initial data \cite{EGKT}. Below we follow the lines of \cite{EGKT} to obtain the solution.

\begin{figure}
\centering
\scalebox{.75}{
\begin{tikzpicture}[>=stealth]op
\path (0,0) coordinate (O);

\draw[dashed] (-8,0) -- (-7,0);
\draw[dashed] (7,0) -- (8,0);

\node [above] at (6,2) { $ \begin{bmatrix}1& {\color{gray} \dfrac{i}{  \hat{r}(\lambda)f^2(\lb)}e^{-2x(g(\lb)-\lb)} }\\ 0 & 1 \end{bmatrix}$};
\node [below] at (6,-2) { $ \begin{bmatrix}1& {\color{gray} \dfrac{-i}{  \hat{r}(\lambda)f^2(\lb)}e^{-2x(g(\lb)-\lb)} }\\ 0 & 1 \end{bmatrix}$};
\node [below] at (4,-1) { $ {\mathcal C}_1$};
\node [above] at (4.5,0) { $ \begin{bmatrix}0 & -i \\ -i & 0 \end{bmatrix}$};
\node [above] at (5.7,0){$\Sigma_1$};
\node [below] at (0,0) { $ \begin{bmatrix} e^{x\Omega+\Delta} &0 \\ 0 & e^{-x\Omega-\Delta}\end{bmatrix}$};

\node [above] at (-6,2) { $ \begin{bmatrix}1& 0 \\ {\color{gray}-i\dfrac{f^2 (\lb)}{\hat{r}(\lambda)}e^{2x\le( g(\lb)-\lb\ri)} } & 1 \end{bmatrix}$};
\node [below] at (-6,-2) { $ \begin{bmatrix}1& 0 \\ {\color{gray} i\dfrac{f^2 (\lb)}{\hat{r}(\lambda)}e^{2x\le( g(\lb)-\lb\ri)} } & 1 \end{bmatrix}$};
\node [below] at (-4,-1) { ${\mathcal C}_2$};
\node [above] at (-4.5,0) { $ \begin{bmatrix}0 & i \\ i & 0 \end{bmatrix}$};
\node [above] at (-3.2,0) { $\Sigma_2$};

\draw[fill] (2,0) circle [radius=0.05];
\node[above left] at (2,0) {$\eta_1$};
\draw[fill] (7,0) circle [radius=0.05];
\node[above right] at (7,0) {$\eta_2$};

\draw[fill] (-2,0) circle [radius=0.05];
\node[above right] at (-2,0) {$-\eta_1$};
\draw[fill] (-7,0) circle [radius=0.05];
\node[above left] at (-7,0) {$-\eta_2$};

\draw[->-=.7, thick] (-2,0)--(2,0);

\draw[->-=.7,thick] (2,0)--(7,0);
\draw[->-=.7,thick] (-7,0)--(-2,0);

\draw[->- = .7,thick] (2,0) .. controls + (60:2.5cm) and + (120:2.5cm) .. (7,0);
\draw[->- = .7,thick] (7,0) .. controls + (-120:2.5cm) and + (-60:2.5cm) .. (2,0);

\draw[->- = .7,thick] (-7,0) .. controls  + (60:2.5cm) and +  (120:2.5cm) .. (-2,0);
\draw[->- = .7,thick] (-2,0) .. controls   + (-120:2.5cm)  and + (-60:2.5cm)  .. (-7,0);

\end{tikzpicture}
}
\caption{  Riemann--Hilbert problem for $S(\lb)$ defined in (\ref{RHPS}). Opening lenses: the entries in gray in the jump matrices   on the contours ${\mathcal C}_1$ and $ {\mathcal C}_2$ 
are exponentially small in the regime as $x \to -\infty$.}
\label{openinglenses}
\end{figure}
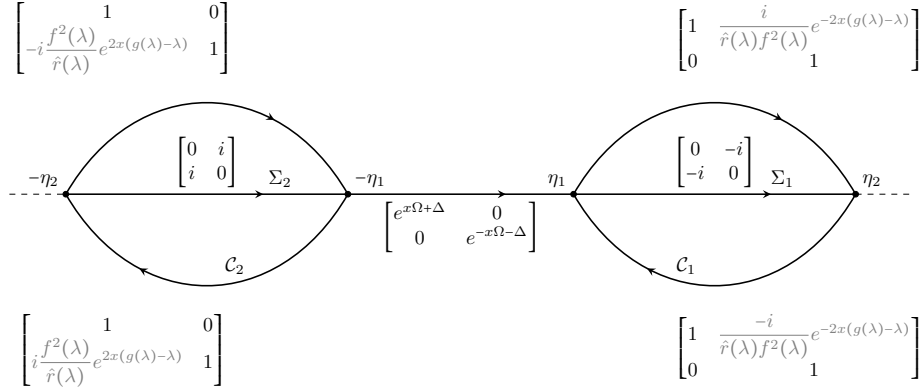

\subsection{The outer parametrix $S^{\infty}$  }
\label{outerparametrix}
To solve the Riemann--Hilbert problem \eqref{Sinfinity1}  and \eqref{Sinfinity2}
we introduce a two-sheeted Riemann surface $\mathfrak{X}$ of genus $1$ associated to  the multivalued function $R(\lambda)$,  namely
\[
\mathfrak{X}=\le\{ (\lambda,\eta)\in\mathbb{C}^2\;|\; \eta^2=R^{2}(\lambda)=(\lambda^2-\eta_1^2)(\lambda^2-\eta_2^2)\ri\} \, .
\]
The first sheet of the surface is identified with the sheet where $R(\lb)$ is real and positive for $\lb\in (\eta_2,+\infty)$.
We introduce  a canonical homology basis  with the $B$ cycle   encircling $\Sigma_1$    clockwise  on the first sheet
 and the $A$ cycle going from $\Sigma_2$   to $\Sigma_1$ on the first sheet and coming back to $\Sigma_2$ on the second sheet.
 The points at infinity on the surface  are denoted by  $\infty^{\pm}$ where  $\infty^+$   is on the first sheet and $\infty^-$ on the second sheet of $\mathfrak{X}$. See \figurename \ \ref{frakX}.
 We introduce the holomorphic differential
 \begin{gather} \label{omegadef} \omega = \frac{\Omega}{R(\lb)} \frac{\d \lb}{4\pi i} \end{gather}
 so that
 \[
\oint_{A}\omega=1 \, .
\]
We also have 
\[
\tau=\oint_{B}\omega=\frac{i}{2}\dfrac{K(\sqrt{1-m^2})}{K(m)}\, ,\quad m=\dfrac{\eta_1}{\eta_2} \, .
\]
Next, we introduce the   Jacobi elliptic function
\begin{gather}
\vartheta_3 (z;\tau)= \sum_{n\in \mathbb{Z}} e^{2\pi i \, nz +  \pi  n^2 i \tau} \ , \qquad z \in \mathbb{C} \ ,
\end{gather}
which is an even function of $z$ and satisfies the periodicity conditions
\begin{equation}
\label{periods}
\vartheta_3 (z+h+k\tau;\tau)=e^{-\pi  i k^2 \tau-2\pi i kz}\vartheta_3 (z;\tau)\, ,\quad h,k\in\Z \, .
\end{equation}
We also  recall that the Jacobi elliptic function   with half-period ratio $\tau$ vanishes  on the half period $\frac{\tau}{2}+\frac{1}{2}$.
Finally, we define the integral
\begin{equation}\label{Abeldef}
w(\lb)=\int_{\eta_2}^{\lambda} \omega
\end{equation}
and we observe that 
\begin{equation}
\label{w_eval}
w(+\infty)=-\frac{1}{4},\quad w_+(\eta_1)=-\frac{\tau}{2},\quad w_+(-\eta_1)=-\frac{\tau}{2}-\frac{1}{2}\, \ ,
\end{equation}
and
\begin{eqnarray} \label{eq:wsymm}
w(-\lambda) = -w(\lambda) - 1/2 \ \mbox{ for } \lambda \in \mathbb{C} \setminus \mathbb{R} \ .
\end{eqnarray}

We introduce the following  functions
\[
\psi_{1}(\lb)=\frac{\vartheta_3 \le(2w(\lambda) +\frac{x\Omega+\Delta}{2\pi i}-\frac{1}{2};2\tau\ri)}{\vartheta_3 \le(2w(\lambda)  -\frac{1}{2};2\tau\ri)}\dfrac{\vartheta_3(0;2\tau)}{\vartheta_3( \frac{x\Omega+\Delta}{2\pi i};2\tau)}\, ,
\]
\[
\psi_{2}(\lb)=\frac{\vartheta_3 \le(-2w(\lambda) +\frac{x\Omega+\Delta}{2\pi i}-\frac{1}{2} ;2\tau\ri)}{\vartheta_3 \le(-2w(\lambda)  -\frac{1}{2};2\tau\ri)}\dfrac{\vartheta_3(0;2\tau)}{\vartheta_3( \frac{x\Omega+\Delta}{2\pi i};2\tau)}\,, 
\]
and we observe that 
\[
\begin{cases}
\vartheta_3 \le(\pm2w_+(\eta_1)  -\frac{1}{2};2\tau\ri)=\vartheta_3 \le(\mp\tau  -\frac{1}{2};2\tau\ri)=0,\\
 \vartheta_3 \le(\pm2w_+(-\eta_1)  -\frac{1}{2};2\tau\ri)=\vartheta_3 \le(\mp\tau\mp 1  -\frac{1}{2};2\tau\ri)=0.
\end{cases}
\]
It follows that the functions $\psi_1$ and $\psi_2$ are analytic except at $\lambda=\pm\eta_1$ where they admit, at most, square root singularities.
Furthermore, the following jump relations are satisfied:
 \begin{align}
 \label{jumpw1}
& w_+(\lambda)-w_-(\lb)=0\quad \lb\in[\eta_2,+\infty)\\
  &w_+(\lambda)+w_-(\lb)=0,\quad \lb\in\Sigma_1\\
   &w_+(\lambda)-w_-(\lb)=-\tau,\quad \lb\in (-\eta_1,\eta_1)\\
    &w_+(\lambda)+w_-(\lb)=-1,\quad \lb\in \Sigma_2 \, .
    \label{jumpw2}
  \end{align}
 Therefore  for $\lb\in\Sigma_1\cup\Sigma_2$  we have 
  \begin{equation}
  \label{psi_rel1}
  \psi_{1+}(\lb)=\psi_{2-}(\lb)\, ,\quad   \psi_{2+}(\lb)=\psi_{1-}(\lb)\, ,
  \end{equation}
  while   for $\lb\in(-\eta_1,\eta_1)$
  \begin{equation}
  \label{psi_rel2}
  \begin{split}
 & \psi_{1+}(\lb)=\psi_{1-}(\lb)e^{x\Omega+\Delta }\, ,\quad   \psi_{2+}(\lb)=\psi_{2-}(\lb)e^{-x\Omega-\Delta  }\,.
 \end{split}
\end{equation}
 Next we introduce   the quantity 
    \[
 \gamma(\lb)=\left(\dfrac{\lb^2-\eta_1^2}{\lb^2-\eta_2^2}\right)^{\frac{1}{4}}\,,
    \]
analytic in $\C \setminus (\Sigma_1 \cup \Sigma_2)$ and normalized such that $\gamma(\lb) \to 1$ as $\lb \to \infty$.    
    Then,
    \begin{equation}
    \label{jump_g}
   \gamma_+(\lb)=-i\gamma_-(\lb)\, , \ \  \text{for } \lb \in\Sigma_1 \quad \text{and} \quad \gamma_+(\lb)=i\gamma_-(\lb)\, ,\ \ \text{for }\in\Sigma_2 \,.
    \end{equation}
  We are now ready to construct the  solution of the Riemann--Hilbert problem \eqref{Sinfinity1}--\eqref{Sinfinity2}.
    \begin{theorem}
  The \edit{vector} $S^{\infty}(\lb)$ given by
  \begin{equation}
  \label{Sinfty_sol}
  \begin{split}
S^{\infty}(\lb)=\gamma(\lb)\dfrac{\vartheta_3(0;2\tau)}{\vartheta_3\le( \frac{x\Omega+\Delta}{2\pi i};2\tau\ri)}
\begin{bmatrix} 
\displaystyle \frac{\vartheta_3 \le(2w(\lambda) +\frac{x\Omega+\Delta}{2\pi i}-\frac{1}{2};2\tau\ri)}{\vartheta_3 \le(2w(\lambda)  -\frac{1}{2};2\tau\ri)} & \displaystyle  \frac{\vartheta_3 \le(-2w(\lambda) +\frac{x\Omega+\Delta}{2\pi i}-\frac{1}{2};2\tau\ri)}{\vartheta_3 \le(-2w(\lambda)  -\frac{1}{2};2\tau\ri)}
\end{bmatrix}
\end{split}
\end{equation}
solves the Riemann--Hilbert problem  \eqref{Sinfinity1}.  
\end{theorem}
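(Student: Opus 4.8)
The plan is a direct verification that the row vector $S^\infty(\lambda)$ in \eqref{Sinfty_sol} satisfies every requirement of the Riemann--Hilbert problem \eqref{Sinfinity1}--\eqref{Sinfinity2}. Write $C=\vartheta_3(0;2\tau)/\vartheta_3(\tfrac{x\Omega+\Delta}{2\pi i};2\tau)$, so that $S^\infty(\lambda)=\gamma(\lambda)\,(\psi_1(\lambda),\psi_2(\lambda))$ with $\psi_1,\psi_2$ the functions introduced just above the statement. First I would settle single-valuedness: the jump set $\Sigma_1\cup[-\eta_1,\eta_1]\cup\Sigma_2$ equals the single interval $[-\eta_2,\eta_2]$, and since $\omega=\mathcal{O}(\lambda^{-2})$ at infinity a large loop around $[-\eta_2,\eta_2]$ encloses no period of $\omega$; hence $w(\lambda)=\int_{\eta_2}^{\lambda}\omega$ is single-valued on $\C\setminus[-\eta_2,\eta_2]$, and together with the quasi-periodicity \eqref{periods} of $\vartheta_3$ this makes $\psi_1,\psi_2$, and hence $S^\infty$, single-valued and meromorphic on $\C\setminus[-\eta_2,\eta_2]$.

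The jump conditions follow by feeding the scalar jumps \eqref{jump_g} of $\gamma$ into the exchange and shift relations \eqref{psi_rel1}--\eqref{psi_rel2} for $\psi_1,\psi_2$. On $\Sigma_1$, where $\gamma_+=-i\gamma_-$, $\psi_{1+}=\psi_{2-}$ and $\psi_{2+}=\psi_{1-}$, one obtains $S^\infty_{1+}=-i\,S^\infty_{2-}$ and $S^\infty_{2+}=-i\,S^\infty_{1-}$, which is exactly the jump of \eqref{Sinfinity1} with off-diagonal entries $-i$; on $\Sigma_2$ the sign in \eqref{jump_g} is $+i$, producing the jump with off-diagonal entries $+i$. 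On $[-\eta_1,\eta_1]$ the factor $\gamma$ is analytic (its cuts lie on $\Sigma_1\cup\Sigma_2$), while $\psi_{1+}=e^{x\Omega+\Delta}\psi_{1-}$ and $\psi_{2+}=e^{-x\Omega-\Delta}\psi_{2-}$, which is the prescribed diagonal jump. There is no jump on $(\eta_2,+\infty)$ or $(-\infty,-\eta_2)$ because neither $\gamma$ nor $w$ jumps there, the latter by \eqref{jumpw1} and the symmetry \eqref{eq:wsymm}.

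Next I would examine the four branch points. Near $\lambda=\pm\eta_2$ the only non-analytic factor is $\gamma$, which has a fourth-root branch point, while $\psi_1,\psi_2$ remain bounded there (the value of $w$ at $\pm\eta_2$ is not a zero of the theta denominators); hence $S^\infty$ has at worst a fourth-root singularity at $\pm\eta_2$, which is admissible. Near $\lambda=\pm\eta_1$ the denominators $\vartheta_3(\pm2w-\tfrac12;2\tau)$ vanish --- the computation displayed just before the statement, using $w_+(\eta_1)=-\tfrac\tau2$ and $w_+(-\eta_1)=-\tfrac\tau2-\tfrac12$ from \eqref{w_eval} together with the simple zero of $\vartheta_3(\cdot\,;2\tau)$ at $\tau+\tfrac12$ --- and because $w-w(\pm\eta_1)$ vanishes like a square root, $\psi_1,\psi_2$ grow at most like $|\lambda\mp\eta_1|^{-1/2}$; multiplied by the matching fourth-root zero of $\gamma$, $S^\infty$ again has at worst a fourth-root singularity, so no residue condition is violated. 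A divisor count on the torus $\C/(\Z+2\tau\Z)$ (equivalently, an argument-principle count over the degree-two map $w\mapsto 2w$) confirms that these are the only zeros of the theta denominators for $\lambda\in\C\setminus[-\eta_2,\eta_2]$, so $S^\infty$ has no spurious poles.

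Finally, for the normalization at infinity: $\gamma(\lambda)\to1$, and from $w(+\infty)=-\tfrac14$ in \eqref{w_eval} together with $\vartheta_3(z\pm1;2\tau)=\vartheta_3(z;2\tau)$ one computes $\psi_1(\infty)=\psi_2(\infty)=1$; since every factor is analytic at $\lambda=\infty$, the expansion reads $S^\infty(\lambda)=\begin{bmatrix}1&1\end{bmatrix}+\mathcal{O}(\lambda^{-1})$, as required. The only step that is not bookkeeping is the zero count in the previous paragraph: one must be certain that on the cut plane $\psi_1,\psi_2$ have no poles beyond the cancelled ones at $\pm\eta_1$, and this is precisely where the identification of $2w(\lambda)$ with a coordinate on the torus $\C/(\Z+2\tau\Z)$, hence Riemann's theorem on the zeros of $\vartheta_3$, genuinely enters.
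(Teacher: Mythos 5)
Your proof is correct and follows essentially the same route as the paper: the jumps are obtained by combining the exchange/shift relations for $\psi_1,\psi_2$ (from the jumps of $w$) with the scalar jumps of $\gamma$, the normalization at infinity follows from $w(+\infty)=-\tfrac14$ and the periodicity of $\vartheta_3$, and the branch points are handled by the cancellation between the fourth-root zero of $\gamma$ and the square-root vanishing of the theta denominators at $\pm\eta_1$. The only addition beyond the paper's argument is your explicit divisor count confirming that the theta denominators have no zeros other than at $\pm\eta_1$, a point the paper asserts without elaboration.
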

\begin{proof}
We  observe that   $S^{\infty}(\lb)$ has at most fourth root singularities at the branch points and it is bounded everywhere else on the complex plane.
 Because of  \eqref{periods} and  \eqref{w_eval} we have $S^{\infty}(\infty)=\begin{bmatrix} 1& 1\end{bmatrix}$, namely the condition \eqref{Sinfinity2} is satisfied.
 Combining \eqref{psi_rel1}, \eqref{psi_rel2} and \eqref{jump_g}, we conclude that the jump conditions \eqref{Sinfinity1} are satisfied. 
\end{proof}

\begin{figure}
\centering
\scalebox{.9}{
\begin{tikzpicture}[>=stealth]
\path (0,0) coordinate (O);

\draw (-4,-1) -- (5,-1);
\draw (-2,1) -- (7,1);
\draw (-4,-1) -- (-2,1); 
\draw (5,-1) -- (7,1);
\node at (5.4,-0.1) {$\times$};
\node[above] at (5.5,-0.1) {$\infty^-$};

\draw (-4, 3) -- (5,3);
\draw (-2,5) -- (7,5);
\draw (-4,3) -- (-2,5);
\draw (5,3) -- (7,5);
\node at (5.4,3.9) {$\times$};
\node[above] at (5.5,3.9) {$\infty^+$};

\draw (-2,0) -- (.5,0);
\draw (1.5,0) -- (4,0);

\draw (-2,4) -- (.5,4);
\draw (1.5,4) -- (4,4);

\draw[dashed, black!30] (-2,0) -- (-2,4);
\draw[dashed, black!30] (.5,0) -- (.5,4);
\draw[dashed, black!30] (1.5,0) -- (1.5,4);
\draw[dashed, black!30] (4,0) -- (4,4);

\draw[fill] (-2,0) circle [radius=0.025];
\node[below ] at (-2,0) {\tiny $-\eta_2$};
\draw[fill] (0.5,0) circle [radius=0.025];
\node[below ] at (0.5,0) {\tiny $-\eta_1$};
\draw[fill] (1.5,0) circle [radius=0.025];
\node[below ] at (1.5,0) {\tiny $\eta_1$};
\draw[fill] (4,0) circle [radius=0.025];
\node[below ] at (4,0) {\tiny $\eta_2$};

\draw[fill] (-2,4) circle [radius=0.025];
\node[above ] at (-2,4) {\tiny $-\eta_2$};
\draw[fill] (0.5,4) circle [radius=0.025];
\node[above ] at (0.5,4) {\tiny $-\eta_1$};
\draw[fill] (1.5,4) circle [radius=0.025];
\node[above ] at (1.5,4) {\tiny $\eta_1$};
\draw[fill] (4,4) circle [radius=0.025];
\node[above ] at (4,4) {\tiny $\eta_2$};



\draw[->- = .25, red] (-1,4) .. controls + (70:.5cm) and + (70:.5cm) .. (2.5,4);
\draw[->- = .25, red] (2.5,0) .. controls + (-110:.5cm) and + (-110:.5cm) .. (-1,0);
\draw[red!30, dashed] (-1,0) -- (-1,4);
\draw[red!30, dashed] (2.5,0) -- (2.5,4);
\node[above, red] at (1,4.3) {\small $A$};

\draw[->- = .7, blue] (1,4) .. controls + (70:.5cm) and + (70:.5cm) .. (4.5,4);
\draw[->- = .7, blue] (4.5,4) .. controls + (-110:.5cm) and + (-110:.5cm) .. (1,4);
\node[blue, above] at (3.25,4.3) {\small $B$};

\end{tikzpicture}
}
\caption{Construction of the genus-$1$ Riemann surface $\mathfrak{X}$ and its basis of cycles.}
\label{frakX}
\end{figure}
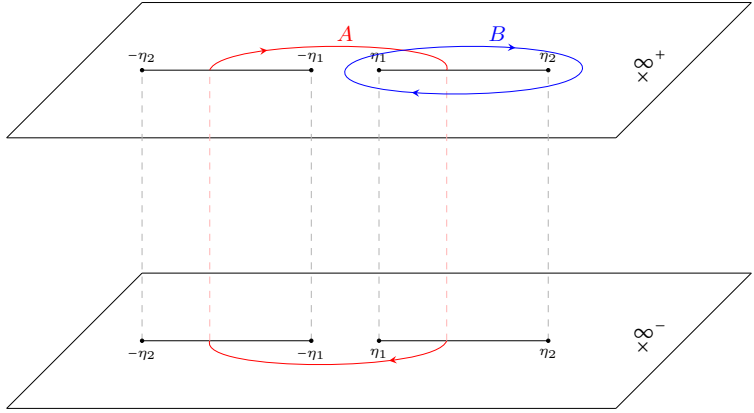

This vector solution provides the asymptotic behaviour of the solution $S$ to Riemann--Hilbert problem depicted in \figurename \ \ref{openinglenses}, for all $\lambda$ bounded away from the endpoints.  However, in order to prove this, we need to construct a {\it matrix solution} to this  Riemann--Hilbert problem, which we call $P^{\infty}(\lb)$. The matrix solution we construct has a pole at $\lb=0$, however this \edit{pole} does not affect the 
vector behaviour of our local and outer parametrices.

  This will be accomplished in the next two subsections, by creating a second, independent vector solution.

\subsection{The outer matrix parametrix $P^{\infty}$\label{Outer}}
\label{Sec_outer}

Consider the  $1$-form $\d p(\lambda)=(1-g'(\lb))\ \d\lb=\dfrac{\lambda^2+\kappa}{R(\lambda)}\d\lambda$ with $\kappa$ as in \eqref{kappag'}  and the Abelian integral
\begin{equation}
\label{p}
p(\lambda)=\int_{\eta_2}^{\lambda}\d p,\
\end{equation}
which satisfies the relations
\begin{equation}
\label{rel1}
p_+(\lambda)+p_-(\lambda)=0\quad \lb\in\Sigma_1\cup\Sigma_2 \ ,
\end{equation}

\begin{equation}
\label{rel2}
p_+(\lambda)-p_-(\lambda)=-\Omega\quad \lb\in(-\eta_1,\eta_1)\ ,
\end{equation}
and for $\lambda \in \mathbb{C} \setminus \mathbb{R}$,
\begin{equation}
\label{eq:psymm}
p(-\lambda) =- p(\lambda) \ .
\end{equation}

Then the  vector function 
\begin{equation}
\label{Psi}
\Psi:= \begin{bmatrix} \varphi_1 & \varphi_2 \end{bmatrix} = \begin{bmatrix} S^{\infty}_1 & S^{\infty}_2 \end{bmatrix} e^{xp(\lambda)\sigma_3}
\end{equation}
solves a  Riemann--Hilbert problem with constant jumps (independent from $x$). 

Indeed on $\Sigma_1\cup\Sigma_2$ we have 
\begin{align*}
\Psi_+(\lb)&=\begin{bmatrix} S^{\infty}_{1+} & S^{\infty}_{2+} \end{bmatrix} e^{xp_+(\lambda)\sigma_3}= \begin{bmatrix} S^{\infty}_{1-} & S^{\infty}_{2-} \end{bmatrix} \begin{bmatrix}0 & \mp i\\ \mp i & 0 \end{bmatrix} e^{xp_+(\lambda)\sigma_3}\\
&=\Psi_-(\lb)e^{-xp_-(\lambda)\sigma_3}\begin{bmatrix}0 & \mp i\\ \mp i & 0 \end{bmatrix} e^{xp_+(\lambda)\sigma_3}\\
&=\Psi_-(\lb)\begin{bmatrix}0 & \mp i\\ \mp i & 0 \end{bmatrix},
\end{align*}
where the $\mp$ signs correspond to $\Sigma_1$ and $\Sigma_2$ respectively, and the last identity has been obtained using \eqref{rel1}.
On $(-\eta_1,\eta_1)$  we have
\begin{align*}
\Psi_+(\lb)&=\begin{bmatrix} S^{\infty}_{1+} & S^{\infty}_{2+} \end{bmatrix} e^{xp_+(\lambda)\sigma_3}= \begin{bmatrix} S^{\infty}_{1-} & S^{\infty}_{2-} \end{bmatrix} e^{(x\Omega +\Delta)\sigma_3} e^{xp_+(\lambda)\sigma_3}\\
&=\Psi_-(\lb)e^{-xp_-(\lambda)\sigma_3} e^{(x\Omega +\Delta)\sigma_3} e^{xp_+(\lambda)\sigma_3}\\
&=\Psi_-(\lb)e^{\Delta\sigma_3},
\end{align*}
 where the last identity has been obtained from \eqref{rel2}.
 Therefore, the $x$ derivative  of $\Psi$ in \eqref{Psi}, namely
  $$\Psi_x(\lb)=\begin{bmatrix} \varphi_{1x} & \varphi_{2x} \end{bmatrix} = \begin{bmatrix} S^{\infty}_{1x} & S^{\infty}_{2x} \end{bmatrix} e^{xp(\lambda)\sigma_3}+ \begin{bmatrix}p(\lb) S^{\infty}_{1} & -p(\lb)S^{\infty}_{2} \end{bmatrix} e^{xp(\lambda)\sigma_3},$$
  has  the same jumps on $(-\eta_2,\eta_2)$ as $\Psi(\lb)$.
For this reason we consider the matrix function \cite{MinakovP} (see also \cite{ClaeysGrava})
\begin{align} 
\Phi(\lb):&=\begin{bmatrix} \varphi_1(\lb)&\varphi_2(\lb)\\
\varphi_{1x}(\lb)&\varphi_{2x}(\lb)
\end{bmatrix}\\
&=\begin{bmatrix}
S^{\infty}_1(\lb)&S^{\infty}_2(\lb)\\
p(\lambda)S^{\infty}_1(\lb)+S^{\infty}_{1x}(\lb)&-p(\lambda)S^{\infty}_2(\lb)+S^{\infty}_{2x}(\lb)
\end{bmatrix}e^{xp(\lambda)\sigma_3}\\
&=\dfrac{1}{2}\begin{bmatrix}
1&1\\
\lambda&-\lambda\end{bmatrix}
\begin{bmatrix}
(1+\frac{p(\lambda)}{\lambda})S^{\infty}_1+\dfrac{1}{\lambda}S^{\infty}_{1x}
&(1-\frac{p(\lambda)}{\lambda})S^{\infty}_2+\dfrac{1}{\lambda}S^{\infty}_{2x}\\
(1-\frac{p(\lambda)}{\lambda})S^{\infty}_1-\dfrac{1}{\lambda}S^{\infty}_{1x}&
(1+\frac{p(\lambda)}{\lambda})S^{\infty}_2-\dfrac{1}{\lambda}S^{\infty}_{2x}
\end{bmatrix}e^{xp(\lambda)\sigma_3}\,,
\end{align}
where the last expression is an algebraic manipulation that can be verified by performing the matrix multiplication.
It follows that for $\lb\neq 0$, the matrix function $\begin{bmatrix}
1&1\\
\lambda&-\lambda\end{bmatrix}^{-1}\Phi(\lb)$  has the same jumps as the vector $\Psi(\lb)$ on the interval $(-\eta_2,\eta_2)$ 
and the matrix function $\begin{bmatrix}
1&1\\
\lambda&-\lambda\end{bmatrix}^{-1}\Phi(\lb)e^{-xp(\lambda)\sigma_3}$ has the same jumps as the vector $S^{\infty}$ defined in \eqref{Sinfty_sol}.
For this reason we take as a matrix solution  for the exterior parametrix
\begin{equation}
\label{P_infinity0}
P^{\infty}(\lb)=\frac{1}{2}\begin{bmatrix}
(1+\frac{p(\lambda)}{\lambda})S^{\infty}_1+\dfrac{1}{\lambda}S^{\infty}_{1x}
&(1-\frac{p(\lambda)}{\lambda})S^{\infty}_2+\dfrac{1}{\lambda}S^{\infty}_{2x}\\
(1-\frac{p(\lambda)}{\lambda})S^{\infty}_1-\dfrac{1}{\lambda}S^{\infty}_{1x}&
(1+\frac{p(\lambda)}{\lambda})S^{\infty}_2-\dfrac{1}{\lambda}S^{\infty}_{2x}
\end{bmatrix}\,.
\end{equation}
It satisfies 
the following Riemann--Hilbert problem:
\begin{gather}
\mbox{$P^{\infty}(\lambda)$  is analytic  for  $\lb\in\mathbb{C}\backslash[-\eta_2,\eta_2]$ with  a singularity at $\lb=0$,}\\
P^{\infty}_+(\lambda) = P^{\infty}_-(\lambda) 
\begin{cases} 
\begin{bmatrix} e^{x\Omega+\Delta} &0 \\ 0 & e^{-x\Omega-\Delta}\end{bmatrix}  & \lambda \in [-\eta_1,\eta_1]  \\
 \begin{bmatrix}0 & -i\\ -i & 0 \end{bmatrix} &\lambda \in \Sigma_{1}  \\
  \begin{bmatrix}0 & i\\ i & 0 \end{bmatrix} &\lambda \in \Sigma_{2},  \\
\end{cases} 
\label{Pinfinityjump}\\
 P^{\infty}(\lb)=\begin{bmatrix} 1& 0\\ 0&1\end{bmatrix}+\mathcal{O}\le(\frac{1}{\lb}\ri),\quad \lb\to\infty\,.
\end{gather}
Despite the singularity of the matrix $P^{\infty}(\lb)$ at $\lb=0$, its determinant is equal to one.
Before proving this fact we first make a slight change of notation that will be relevant in the next sections.
We observe that the $x$-derivative  of $S_1^\infty$ and $S_2^\infty$ can be written in the form
 \[
S^{\infty}_{1x}(\lambda)=\gamma(\lb)\dfrac{\vartheta_3(0;2\tau)}{\vartheta_3 \le(2w(\lambda)  -\frac{1}{2};2\tau\ri)} 
\dfrac{\Omega}{2\pi i}\dfrac{\d}{\d z}\left[ \frac{\vartheta_3 \le(z+2w(\lambda) +\frac{x\Omega+\Delta}{2\pi i}-\frac{1}{2};2\tau\ri)}{\vartheta_3\le(z+ \frac{x\Omega+\Delta}{2\pi i};2\tau\ri)}\right]\bigg|_{z=0}
\]
and similarly for $S^{\infty}_2(\lb)$. Since in the next sections we will use similar formulas where the quantity $\Omega$  is replaced by  $\tilde{\Omega}$ that is 
dependent on $x$ and $t$, it is important to distinguish the operation of derivative with respect to $x$ from the operation on the right hand side of the above expression.
For this reason we introduce the notation 
\[
\nabla_{\Omega}S^{\infty}_1(\lambda):=\gamma(\lb)\dfrac{\vartheta_3(0;2\tau)}{\vartheta_3 \le(2w(\lambda)  -\frac{1}{2};2\tau\ri)} 
\dfrac{\Omega}{2\pi i}\dfrac{\d}{\d z}\left[ \frac{\vartheta_3 \le(z+2w(\lambda) +\frac{x\Omega+\Delta}{2\pi i}-\frac{1}{2};2\tau\ri)}{\vartheta_3\le(z+ \frac{x\Omega+\Delta}{2\pi i};2\tau\ri)}\right]\bigg|_{z=0}
\]
and similarly for $S^{\infty}_2(\lb)$. Clearly, when $\Omega$ is  $x$-independent then $\nabla_{\Omega}S^{\infty}(\lambda)\equiv S_{x}(\lb)$.
Therefore the exterior parametrix $P^{\infty}$ in \eqref{P_infinity0}  will be written  in the form
\begin{equation}
\label{P_infinity}
P^{\infty}(\lb)=\frac{1}{2}\begin{bmatrix}
(1+\frac{p(\lambda)}{\lambda})S^{\infty}_1(\lambda)+\dfrac{1}{\lambda}\nabla_{\Omega}S^{\infty}_1(\lambda)
&(1-\frac{p(\lambda)}{\lambda})S^{\infty}_2(\lambda)+\dfrac{1}{\lambda}\nabla_{\Omega}S^{\infty}_2(\lambda)\\
(1-\frac{p(\lambda)}{\lambda})S^{\infty}_1(\lambda)-\dfrac{1}{\lambda}\nabla_{\Omega}S^{\infty}_1(\lambda)&
(1+\frac{p(\lambda)}{\lambda})S^{\infty}_2(\lambda)-\dfrac{1}{\lambda}\nabla_{\Omega}S^{\infty}_2(\lambda)
\end{bmatrix}\,.
\end{equation}
\begin{lemma}
We have
\begin{equation}
\label{det_Pinfinity}
\det P^{\infty}(\lb) \equiv 1\,.
\end{equation}
\end{lemma}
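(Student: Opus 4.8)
The plan is to show that $D(\lambda):=\det P^\infty(\lambda)$ extends to an entire function of $\lambda$ with $D(\lambda)\to1$ as $\lambda\to\infty$, whence $D\equiv1$ by Liouville's theorem. Four things have to be checked: that $D$ has no jump on $[-\eta_2,\eta_2]$, that the branch points $\pm\eta_1,\pm\eta_2$ are removable, that the pole of $P^\infty$ at $\lambda=0$ does not produce a pole of $D$, and the behaviour at $\infty$.

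The absence of jumps is immediate: from the Riemann--Hilbert problem for $P^\infty$ one has $D_+=D_-\det V$ on each subarc of $[-\eta_2,\eta_2]$, and each of the three constant jump matrices appearing there (the same as in \eqref{Sinfinity1}) has determinant one, since $\det\bigl(\operatorname{diag}(e^{x\Omega+\Delta},e^{-x\Omega-\Delta})\bigr)=1$ and $\det\begin{bmatrix}0&-i\\-i&0\end{bmatrix}=\det\begin{bmatrix}0&i\\i&0\end{bmatrix}=1$. Hence $D$ is single-valued and holomorphic on $\mathbb{C}\setminus\{\pm\eta_1,\pm\eta_2,0\}$. At $\pm\eta_1,\pm\eta_2$, the formula \eqref{Sinfty_sol}, the fourth-root behaviour of $\gamma$, and the behaviour of the theta-quotients near those points recalled above show that every entry of $P^\infty$ has at worst a fourth-root singularity, so $D$, being a sum of products of two such entries, has at worst a square-root singularity. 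A single-valued function with at worst a square-root singularity at a point is holomorphic there, so $D$ extends holomorphically across $\pm\eta_1,\pm\eta_2$.

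The one delicate point is $\lambda=0$, where $P^\infty$ itself is genuinely singular. Substituting the entries of \eqref{P_infinity} into $D=P^\infty_{11}P^\infty_{22}-P^\infty_{12}P^\infty_{21}$ and expanding, I expect the $\lambda^{-2}$-terms — which arise only from the product of the two $\tfrac1\lambda\nabla_\Omega S^\infty_j$ contributions — to cancel identically, leaving
\[
D(\lambda)=\frac{p(\lambda)}{\lambda}\,S^\infty_1(\lambda)\,S^\infty_2(\lambda)+\frac{1}{2\lambda}\bigl(S^\infty_2(\lambda)\,\nabla_\Omega S^\infty_1(\lambda)-S^\infty_1(\lambda)\,\nabla_\Omega S^\infty_2(\lambda)\bigr).
\]
Since $p$, $S^\infty_1$, $S^\infty_2$ and $\nabla_\Omega S^\infty_j=\partial_x S^\infty_j$ are holomorphic in a neighbourhood of $\lambda=0$, the product $\lambda D(\lambda)$ is holomorphic there, so $D$ has at worst a simple pole at $0$. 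To remove it I would use the $\lambda\mapsto-\lambda$ symmetry: from $\gamma(-\lambda)=\gamma(\lambda)$, the antisymmetry \eqref{eq:psymm} of $p$, the relation \eqref{eq:wsymm} for $w$, and the $1$-periodicity \eqref{periods} of $\vartheta_3(\,\cdot\,;2\tau)$, one checks that $S^\infty_1(-\lambda)=S^\infty_2(\lambda)$, $S^\infty_2(-\lambda)=S^\infty_1(\lambda)$, and likewise $\nabla_\Omega S^\infty_1(-\lambda)=\nabla_\Omega S^\infty_2(\lambda)$; substituting into the displayed formula gives $D(-\lambda)=D(\lambda)$. An even function that is holomorphic on $\mathbb{C}\setminus\{0\}$ and has at worst a simple pole at $0$ has vanishing residue there, hence is entire.

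Finally, $P^\infty(\lambda)=\begin{bmatrix}1&0\\0&1\end{bmatrix}+\mathcal{O}(\lambda^{-1})$ forces $D(\lambda)\to1$ as $\lambda\to\infty$, so Liouville's theorem gives $D\equiv1$. Everything here is soft except the single algebraic identity — the vanishing of the $\lambda^{-2}$-terms in the determinant (equivalently, the explicit formula for $D$ above) — and that is the step I expect to be the only real computation and the crux of the proof.
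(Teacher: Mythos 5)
Your argument is correct, and its overall skeleton (unimodular jump matrices $\Rightarrow$ no jumps for the determinant; removable fourth-root singularities at the branch points; control of the pole at $0$; Liouville) is the same as the paper's. Your explicit formula for $D(\lambda)$ after the cancellation of the $\lambda^{-2}$-terms agrees with the paper's \eqref{detP1}, and that cancellation is indeed the one-line computation you anticipate. Where you genuinely diverge is at the crux, $\lambda=0$: the paper removes the potential simple pole by brute force, computing the boundary values $p_{\pm}(0)=\mp\Omega/2$ (see \eqref{intp}) and the values $S^{\infty}_{\pm}(0)$, $\nabla_{\Omega}S^{\infty}_{\pm}(0)$ via the quasi-periodicity of $\vartheta_3$ (formulas \eqref{exp_S}--\eqref{exp_Sx}), and checking directly that the bracketed combination is $\mathcal{O}(\lambda)$. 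You instead observe that $D$ is even in $\lambda$ (which follows from the symmetry \eqref{eq:PInfSymm}, or from \eqref{eq:psymm}, \eqref{eq:wsymm} and the $1$-periodicity of $\vartheta_3$ as you indicate), so an at-worst-simple pole at $0$ has vanishing residue. Your route is softer and shorter and entirely avoids the theta-function evaluation; the paper's computation, on the other hand, is reused verbatim in Lemma \ref{lemma4.1} to show that the error vector $\mathcal{E}$ is regular at $\lambda=0$, so the explicit values are not wasted effort there.

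One imprecision worth fixing: you assert that $p$, $S^{\infty}_j$ and $\nabla_{\Omega}S^{\infty}_j$ are ``holomorphic in a neighbourhood of $\lambda=0$.'' They are not --- $0$ lies in the interior of $(-\eta_1,\eta_1)$, across which all of these functions jump ($p_+-p_-=-\Omega$, $S^{\infty}_+ = S^{\infty}_-\,e^{(x\Omega+\Delta)\sigma_3}$). What is true, and what your argument actually needs, is that $D$ itself has no jump there and that the boundary values of these functions are bounded (indeed analytic up to the boundary) near $0$; hence $\lambda D(\lambda)$ is a bounded single-valued holomorphic function on a punctured disc about $0$ and extends analytically, giving the at-worst-simple pole. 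With that rewording the proof is complete.
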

\begin{proof}
We observe that 
\begin{equation}
\label{detP1}
\det P^{\infty}(\lb)=-\dfrac{1}{2\lb}\Big( -2p(\lb)S^{\infty}_2(\lb)S^{\infty}_1(\lb)+S^{\infty}_1(\lb) \nabla_{\Omega} S^{\infty}_{2}(\lb)-\nabla_{\Omega}S^{\infty}_{1}(\lb)S^{\infty}_2(\lb) \Big)
\end{equation}
does not have any jumps on the complex plane and therefore it is a meromorphic function on the complex plane.
Considering the behaviour near $\lambda=\eta_2$,
we have
\[
S^{\infty}_1(\lb)=\dfrac{1}{(\lb-\eta_2)^{\frac{1}{4}}}(\sum_{k=0}^{\infty}\gamma_k(\lb-\eta_2)^k)\left(\sum_{j=0}^{\infty}S_{j}^{(1)}(\lb-\eta_2)^j+\sqrt{\lb-\eta_2}\sum_{j=0}^{\infty}S_j^{(2)}(\lb-\eta_2)^j\right)
\]
where $\gamma_k$ are the coefficients of the Puiseux expansion of $\gamma(\lb)$ and  $S_{j}^{(1)}$ and $S_{j}^{(2)} $  are  the coefficients of the Puiseux expansion of  the $\vartheta_3 $ function   terms of $S^{\infty}_1(\lb)$   near $\lb=\eta_2$; in particular,   $\gamma_0\neq 0$ and $S_{0}^{(1,2)}\neq 0$.
In a similar way we obtain 
\[
S^{\infty}_2(\lb)=\dfrac{1}{(\lb-\eta_2)^{\frac{1}{4}}}(\sum_{k=0}^{\infty}\gamma_k(\lb-\eta_2)^k)\left(\sum_{j=0}^{\infty}S_{j}^{(1)}(\lb-\eta_2)^j-\sqrt{\lb-\eta_2}\sum_{j=0}^{\infty}S_j^{(2)}(\lb-\eta_2)^j\right)
\]
and
\[
p(\lb)=2\sqrt{\lb-\eta_2}\sum_{k=0}^{\infty}c_k(\lb-\eta_2)^k.
\]
Plugging the above three expansions into \eqref{detP1}  it is straightforward to check  that $\det P^{\infty}(\lb)$ has a Taylor expansion  at the point $\lb=\eta_2$. It can be checked similarly that  $\det P^{\infty}(\lb)$  has a Taylor expansion  at $\lb=\pm\eta_1$ and $\lb=-\eta_2$. 
Regarding the point  $\lb=0$,   we consider the   Abelian integral $p(\lb)$ defined in \eqref{p} and denote by $p_{\pm}(\lb)$ the boundary values of $p(\lb)$ on the real axis.
 We have
\begin{equation}
\label{intp}
p_\pm(0)=\int_{\eta_2}^0 \d p_{\pm}(\xi)=\int_{\eta_2}^{\eta_1}\d p_{\pm}(\xi)+\int_{\eta_1}^0 \d p(\xi)=\mp\frac{\Omega}{2},
\end{equation}
where, in the last relation we use the identity 
\[
0=\int_{-\eta_1}^{\eta_1}\d p(\xi)=\int_{-\eta_1}^{0}\d p(\xi)+\int_{0}^{\eta_1}\d p(\xi)=\int_{\eta_1}^{0}\d p(-\xi)+\int_{0}^{\eta_1}\d p(\xi)=
2\int_{0}^{\eta_1}\d p(\xi).
\]
In this last line we do not use $\d p_\pm(\lambda)$ because the function $R(\lb)$ is analytic off the contours $\Sigma_1$ and $\Sigma_2$ and
the same property holds for $\d p$.
Using the periodicity  properties of the Jacobi elliptic function
\begin{equation*}
\vartheta_3 (z+h+k\tau;\tau)=e^{-\pi  i k^2 \tau-2\pi i kz}\vartheta_3 (z;\tau)\, ,\quad h,k\in\Z \,,
\end{equation*}
we have
\begin{equation}
\label{exp_S}
S^{\infty}_{\pm}(0)=
\gamma(0)\dfrac{\vartheta_3(0;2\tau)}{\vartheta_3 \le( \pm\tau;2\tau\ri)} \frac{\vartheta_3 \le(\pm\tau +\frac{x\Omega+\Delta}{2\pi i};2\tau\ri)}{
\vartheta_3( \frac{x\Omega+\Delta}{2\pi i};2\tau)} 
\begin{bmatrix} 
 \displaystyle  e^{ \pm(x\Omega+\Delta)}&1\end{bmatrix}
\end{equation}
and
\begin{equation}
\label{exp_Sx}
\begin{split}
\nabla_{\Omega}S^{\infty}_{\pm}(0)&=
\dfrac{\Omega}{2\pi i}S^{\infty}_{\pm}(0)\left. \partial_z\left[\log\frac{\vartheta_3 \le(z\pm \tau +\frac{x\Omega+\Delta}{2\pi i};2\tau\ri)}{\vartheta_3 \le(z+\frac{x\Omega+\Delta}{2\pi i};2\tau\ri)
} \right]\right|_{\mathrlap{z=0}} \,
+\begin{bmatrix}\pm\Omega S_{1\pm}^{\infty}(0)\;0\end{bmatrix}.\end{split}
\end{equation}
We conclude that 
\begin{align*}
&\Big( -2p(\lb)S^{\infty}_2(\lb)S^{\infty}_1(\lb)+S^{\infty}_1(\lb)\nabla_{\Omega}S^{\infty}_{2}(\lb)-\nabla_{\Omega}S^{\infty}_{1}(\lb)S^{\infty}_2(\lb) \Big)_{\pm}=\pm\Omega S^{\infty}_{2\pm}(0)S^{\infty}_{1\pm}(0)+ \\
&S^{\infty}_{1\pm}(0) \dfrac{\Omega}{2\pi i}S^{\infty}_{2\pm}(0)\left. \partial_z\left[\log\frac{\vartheta_3 \le(z\pm \tau +\frac{x\Omega+\Delta}{2\pi i};2\tau\ri)}{\vartheta_3 \le(z+\frac{x\Omega+\Delta}{2\pi i};2\tau\ri)
} \right]\right|_{\mathrlap{z=0}}-\dfrac{\Omega}{2\pi i}S^{\infty}_{1\pm}(0)S^{\infty}_{2\pm}(0)\left. \partial_z\left[\log\frac{\vartheta_3 \le(z\pm \tau +\frac{x\Omega+\Delta}{2\pi i};2\tau\ri)}{\vartheta_3 \le(z+\frac{x\Omega+\Delta}{2\pi i};2\tau\ri)
} \right]\right|_{\mathrlap{z=0}}\\
&\mp\Omega S ^{\infty}_{1\pm}(0)S^{\infty}_{2\pm}(0)+O(\lb)=O(\lb)
\end{align*}
as $\lb\to 0 $. 
  Therefore $$\det P^{\infty}(\lb)=-\dfrac{1}{2\lb} \Big( -2p(\lb)S^{\infty}_2(\lb)S^{\infty}_1(\lb)+S^{\infty}_1(\lb)\nabla_{\Omega}S^{\infty}_{2}(\lb)-\nabla_{\Omega}S^{\infty}_{1}(\lb)S^{\infty}_2(\lb) \Big)$$ 
 is a holomorphic function of $\lb$ near $\lb=0$.
 Since
\[
\det P^{\infty}(\lb)=1+O(\lb^{-1}),\quad \mbox{as $\lb\to\infty$,}
\]
it follows  by Liouville's theorem  that 
\[
\det P^{\infty}(\lb) \equiv 1\,.
\]
\end{proof}
{\bf Remark}:  The reader may verify using the definition (\ref{P_infinity}), along with the symmetry relations (\ref{eq:wsymm}) and (\ref{eq:psymm}), that $P^{\infty}$ satisfies the symmetry 
\begin{equation}
\label{eq:PInfSymm}
P^{\infty}(-\lambda) = \pmtwo{0}{1}{1}{0} P^{\infty}(\lambda) \pmtwo{0}{1}{1}{0} \ .
\end{equation}

\subsection{The local parametrix $P^{\pm \eta_j}$ at the endpoints}\label{localparam}

Thanks to Lemma \ref{lemma3.2}, the off-diagonal entries of the jump matrices for $S$ exponentially vanish as $x\to -\infty$ along the upper and lower lenses, while near the endpoints the $g$ function has a square-root-vanishing behaviour 
\begin{gather}
g_+(\lambda) -g_-(\lambda) = \mathcal{O}\le( \sqrt{\lambda\mp \eta_2}\ri) \qquad \text{as } \lambda \rightarrow \pm \eta_2 \ , \ 
\end{gather}
and 
\begin{gather}
g_+(\lambda) -g_-(\lambda) - \Omega = \mathcal{O}\le( \sqrt{\lambda\mp \eta_1}\ri) \qquad \text{as } \lambda \rightarrow \pm \eta_1 \ . \ 
\end{gather}
Additionally, the original solution $Y$ of the Riemann--Hilbert problem \eqref{jumpY}--\eqref{symY} has a logarithmic singularity in those points. Therefore, the jump matrices for $S$ are bounded in a neighbourhood of those points (but they are not close to the identity).

On the other hand, the outer parametrix $P^{\infty}$ is a good approximation of the solution $S$ to the Riemann--Hilbert  problem away from the endpoints $\lambda = \pm \eta_2, \pm \eta_1$, where $P^{\infty}$ exhibits a fourth-root singularity. So, we need to introduce four local parametrices $P^{\pm \eta_j}$ ($j=1,2$) in a suitable neighbourhood of each endpoint.


\subsubsection{Local parametrix near $\lambda = \eta_2$.}
We show here the construction of a (matrix) local parametrix $P^{\eta_2}$ around $\lambda = \eta_2$.

Performing the same calculations as in \cite[Section 6]{KMcLVAV}, we will construct a local parametrix $P^{\eta_2}$ with the help of modified Bessel functions. We fix a small disc $B^{(\eta_2)}_{\rho} = \le\{ \lambda \in \mathbb{C} \le| \,  \le|\lambda - \eta_2\ri|< \rho \ri.  \ri\}$ centered at $\eta_2$ of radius $\rho$, and we define the (local) conformal map
\begin{gather}
\zeta = \frac{1}{4} \le[ x \le(g(\lambda) - \lambda \ri)\ri]^2, \qquad \lambda \in B^{(\eta_2)}_\rho \ . 
\label{conformalBessel}
\end{gather}

To define the local parametrix $P^{\eta_{2}}$ in $B^{(\eta_2)}_\rho$, we consider 
\begin{eqnarray*}
P(\lambda) = S(\lambda) \le(\frac{ e^{  i \pi  /4}}{\sqrt{\pm \hat{r}} f}
\right)^{\sigma_{3}} & \lambda \in B^{(\eta_2)}_{\rho} \cap \mathbb{C}_{\pm}
\ ,
\end{eqnarray*}
and then, using the inverse of the transformation $\zeta(\lambda)$, we define
\begin{gather*}
P^{(1)}(\zeta) = P(\lambda(\zeta)) e^{-2 \zeta^{\frac{1}{2}} \sigma_3} \begin{bmatrix}0&1\\1&0 \end{bmatrix}\ , \quad \zeta \in \C \, ,
\end{gather*}
with branch cut $(-\infty,0]$. By construction, $P^{(1)}$ satisfies a Riemann--Hilbert  problem with jumps
\begin{gather}
P^{(1)}_+(\zeta) = P^{(1)}_-(\zeta) \begin{cases} \begin{bmatrix} 1 & 0 \\ 1 & 1 \end{bmatrix} & \text{on $\{$upper and lower lenses$\}$}\cap B^{(\eta_2)}_\rho \\
\begin{bmatrix} 0 & 1 \\ -1\ & 0 \end{bmatrix} &  \text{on } (-\infty,0] \cap B^{(\eta_2)}_\rho\ .
\end{cases}
\end{gather}

 We introduce now the model parametrix $\Psi_\Bes(\zeta)$ as in \cite[formul\ae \ (6.16)--(6.20)]{KMcLVAV}). The Riemann--Hilbert  problem for $\Psi_\Bes$ is the following:
\begin{enumerate}[(a)]
\item $\Psi_\Bes$ is analytic for $\zeta \in \mathbb{C} \backslash \Gamma_{\Psi}$, where $\Gamma_\Psi$ is the union of the three contours $\Gamma_\pm = \le\{ \arg \zeta = \pm\frac{2\pi }{3}\ri\} $ and $\Gamma_0 = \le\{ \arg \zeta = \pi \ri\}$;
\item $\Psi$ satisfies the following jump relations
\begin{gather}
\Psi_{\Bes \, +}(\zeta) =  \Psi_{\Bes\, -}(\zeta) \begin{cases} \begin{bmatrix} 1 & 0 \\ 1 & 1 \end{bmatrix} & \text{on } \Gamma_+ \cup \Gamma_-\\
\begin{bmatrix} 0 & 1 \\ -1\ & 0 \end{bmatrix} & \text{on }\Gamma_0, 
 \end{cases}\,  
\end{gather}
\item as $\zeta \rightarrow 0$
\begin{gather}
\Psi_\Bes(\zeta) =  \begin{bmatrix} \mathcal{O}\le( \ln |\zeta| \ri) & \mathcal{O}\le( \ln |\zeta| \ri) \\ \mathcal{O} \le( \ln|\zeta| \ri)& \mathcal{O} \le(\ln |\zeta|\ri)  \end{bmatrix} \ .
\end{gather}
\end{enumerate}

The solution is the following
\begin{gather}
\Psi_\Bes(\zeta) =
\begin{cases}
 \begin{bmatrix} \displaystyle I_0 (2\zeta^{\frac{1}{2}}) &\displaystyle \frac{i}{\pi} K_0 (2\zeta^{\frac{1}{2}})\\
\displaystyle 2\pi i \zeta^{\frac{1}{2}}I'_0 (2\zeta^{\frac{1}{2}}) & \displaystyle - 2\zeta^{\frac{1}{2}}K_0(2\zeta^{\frac{1}{2}})
\end{bmatrix} & |\arg \zeta| < \frac{2\pi }{3} \\
& \\
 \begin{bmatrix} \displaystyle \frac{1}{2}H^{(1)}_0 (2(-\zeta)^{\frac{1}{2}}) &\displaystyle \frac{1}{2} H^{(2)}_0 (2(-\zeta)^{\frac{1}{2}})\\
\displaystyle \pi \zeta^{\frac{1}{2}} \le[H^{(1)}_0 (2(-\zeta)^{\frac{1}{2}})\ri]' & \displaystyle  \pi \zeta^{\frac{1}{2}} \le[H^{(2)}_0(2(-\zeta)^{\frac{1}{2}})\ri]'
\end{bmatrix}  &    \frac{2\pi }{3} < |\arg \zeta| < \pi \\
& \\
\begin{bmatrix} \displaystyle \frac{1}{2}H^{(2)}_0 (2(-\zeta)^{\frac{1}{2}}) &\displaystyle -\frac{1}{2} H^{(1)}_0 (2(-\zeta)^{\frac{1}{2}})\\
\displaystyle -\pi \zeta^{\frac{1}{2}} \le[H^{(2)}_0 (2(-\zeta)^{\frac{1}{2}})\ri]' & \displaystyle  \pi \zeta^{\frac{1}{2}} \le[H^{(1)}_0(2(-\zeta)^{\frac{1}{2}})\ri]'
\end{bmatrix}  &  -\pi < |\arg \zeta|< - \frac{2\pi }{3}
\end{cases} \label{Besparamsol}
\end{gather}
with asymptotic behaviour at infinity 
\begin{gather}
\Psi_\Bes(\zeta) = \le( 2\pi \zeta^{\frac{1}{2}}\ri)^{-\frac{1}{2}\sigma_3} \frac{1}{\sqrt{2}} \begin{bmatrix}
\displaystyle 1& i  \displaystyle \\
\displaystyle i& \displaystyle 1
\end{bmatrix} \le(I +  \mathcal{O} \le(\frac{1}{\zeta^{\frac{1}{2}}}\ri)\ri)e^{2\zeta^{\frac{1}{2}}\sigma_3}
\end{gather}
uniformly as $\zeta \rightarrow \infty$ everywhere in the complex plane aside from the jumps.

In the above formul\ae \  $I_0(\zeta)$, $K_0(\zeta)$  are the modified Bessel functions of first and second kind, respectively, and $H^{(j)}(\zeta)$ the Hankel functions.

In conclusion, the local parametrix around the endpoint $\lambda = \eta_2$ is
\begin{gather}
 P^{\eta_{2}}(\lambda) = 
 A(\lambda) \Psi_\Bes(\zeta(\lambda)) \begin{bmatrix} 0&1\\1&0\end{bmatrix} e^{2\zeta(\lambda)^{\frac{1}{2}}\sigma_3} \left(
\frac{ e^{ i \pi  /4}}{\sqrt{\pm \hat{r}(\lb)} f(\lb)}
\right)^{-\sigma_{3}}  \quad  \lambda \in B^{(\eta_2)}_{\rho} \cap \mathbb{C}_{\pm}
\ ,
\end{gather}
where $A$ is a prefactor that is  determined by imposing that 
\begin{gather}
P^{\eta_{2}}(\lambda)  \le(P^{\infty}(\lambda) \ri)^{-1} = I + \mathcal{O}\le(|x|^{-1}\ri)  \qquad \text{as }  x \to - \infty, \ \text{for } \lambda  \in \partial B^{(\eta_2)}_\rho \backslash \Sigma_\Psi \ .
\end{gather}
Therefore, we set
\begin{gather}
\label{eq:Adef3}
A(\lambda) 
= P^{\infty}(\lambda)   \left(
\frac{ e^{ i \pi  /4}}{\sqrt{\pm \hat{r}(\lb)} f(\lb)}
\right)^{\sigma_{3}} 
\frac{1}{\sqrt{2}} \begin{bmatrix}-i& 1 \\ 1  &  -i\end{bmatrix} \le(2\pi \zeta^{\frac{1}{2}}\ri)^{\frac{1}{2}\sigma_3} \quad  \lambda \in B^{(\eta_2)}_{\rho} \cap \mathbb{C}_{\pm}\ .
\end{gather}
By construction, $A$ is well-defined and analytic in a neighbourhood of $\eta_2$, minus the cut $(-\infty, \eta_2]$; additionally, it is easy to see that $A$ is invertible ($\det A(\lambda) \equiv 1 $).

\begin{lemma}
$A(\lambda)$ is analytic everywhere in the neighbourhood $B^{(\eta_2)}_\rho$ of $\eta_2$.
\end{lemma}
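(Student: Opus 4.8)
\emph{Proof plan.} The function $A$ is already known to be analytic on $B^{(\eta_2)}_{\rho}\setminus(-\infty,\eta_2]$, so two things remain: (i) that the two determinations of $A$ furnished by \eqref{eq:Adef3} in $B^{(\eta_2)}_{\rho}\cap\mathbb{C}_{+}$ and in $B^{(\eta_2)}_{\rho}\cap\mathbb{C}_{-}$ agree along the segment $\Sigma_1\cap B^{(\eta_2)}_{\rho}=(\eta_2-\rho,\eta_2)$, which promotes $A$ to a single-valued analytic function on the punctured disc $B^{(\eta_2)}_{\rho}\setminus\{\eta_2\}$; and (ii) that the isolated singularity remaining at $\lambda=\eta_2$ is removable.

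\emph{Step (i): no jump across $(\eta_2-\rho,\eta_2)$.} I would compute the boundary values on $\Sigma_1$ of each factor in \eqref{eq:Adef3}. The factor $\sqrt{\pm\hat r(\lambda)}$, interpreted correctly, is a single analytic and non-vanishing function on all of $B^{(\eta_2)}_{\rho}$: in $B^{(\eta_2)}_{\rho}\cap\mathbb{C}_{\pm}$ one has $\pm\hat r=r_{\mathrm{an}}$, the analytic non-vanishing extension of $r$ near $\eta_2$, so both prescriptions produce the same branch of $\sqrt{r_{\mathrm{an}}}$ (the one positive on $\Sigma_1$); hence this factor has no jump across $\Sigma_1$ --- nor across $(\eta_2,\eta_2+\rho)$, which is how $A$ is continued past $\eta_2$ on the right. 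For the remaining factors, \eqref{Sinfinity1} gives $P^{\infty}_{+}=P^{\infty}_{-}\pmtwo{0}{-i}{-i}{0}$ on $\Sigma_1$; \eqref{fconstraint1} gives $f_{+}f_{-}=1/r$ on $\Sigma_1$; and, since $g(\lambda)-\lambda$ is purely imaginary on $\Sigma_1$ and changes sign across it (proof of Lemma~\ref{lemma3.2}), the conformal map \eqref{conformalBessel} is negative on $\Sigma_1$, so that $\zeta^{1/2}_{+}=-\zeta^{1/2}_{-}$ there. Setting $W=e^{i\pi/4}/(\sqrt{\pm\hat r}\,f)$ one gets $W_{+}W_{-}=e^{i\pi/2}/(r f_{+}f_{-})=i$, and a short computation using $\zeta^{1/2}_{+}=-\zeta^{1/2}_{-}$ and the explicit constant matrix $N=\tfrac{1}{\sqrt2}\pmtwo{-i}{1}{1}{-i}$ gives
\[
W_{+}^{\sigma_3}\,N\,(2\pi\zeta^{1/2}_{+})^{\sigma_3/2}(2\pi\zeta^{1/2}_{-})^{-\sigma_3/2}\,N^{-1}\,W_{-}^{-\sigma_3}=\pmtwo{0}{i}{i}{0},
\]
which is exactly the inverse of the $P^{\infty}$-jump. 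Therefore $A_{+}A_{-}^{-1}=P^{\infty}_{-}\pmtwo{0}{-i}{-i}{0}\pmtwo{0}{i}{i}{0}(P^{\infty}_{-})^{-1}=I$, i.e.\ $A_{+}=A_{-}$ on $\Sigma_1\cap B^{(\eta_2)}_{\rho}$.

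\emph{Step (ii): removability at $\eta_2$.} With $A$ now single-valued and analytic on $B^{(\eta_2)}_{\rho}\setminus\{\eta_2\}$, it suffices to bound its growth at $\eta_2$ better than $|\lambda-\eta_2|^{-1}$. Each factor in \eqref{eq:Adef3} has at most fractional-power growth at $\eta_2$: $P^{\infty}(\lambda)=\mathcal{O}(|\lambda-\eta_2|^{-1/4})$ from the quarter-root of $\gamma$ together with the Puiseux expansions of the $\vartheta_3$-ratios of $S^{\infty}_{1,2}$ near $\eta_2$ derived above; $\sqrt{\pm\hat r}=\sqrt{r_{\mathrm{an}}}$ is analytic and non-vanishing at $\eta_2$, and $f$ is bounded and bounded away from $0$ there (in \eqref{f} the prefactor $R(\lambda)=\mathcal{O}(|\lambda-\eta_2|^{1/2})$ offsets the $\mathcal{O}(|\lambda-\eta_2|^{-1/2})$ growth of the Cauchy integral whose density has an endpoint singularity at $\eta_2$), so $W^{\pm1}$ is bounded; and $(2\pi\zeta^{1/2})^{\pm1/2}=\mathcal{O}(|\lambda-\eta_2|^{\pm1/4})$ because $\zeta$ vanishes to first order at $\eta_2$. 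Multiplying, $A(\lambda)=\mathcal{O}(|\lambda-\eta_2|^{-1/2})$, hence $(\lambda-\eta_2)A(\lambda)\to0$ as $\lambda\to\eta_2$, and the removable-singularity theorem (applied entrywise) yields that $A$ extends analytically across $\eta_2$, completing the proof.

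\emph{Main obstacle.} The one genuinely delicate point is the sign and branch bookkeeping in Step~(i): one must reconcile the $\mathbb{C}_{\pm}$-dependent definition of $A$ with the branch cut of $\zeta^{1/2}$ (which inside $B^{(\eta_2)}_{\rho}$ is precisely $\Sigma_1$), with the convention $\hat r_{\pm}=\pm r$, and with the residual $\pm i$ ambiguities of the various square roots, so that the accumulated jump collapses to $+I$ rather than $-I$; this is exactly what the particular constant matrix in \eqref{eq:Adef3}, inherited from \cite[Section~6]{KMcLVAV}, is designed to secure. Everything else is routine verification.
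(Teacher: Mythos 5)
Your proposal is correct and follows essentially the same two-step strategy as the paper's own proof: verify $A_{+}=A_{-}$ across the cut inside the disc, then note that the fourth-root singularities of $P^{\infty}$ and $\zeta^{\frac14\sigma_3}$ give at worst an $\mathcal{O}(|\lambda-\eta_2|^{-1/2})$ bound, so the singularity at $\eta_2$ is removable. The only difference is that you actually carry out the branch/sign bookkeeping for the no-jump computation (correctly, including the role of $f_{+}f_{-}=1/r$, $\zeta^{1/2}_{+}=-\zeta^{1/2}_{-}$, and the single-valuedness of $\sqrt{\pm\hat r}$), which the paper dispatches as a ``careful computation \ldots with tears.''
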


\begin{proof}
To prove the statement, one needs to check that $A$ has no jumps across the interval $\edit{\Sigma_1} \cap B^{(\eta_2)}_{\rho}$ and that it has at most a removable singularity at $\lambda = \eta_2$.  
\edit{
Starting from \eqref{eq:Adef3} we observe from \eqref{fconstraint1} and \eqref{rhat} that for $\lambda \in \Sigma_1$, $\sqrt{ \hat{r}(\lambda)} f_+(\lambda) = \left( \sqrt{-\hat{r}(\lambda)} f_-(\lambda) \right)^{-1}$. Using this and the jump \eqref{Pinfinityjump} of $P^\infty$ on $\Sigma_1$ we have
\begin{align*}
	A_+(\lambda) 
	&= 
	P_-^{\infty}(\lambda) \begin{bmatrix} 0&-i\\-i&0\end{bmatrix}  \left(\frac{ e^{ i \pi  /4}}{\sqrt{\hat{r}_-(\lb)} f_-(\lb)} \right)^{-\sigma_{3}} 
	i^{\sigma_3}\frac{1}{\sqrt{2}} \begin{bmatrix}-i& 1 \\ 1  &  -i\end{bmatrix} i^{\sigma_3} \le(2\pi \zeta_-^{\frac{1}{2}}\ri)^{\frac{1}{2}\sigma_3}  \\
	&= 
	P_-^{\infty}(\lambda) \left(\frac{ e^{ i \pi  /4}}{\sqrt{\hat{r}_-(\lb)} f_-(\lb)} \right)^{\sigma_{3}} \begin{bmatrix} 0&-i\\-i&0\end{bmatrix}
	i^{\sigma_3}\frac{1}{\sqrt{2}} \begin{bmatrix}-i& 1 \\ 1  &  -i\end{bmatrix} i^{\sigma_3} \le(2\pi \zeta_-^{\frac{1}{2}}\ri)^{\frac{1}{2}\sigma_3}  \\
	&= 
	P_-^{\infty}(\lambda) \left(\frac{ e^{ i \pi  /4}}{\sqrt{\hat{r}_-(\lb)} f_-(\lb)} \right)^{\sigma_{3}} 
	\frac{1}{\sqrt{2}} \begin{bmatrix}-i& 1 \\ 1  &  -i\end{bmatrix}  \le(2\pi \zeta_-^{\frac{1}{2}}\ri)^{\frac{1}{2}\sigma_3}  = A_-(\lambda)\,.
\end{align*}
}
Next, we notice that $\zeta(\lambda)$ has a simple zero at $\eta_2$ by construction, thus $\zeta(\lambda)^{\frac{1}{4}\sigma_3}$ has at most a fourth-root singularity at the point $\lambda = \eta_2$.
Also the outer parametrix $P^{\infty}(\lambda)$ has at most a fourth-root singularity near $\eta_2$ and consequently all the entries of $A(\lambda)$ have at most a square root singularity at $\lambda = \eta_2$. 

On the other hand $A(\lambda)$ is analytic in $B^{(\eta_2)}_\rho \backslash \{\eta_2\}$, therefore the point $\lambda = \eta_2$ is a removable singularity and $A(\lambda)$ is indeed analytic everywhere in $B^{(\eta_2)}_\rho$.
\end{proof}

\subsubsection{Local parametrix near other branch points.}
\label{sec:3.6}
The construction of the parametrix in a vicinity $B^{(\eta_1)}_\rho$ of $\eta_{1}$ is quite similar, and has also been carried out in \cite[Section 6]{KMcLVAV}, so we will not present the formula here. 

For the parametrices near $- \eta_{2}$ and $- \eta_{1}$, it will prove convenient to construct them {\it explicitly} via the underlying $\lambda \mapsto - \lambda$ symmetry, as follows:
\begin{eqnarray}
\label{eq:LocalDef2}
&&P^{-\eta_{2}} := \pmtwo{0}{1}{1}{0} P^{\eta_{2}}(-\lambda) \pmtwo{0}{1}{1}{0}, \\
\label{eq:LocalDef1}
&&
P^{-\eta_{1}} := \pmtwo{0}{1}{1}{0} P^{\eta_{1}}(-\lambda) \pmtwo{0}{1}{1}{0}.
\end{eqnarray}

First, the reader may verify that, if $P^{\eta_{j}}$ satisfies the appropriate jump relationships along the contours within the disc centered at $\eta_{j}$, then $P^{-\eta_{j}}$ satisfies the appropriate jump relationships along the contours within the disc (of the same radius) centered at $- \eta_{j}$.  Along the way, the following symmetry relations are needed (and are easy to establish) for $ \lb\in \C\backslash(-\eta_2,\eta_2)$,
\begin{eqnarray}
&&\hat{r}(-\lambda) = \hat{r}(\lambda), \\
&&f^{2}(-\lambda) = f^{-2}(-\lambda) ,\\
&& g(-\lambda) = - g(\lambda) \ .
\end{eqnarray}
Moreover, since $P^{\eta_{j}}$ has been constructed to satisfy 
\begin{eqnarray*}
P^{\eta_{j}}(\lambda)P^{\infty}(\lambda)^{-1} = I + \mathcal{O} \left( \frac{1}{x} \right),  \quad  \mbox{as $x\to-\infty$},
\end{eqnarray*}
for $\lambda$ on the boundary of $B_{\rho}^{(\eta_{j})}$ (the small disc of radius $\rho$ centered at $\eta_{j}$), it follows that $P^{-\eta_{j}}$ satisfies
\begin{eqnarray*}
P^{-\eta_{j}}(\lambda)P^{\infty}(\lambda)^{-1} = I + \mathcal{O} \left( \frac{1}{x} \right),  \quad  \mbox{as $x\to-\infty$},
\end{eqnarray*}
for $\lambda$ on the boundary of an analogous small disc  $B^{(-\eta_j)}_{\rho}$ centered at $-\eta_{j}$.  

\subsection{Small norm argument and determination of $u(x,0)$ for large  negative  $x$}
\label{smaxinfty}


Define the error vector 
\begin{gather}
	\mathcal{E}(\lb) = S(\lb) \left( P(\lb) \right)^{-1}
	\label{error_def}
\shortintertext{where the global parametrix $P(\lb)$ is defined by}
	P(\lb) = \begin{cases}
		P^{\infty}(\lb)  & \lb\in\mathbb{C}\backslash\cup_{j=1,2}B^{(\pm \eta_j)}_{\rho}  \\
		P^{\eta_{2}}(\lb) & \lb\in B^{(\eta_2)}_{\rho} \\
		P^{\eta_{1}} (\lb) & \lb\in B^{(\eta_1)}_{\rho} \\
		P^{-\eta_{1}}(\lb) & \lb\in B^{(-\eta_1)}_{\rho} \\
		P^{-\eta_{2}}(\lb) & \lb\in B^{(-\eta_2)}_{\rho} \ .
	\end{cases}
	\label{global_p}
\end{gather}
 \begin{figure}[h!]
\centering
\scalebox{.8}{
\begin{tikzpicture}[>=stealth]
\path (0,0) coordinate (O);


%
%

\draw[->- = .7,thick] (2,0) .. controls + (60:2.cm) and + (120:2.cm) .. (7.,0);
\draw[->- = .7,thick] (2,-0) .. controls + (-60:2.cm) and + (-120:2.cm) .. (7,-0);

\draw[->- = .7,thick] (-2,0) .. controls + (120:2.cm) and + (60:2.cm) .. (-7,0);
\draw[->- = .7,thick] (-2,-0) .. controls + (-120:2.cm) and + (-60:2.cm) .. (-7,-0);

\draw[->-=.7,thick, fill=white] (2,0) circle [radius=1.];
\node  at (2,0.4) { $B^{(\eta_1)}_{\rho}$};
\node  at (2,-0.2) {$\eta_1$};
\node  at (2,0) {$\bullet$};
\draw[->-=.7,thick, fill=white] (7,0) circle [radius=1.0];
\node  at (7,0.4) { $B^{(\eta_2)}_{\rho}$};
\node  at (7,-0.2) {$\eta_2$};
\node  at (7,0) {$\bullet$};
\draw[->-=.7,thick, fill=white] (-2,0) circle [radius=1.];
\node  at (-2,0.4) { $B^{(-\eta_1)}_{\rho}$};
\node  at (-2,-0.2) {$-\eta_1$};
\node  at (-2,0) {$\bullet$};

\draw[->-=.7,thick, fill=white] (-7,0) circle [radius=1];
\node  at (-7,0.4) { $B^{(-\eta_2)}_{\rho}$};
\node  at (-7,-0.2) {$-\eta_2$};
\node  at (-7,0) {$\bullet$};

\node  at (4.5,0) { $ \begin{bmatrix}1& \mathcal{O}(e^{-cx}) \\ 0 & 1 \end{bmatrix}$};
\draw[ ->, dotted, thick] (4.5,.5) -- (4.5,1.2);
\draw[ ->, dotted, thick] (4.5,-.5) -- (4.5,-1.2);

\node  at (-4.5,0) { $ \begin{bmatrix}1& 0 \\ \mathcal{O}(e^{-cx})  & 1 \end{bmatrix}$};
\draw[ ->, dotted, thick] (-4.5,.5) -- (-4.5,1.2);
\draw[ ->, dotted, thick] (-4.5,-.5) -- (-4.5,-1.2);
\node [below] at (4.6,-1.4) {${\mathcal C}_1$};
\node [below] at (4.6, 1.8) {${\mathcal C}_1$};
\node [below] at (-4.6,-1.4) {${\mathcal C}_2$};
\node [below] at (-4.6, 1.8) {${\mathcal C}_2$};
\node [above] at (0,2) { $ \begin{bmatrix}1 + \mathcal{O}(x^{-1}) &  \mathcal{O}(x^{-1})  \\  \mathcal{O}(x^{-1})  & 1+\mathcal{O}(x^{-1})  \end{bmatrix}$};
\draw[ ->, dotted, thick] (0,2) -- (-1,.5);
\draw[ ->, dotted, thick] (0,2) -- (1,.5);
\draw[ ->, dotted, thick] (2,2.6) -- (7,2.6) -- (7,1.1);
\draw[ ->, dotted, thick] (-2,2.6) -- (-7,2.6) -- (-7,1.1);


\end{tikzpicture}
}
\caption{The Riemann--Hilbert problem for the remainder $ \mathcal E $.}
\label{RHPremainder}
\end{figure}
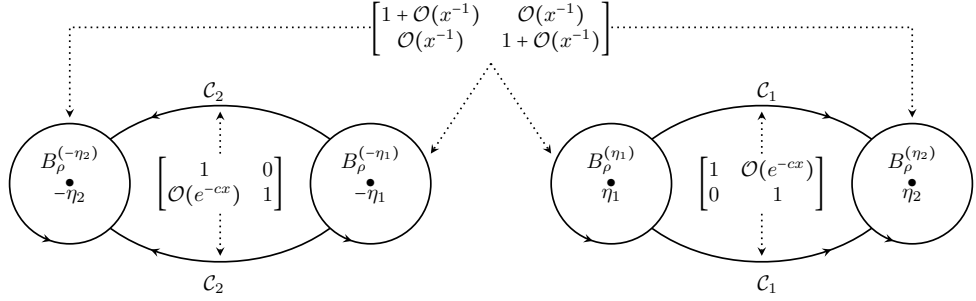

Then across any contour where either $S$ is non-analytic or any boundary in the definition of $P$ the
matrix $\mathcal{E}$ has a jump given by
\begin{gather} 
 	\mathcal E_{+} (\lb) =  \mathcal E_{-} (\lb) 	V_{\mathcal E}(\lb)
 \shortintertext{with}
   \begin{multlined}[.8\textwidth]	
 	V_{\mathcal E}(\lb) = \left( \mathcal E_{-} (\lb) \right)^{-1} \mathcal E_{+}  (\lb) 
		= P_-(\lb) \left(S_-(\lb) \right)^{-1} S_+(\lb) \left( P_+(\lb) \right)^{-1}  \\
		= P_-(\lb) V_S(\lb) \left( V_P(\lb) \right)^{-1} \left( P_-(\lb)  \right)^{-1} \, 
  \end{multlined}
\label{VE}
\end{gather}
where the jump matrix $V_S$ is as defined in Figure~\ref{openinglenses}, $V_P$ is the jump of $P$, and both jumps are understood to be the identity matrix anywhere $S$ or $P$ is analytic respectively.
Observe that both within the discs  $B^{(\pm\eta_j)}_{\rho}$,  and across any component of $(-\eta_{2}, \eta_{2})$ outside the discs  $B^{(\pm \eta_j)}_\rho$, $j=1,2$, the quantities $V_{S}$ and $V_P$ coincide, and hence $\mathcal E$ has no jump across those contours.
Across the lens boundaries (outside the discs) we have $V_{P} = I$, and hence 
\begin{align}
\label{VE1}
&V_{\mathcal E}(\lb)=\left(P^{\infty}(\lb) \right) V_{S}(\lb) \left( P^{\infty} (\lb)\right)^{-1} = \left( I + \mathcal{O} \left( e^{- c x } \right) \right), 
&&\lb\in {\mathcal C}_j,\, j=1,2,
\end{align}
while across the circles centered at $\pm \eta_{j}$ (which we have chosen to orient counter-clockwise), we have
\begin{align}
\label{VE2}
&V_{\mathcal E}(\lb)=\left(P^{\infty}(\lambda) \right)^{-1} P^{\pm \eta_{j}}(\lambda) =  \left( I + \mathcal{O} \left( x^{-1} \right) \right),
&& \lb\in\partial B^{(\pm \eta_j)}_\rho,\;j=1,2.
\end{align}

Finally, since $P = P^\infty(\lb)$ near $\lb =0$ and $P^\infty(\lb)$ is singular there, we need to check the behaviour of $\mathcal{E}$  at $\lb = 0$. 
\begin{lemma}
\label{lemma4.1}
The error vector  $\mathcal{E}$ defined by \eqref{error_def}-\eqref{global_p} 
is regular at $\lb=0$.
\end{lemma}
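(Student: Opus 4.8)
The plan is to localize near $\lambda=0$, where $P(\lambda)=P^{\infty}(\lambda)$, so that $\mathcal{E}(\lambda)=S(\lambda)\bigl(P^{\infty}(\lambda)\bigr)^{-1}$ there, and to show that the a priori simple pole of $\mathcal{E}$ at the origin has vanishing residue. First I would record two facts. (i) $\mathcal{E}$ has no jump in a neighbourhood of $0$: the only part of the contour meeting the origin is the segment $[-\eta_{1},\eta_{1}]$, and on it both $S$ and $P^{\infty}$ carry the same jump $\pmtwo{e^{x\Omega+\Delta}}{0}{0}{e^{-x\Omega-\Delta}}$ (compare Figure~\ref{openinglenses} with the RH problem for $P^{\infty}$ in Section~\ref{Outer}), so $\mathcal{E}_{+}=S_{+}P_{+}^{-1}=S_{-}P_{-}^{-1}=\mathcal{E}_{-}$. (ii) $\mathcal{E}$ has at most a simple pole at $0$: $S$ is bounded near $0$ (its only singularities are the logarithmic ones at $\pm\eta_{1},\pm\eta_{2}$), and since $\det P^{\infty}\equiv 1$ by \eqref{det_Pinfinity} we have $\bigl(P^{\infty}\bigr)^{-1}=\pmtwo{P^{\infty}_{22}}{-P^{\infty}_{12}}{-P^{\infty}_{21}}{P^{\infty}_{11}}$, whose entries inherit from \eqref{P_infinity} at most a simple pole at $0$ (the only negative power of $\lambda$ is the explicit $1/\lambda$, while $p$, $S^{\infty}_{j}$, $\nabla_{\Omega}S^{\infty}_{j}$ are bounded at $0$, as is checked when proving \eqref{det_Pinfinity}). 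Hence $\mathcal{E}(\lambda)=\nu/\lambda+\mathcal{O}(1)$ near $0$ for a constant row vector $\nu=[\nu_{1}\ \nu_{2}]$, and the goal reduces to showing $\nu=0$.

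The second step pins the direction of $\nu$ through a purely algebraic identity coming from the explicit shape of $P^{\infty}$. Writing $P^{\infty}=\pmtwo{a}{b}{c}{d}$, formula \eqref{P_infinity} shows that its columns sum to $a+c=S^{\infty}_{1}$ and $b+d=S^{\infty}_{2}$; therefore
\[
\mathcal{E}_{1}-\mathcal{E}_{2}=(S_{1}d-S_{2}c)-(-S_{1}b+S_{2}a)=S_{1}S^{\infty}_{2}-S_{2}S^{\infty}_{1},
\]
which is regular at $\lambda=0$ since $S$ and $S^{\infty}$ are both bounded there (the boundedness of $S^{\infty}$ at $0$ is again part of the analysis proving \eqref{det_Pinfinity}). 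Hence $\nu_{1}=\nu_{2}$, i.e. $\nu=\mu\,[1\ 1]$ for a scalar $\mu$.

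The third step is the $\lambda\mapsto-\lambda$ symmetry. From \eqref{symY}, the symmetries of $g$ and $f$ under $\lambda\mapsto-\lambda$ (as in Section~\ref{sec:3.6}), and $S=T=Ye^{xg\sigma_{3}}f^{\sigma_{3}}$ near the origin, one obtains $S(-\lambda)=S(\lambda)\pmtwo{0}{1}{1}{0}$; combining with \eqref{eq:PInfSymm} and $P=P^{\infty}$ near $0$ gives
\[
\mathcal{E}(-\lambda)=S(-\lambda)P(-\lambda)^{-1}=S(\lambda)\pmtwo{0}{1}{1}{0}\pmtwo{0}{1}{1}{0}P(\lambda)^{-1}\pmtwo{0}{1}{1}{0}=\mathcal{E}(\lambda)\pmtwo{0}{1}{1}{0}.
\]
Matching principal parts at $0$ yields $-\nu=\nu\pmtwo{0}{1}{1}{0}$; since $[1\ 1]\pmtwo{0}{1}{1}{0}=[1\ 1]$ and $\nu=\mu[1\ 1]$, this forces $-\mu=\mu$, so $\mu=0$, $\nu=0$, and $\mathcal{E}$ is regular at $\lambda=0$.

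I expect the delicate point to be the second step: it uses the precise algebraic structure of $P^{\infty}$, not merely that $P^{\infty}$ solves the model RH problem, and it relies on $S^{\infty}$ --- a ratio of $\vartheta_{3}$ functions --- being genuinely finite at $\lambda=0$, which holds only because $2w_{\pm}(0)-\tfrac12\equiv\pm\tau$ avoids the zero set of $\vartheta_{3}(\,\cdot\,;2\tau)$, a fact already used in proving \eqref{det_Pinfinity}. I would also stress that neither ingredient alone suffices: the symmetry by itself only gives $\nu\parallel[1\ \,{-1}]$, and the column-sum identity by itself only gives $\nu\parallel[1\ 1]$; it is their combination that forces $\nu=0$.
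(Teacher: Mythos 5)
Your proof is correct, and it reaches the conclusion by a genuinely different route from the paper's. The paper computes the residue of $S(\lb)\le(P^{\infty}(\lb)\ri)^{-1}$ at $\lb=0$ head-on: it expands $Y e^{xg\sigma_3}f^{\sigma_3}(P^{\infty})^{-1}$ using $g_\pm(0)=\pm\Omega/2$, $f_\pm(0)=e^{\pm\Delta/2}$, $p_\pm(0)=\mp\Omega/2$ and the explicit theta-function values \eqref{exp_S}--\eqref{exp_Sx}, and verifies that the resulting scalar combination vanishes identically. The first half of that computation --- the matrix in \eqref{eql1} having equal columns --- carries the same content as your column-sum observation $\nu_1=\nu_2$ (both rest on $a+c=S^{\infty}_1$, $b+d=S^{\infty}_2$ together with $\det P^{\infty}\equiv 1$ from \eqref{det_Pinfinity}). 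Where you diverge is the final cancellation: instead of evaluating the $\vartheta_3$ expressions at $\pm\tau$ and checking that $p_+(0)S^{\infty}_{2+}(0)-\nabla_{\Omega}S^{\infty}_{2+}(0)$ cancels against its partner, you invoke the local parity $\mathcal{E}(-\lb)=\mathcal{E}(\lb)\pmtwo{0}{1}{1}{0}$, derived directly near the origin from \eqref{symY}, $g(-\lb)=-g(\lb)$, $f(-\lb)=f(\lb)^{-1}$ and \eqref{eq:PInfSymm} --- so there is no circularity with the global symmetry \eqref{eq:Esymm}, which the paper only establishes afterwards via the small-norm argument. This forces $\nu$ parallel to $\begin{bmatrix}1&-1\end{bmatrix}$, and combined with $\nu$ parallel to $\begin{bmatrix}1&1\end{bmatrix}$ gives $\nu=0$. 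Your version buys a computation-free argument that makes transparent \emph{why} the residue vanishes (two incompatible linear constraints on the residue vector), at the price of using the full symmetry of $Y$ rather than only $Y_1(0)=Y_2(0)$; the paper's version is self-contained at the level of the explicit formulas and doubles as a consistency check on \eqref{exp_S}--\eqref{exp_Sx}. One small point worth making explicit in your write-up: the statement $f^{2}(-\lambda)=f^{-2}(\lambda)$ in Section \ref{sec:3.6} only determines $f(-\lb)=\pm f(\lb)^{-1}$; the sign is fixed by the normalization $f\to 1$ at infinity, and you need the $+$ sign for the parity of $T$ to come out as claimed.
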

\begin{proof}
Near $\lb=0$, $\mathcal{E}(\lb) = S(\lb) \left( P^{\infty}(\lb) \right)^{-1}$.  We have  $S(\lb)=Y(\lb)e^{xg(\lb)\sigma_3}f(\lb)^{\sigma_3}$   with the functions $g(\lb)$  and $f(\lb)$ defined in \eqref{ggg} and \eqref{f} respectively and where $Y(\lb)$ is the solution of the Riemann--Hilbert problem defined by \eqref{jumpY}-\eqref{symY} whose existence is established in the Appendix.
We  need to prove that
\[
Y(\lb)e^{xg(\lb)\sigma_3}f(\lb)^{\sigma_3}(P^{\infty}(\lb))^{-1}
\]
is regular at $\lb=0$ where  $Y(\lb)$ satisfies the symmetry \eqref{symY} so that $Y_1(0)=Y_2(0)$.
We observe that    $g(\lb)=\lb-p(\lb)$  so that by \eqref{intp} 
$$g_\pm(0)=\pm \frac{\Omega}{2}.$$ We conclude that 
\[
e^{xg_{\pm}(\lb)\sigma_3}=e^{\pm\frac{x\Omega}{2}\sigma_3}(1+O(\lb))\quad \mbox{as $\lb\to 0$}.
\]
In a similar way it can be proved that  that $f_{\pm}(\lb)=e^{\pm\frac{\Delta}{2}}(1+O(\lb))$ as $\lb\to0$.
Using  the above expansion we have that as $\lb\to 0_+$ 
\begin{equation}
\label{eql1}
\begin{split}
&Y_+(\lb)e^{xg_+(\lb)\sigma_3}f_+(\lb)^{\sigma_3}(P_+^{\infty}(\lb))^{-1}=-\dfrac{Y_1(0)}{2\lb} \begin{bmatrix} e^{\frac{x\Omega+\Delta}{2}} & e^{-\frac{x\Omega+\Delta}{2}} \end{bmatrix} \times\\
&\quad \times\left(
\begin{bmatrix}
p_+(0)S^{\infty}_{2+}(0)-\nabla_{\Omega}S^{\infty}_{2+}(0)
&p_+(0)S^{\infty}_{2+}(0)-\nabla_{\Omega}S^{\infty}_{2+}(0)\\
p_+(0)S^{\infty}_{1+}(0)+\nabla_{\Omega}S^{\infty}_{1+}(0)&
p_+(0)S^{\infty}_{1+}(0)+\nabla_{\Omega}S^{\infty}_{1+}(0)
\end{bmatrix}
 +O(\lb)\right)\,.
 \end{split}
\end{equation}
Using  the relations \eqref{exp_S} and \eqref{exp_Sx} we  obtain 
\[
Y_+(\lb)e^{xg_+(\lb)\sigma_3}f_+(\lb)^{\sigma_3}(P_+^{\infty}(\lb))^{-1}=O(1),\quad \mbox{as $\lb\to0_+ $}.
\]
In a similar way it can be verified  the   regular behaviour at $0_-$ which concludes the  proof of the lemma.
\end{proof}

Let $\Sigma_\mathcal{E}$ be the system of contours shown in Figure~\ref{RHPremainder}. The above arguments show that the error vector $\mathcal{E}$ satisfies the following Riemann--Hilbert problem
\begin{align*}
	{\mathcal E}_+(\lb)={\mathcal E}_-(\lb) V_\mathcal{E}(\lb)
	\qquad
	\lb \in \Sigma_\mathcal{E}
\end{align*}	
and  as $\lb\to\infty$
\begin{eqnarray}
\mathcal E(\lb) = \begin{bmatrix}1& 1 \end{bmatrix} + \mathcal{O} \left( \lambda^{-1} \right) \ .
\end{eqnarray}
where the jump matrix $V_{\mathcal{E}}$ satisfies
 \begin{gather}
 V_\mathcal{E} (\lambda) = \begin{cases}
	I + \mathcal{O} \le( e^{-c\edit{|x|}} \ri)  & \lb \in \mathcal{C}_j, \ j=1,2, \\
	I + \mathcal{O}\le( |x|^{-1} \ri)  & \lb \in \partial B_\rho^{\pm \eta_j}, \ j=1,2. 
 \end{cases} 
 \end{gather}

Therefore, by a standard small norm argument (see, for example \cite[Section 5.1.3]{smallnormRH}) there exists a unique solution $\mathcal E$, which possesses an asymptotic expansion for large \edit{negative} $x$ and large $\lambda$ of the form:  
\begin{eqnarray}
\label{eq:errorexpand}
\mathcal E (\lambda) = \begin{bmatrix} 1 & 1 \end{bmatrix} \  + \ \frac{\mathcal{E}_{1}(x)}{x \lambda} \ + \mathcal{O}\left( \frac{1}{\lambda^{2}} \right) \ , 
\end{eqnarray}
where $\mathcal{E}_{1}(x)$ possesses bounded derivatives in $x$.  

We note in passing that the construction of a matrix-valued global approximation is very useful, in that we arrive directly at a small-norm Riemann--Hilbert problem.  

We also notice that the solution $\mathcal E$ which we have constructed obeys the symmetry
\begin{eqnarray}
\label{eq:Esymm}
\mathcal E(-\lambda) = \mathcal E(\lambda) \pmtwo{0}{1}{1}{0} \ .
\end{eqnarray}
Indeed, the jump matrices $V_{\mathcal E}$ for $\mathcal E$ all satisfy the symmetry
\begin{eqnarray}
\label{eq:Vsymm}
V_{\mathcal E}(-\lambda) = \pmtwo{0}{1}{1}{0}
V_{\mathcal E}(\lambda)
\pmtwo{0}{1}{1}{0}\ 
\end{eqnarray}
where $V_{\mathcal{E}}$ is given in \eqref{VE1} and \eqref{VE2}.
Properly, to see this, one must ensure that the contours for the Riemann-Hilbert problem for $\mathcal E$ are symmetric with respect to the mapping $\lambda \mapsto - \lambda$, and then verify that $V_{\mathcal E}$ satisfies (\ref{eq:Vsymm}).  We have already specified in Subsection \ref{sec:3.6} that the circular contours should possess this symmetry, and it is clear that the lens boundaries may be chosen to satisfy this symmetry. 

The verification of (\ref{eq:Vsymm}) for $\lambda$ in any of the four circles follows from the definitions (\ref{eq:LocalDef2}) and (\ref{eq:LocalDef1}).  The verification of (\ref{eq:Vsymm}) for $\lambda$ in any of the lens boundaries follows by inspection of the jump matrices for $S$ (only those defined on the lens boundaries) as described in  \figurename \ \ref{openinglenses}, and using (\ref{eq:PInfSymm}).  The fact that (\ref{eq:Vsymm}) implies (\ref{eq:Esymm}) is a straighforward exercise from the theory of Riemann-Hilbert problems.

Because $\mathcal E$ is analytic in a vicinity of $\lambda=0$, the symmetry relation (\ref{eq:Esymm}) implies that $\mathcal{E}(\lambda)$ has the expansion
\begin{eqnarray}
\label{eq:ELocal0}
\mathcal{E}(\lambda) = \hat{c}_{0} \begin{bmatrix} 1 & 1\end{bmatrix} + \lb \hat{c}_{1} \begin{bmatrix} 1 & -1\end{bmatrix} \ , \mbox{ for } \lambda \ \mbox{ near } \ 0 , 
\end{eqnarray}
for some constants $\hat{c}_{0}$ and $\hat{c}_{1}$.


Taking into account all the transformations we performed, we are now able to explicitly solve the original Riemann--Hilbert  problem $Y$ in the large negative $x$ regime:
\begin{equation}
\label{E_infinity}
\begin{split}
Y(\lambda) = T(\lambda) e^{-xg(\lambda) \sigma_3}f(\lb)^{-\sigma_3} = S(\lambda) e^{-xg(\lambda) \sigma_3}f(\lb)^{-\sigma_3}  \\
=\mathcal E(\lb) P(\lambda) e^{-xg(\lambda) \sigma_3}f(\lb)^{-\sigma_3} =\le( \begin{bmatrix} 1& 1\end{bmatrix} + \frac{\mathcal{E}_{1}(x)}{x \lambda } \ + \mathcal{O}\left( \frac{1}{\lambda^{2}} \right)  \ri) P(\lambda) e^{-xg(\lambda) \sigma_3}f(\lb)^{-\sigma_3}  \ ,
\end{split}
\end{equation}
where $P(\lb)$ is the global parametrix defined by \eqref{global_p}.

In particular, for $\lambda$ near $0$, $\mathcal{E}(\lambda) P(\lambda)$ appearing in (\ref{E_infinity}) is actually $\mathcal{E} (\lambda) P^{\infty}(\lambda)$.  The reader will recall that $P^{\infty}$ has a pole at $\lambda = 0$ (see (\ref{P_infinity})).  However, because of the behavior of $\mathcal{E}(\lambda)$ for $\lambda$ near $0$ shown in (\ref{eq:ELocal0}), the product $\mathcal{E}(\lambda) P^{\infty}(\lambda)$ has no pole at $\lambda=0$.

We recall that the potential $u(x)$ can be calculated from the solution $Y(\lb)$ as  
\begin{gather}
\label{uexp0}
u(x) = 2\frac{\d}{\d x} \le[ \lim_{\lambda\rightarrow\infty} \lambda(Y_1(\lambda;x) -1) \ri] ,
\end{gather}
where $Y_1(\lambda; x)$ is the first entry of the vector $Y$.
\begin{thm}
\label{thm:3.4}
In the regime $x \to - \infty$,  the potential $u(x)$ has the following asymptotic behaviour
\begin{gather}
\label{u_theo}
u(x) = \eta_2^2-\eta_1^2  -2\eta_2^2\dfrac{E(m)}{K(m)} -2\dfrac{\partial^2 }{\partial x^2}\log \vartheta_3 \le(\frac{\eta_2}{2K(m)}(x+\phi);2\tau \ri)+ \mathcal{O}\le(|x|^{-1}\ri)
\end{gather}
where $E(m)$ and $K(m)$ are the complete elliptic integrals of the first and second kind respectively with modulus $m=\eta_1/\eta_2$,  $\phi$ is given by
\begin{equation}
\label{phi}
\phi=\int_{\eta_1}^{\eta_2}\dfrac{\log r(\zeta)}{R_+(\zeta)}\dfrac{\d \zeta}{\pi i}\in\R\,
\end{equation}
and $2\tau=i\frac{K(m')}{K(m)}$, $m'=\sqrt{1-m^2}$.  The formula (\ref{u_theo}) can be written in the equivalent form
\begin{equation}
\label{udn}
u(x)=\eta_2^2-\eta_1^2-2\eta_2^2\Jac^2\le( \eta_2(x+\phi) +K(m)\le| \, m\ri. \ri)+ \mathcal{O}\le(|x|^{-1}\ri)
\end{equation}
where $\Jac\le(z \le| \, m\ri. \ri)$ is the Jacobi elliptic function of modulus $m$.

\end{thm}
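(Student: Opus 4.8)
\emph{Proof strategy.} The plan is to read off $u(x)$ from the representation \eqref{E_infinity} of $Y$ by expanding its right-hand side as $\lambda\to\infty$ and inserting the result into \eqref{uexp0}. For $\lambda$ in a neighbourhood of $\infty$ the global parametrix in \eqref{global_p} is $P(\lambda)=P^{\infty}(\lambda)$, and a one-line computation from the definition \eqref{P_infinity} gives the algebraic identity $\begin{bmatrix}1&1\end{bmatrix}P^{\infty}(\lambda)=\begin{bmatrix}S^{\infty}_1(\lambda)&S^{\infty}_2(\lambda)\end{bmatrix}$, since the $p(\lambda)/\lambda$ and $\nabla_{\Omega}$ corrections cancel in the row sum. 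Together with \eqref{eq:errorexpand} this yields
\[
Y_1(\lambda;x)=\bigl[\mathcal E(\lambda)P^{\infty}(\lambda)\bigr]_1\,e^{-xg(\lambda)}f(\lambda)^{-1}=\Bigl(S^{\infty}_1(\lambda)+\frac{(\mathcal E_1(x))_1}{x\lambda}+\mathcal O(\lambda^{-2})\Bigr)e^{-xg(\lambda)}f(\lambda)^{-1},
\]
so the entire computation reduces to the $\mathcal O(\lambda^{-1})$ coefficients of $S^{\infty}_1$, $g$ and $f$ at $\lambda=\infty$.

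First I would collect the elementary expansions. From \eqref{ggg} and \eqref{kappag'}, $g(\lambda)=g_1/\lambda+\mathcal O(\lambda^{-2})$ with $g_1=\kappa+\tfrac12(\eta_1^2+\eta_2^2)=\eta_2^2\tfrac{E(m)}{K(m)}-\tfrac12(\eta_2^2-\eta_1^2)$; from \eqref{f}, $f(\lambda)=1+f_1/\lambda+\mathcal O(\lambda^{-2})$ with $f_1$ independent of $x$; and $\gamma(\lambda)=1+\mathcal O(\lambda^{-2})$. Since $\omega=\tfrac{\Omega}{R(\lambda)}\tfrac{\d\lambda}{4\pi i}\sim\tfrac{\Omega}{4\pi i}\tfrac{\d\lambda}{\lambda^2}$ near $\infty$, integrating from $\eta_2$ and using $w(+\infty)=-\tfrac14$ from \eqref{w_eval} together with $\Omega$ from \eqref{Omega} gives $w(\lambda)=-\tfrac14+\tfrac{\eta_2}{4K(m)}\lambda^{-1}+\mathcal O(\lambda^{-2})$. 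Substituting $2w(\lambda)=-\tfrac12+\tfrac{\eta_2}{2K(m)}\lambda^{-1}+\cdots$ into \eqref{Sinfty_sol}, using the quasi-periodicity \eqref{periods} to absorb the integer part $-1$ of the two theta arguments, and Taylor-expanding in $1/\lambda$ (recalling $\vartheta_3$ is even, so $\vartheta_3'(0;2\tau)=0$), one finds
\[
S^{\infty}_1(\lambda)=1+\frac{1}{\lambda}\,\frac{\eta_2}{2K(m)}\,\frac{\vartheta_3'(a;2\tau)}{\vartheta_3(a;2\tau)}+\mathcal O(\lambda^{-2}),\qquad a:=\frac{x\Omega+\Delta}{2\pi i}.
\]
Using \eqref{Omega}, \eqref{Delta} and the definition \eqref{phi} of $\phi$, one checks $a=-\tfrac{\eta_2}{2K(m)}(x+\phi)$; hence, by the chain rule and evenness of $\vartheta_3$, the $\mathcal O(\lambda^{-1})$ coefficient of $S^{\infty}_1$ equals $-\tfrac{\d}{\d x}\log\vartheta_3\bigl(\tfrac{\eta_2}{2K(m)}(x+\phi);2\tau\bigr)$.

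Assembling the three expansions gives
\[
\lim_{\lambda\to\infty}\lambda\bigl(Y_1(\lambda;x)-1\bigr)=-\frac{\d}{\d x}\log\vartheta_3\Bigl(\frac{\eta_2}{2K(m)}(x+\phi);2\tau\Bigr)-xg_1-f_1+\frac{(\mathcal E_1(x))_1}{x}.
\]
Applying $2\tfrac{\d}{\d x}$ as in \eqref{uexp0}: the constant $f_1$ drops out; $-xg_1$ contributes $-2g_1=\eta_2^2-\eta_1^2-2\eta_2^2\tfrac{E(m)}{K(m)}$; the theta term contributes $-2\partial_x^2\log\vartheta_3\bigl(\tfrac{\eta_2}{2K(m)}(x+\phi);2\tau\bigr)$; and since $\mathcal E_1(x)$ has bounded $x$-derivatives, $2\tfrac{\d}{\d x}\bigl(\tfrac{(\mathcal E_1)_1}{x}\bigr)=\mathcal O(|x|^{-1})$. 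This is precisely \eqref{u_theo}. The equivalent form \eqref{udn} then follows from the standard representation of $\Jac^2$ through a second logarithmic derivative of a theta function — the identity used to pass from \eqref{tw} to \eqref{elliptic} — together with $\Jac(0|m)=1$, $\Jac(K(m)|m)=\sqrt{1-m^2}$, and the location $\tfrac12+\tau$ of the zero of $\vartheta_3(\cdot;2\tau)$, which accounts for the additive shift $+K(m)$ inside $\Jac^2$.

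The main obstacle is the bookkeeping in the $S^{\infty}$ expansion: because $2w(+\infty)=-\tfrac12$ is exactly a half-period of $\vartheta_3(\cdot;2\tau)$ and the individual quotients $\psi_1,\psi_2$ carry square-root branch points at $\pm\eta_1$, one must apply \eqref{periods} carefully to verify that $S^{\infty}_1$ is regular and normalized as $1+\mathcal O(\lambda^{-1})$ at $\infty$ with exactly the stated coefficient. A second, more routine point is to justify that $\tfrac{\d}{\d x}$ commutes with the $\lambda\to\infty$ limit and with the error estimate; this rests on the jump matrices $V_{\mathcal E}$ in \eqref{VE1}--\eqref{VE2} depending smoothly on $x$ with $x$-derivatives of the same order as $V_{\mathcal E}-I$, so that the small-norm solution $\mathcal E$, and in particular $\mathcal E_1(x)$, inherit uniformly bounded derivatives in $x$.
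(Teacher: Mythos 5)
Your proof is correct and follows essentially the same route as the paper: expand $Y_1=\bigl[\mathcal{E}(\lambda)P^{\infty}(\lambda)\bigr]_1 e^{-xg(\lambda)}f(\lambda)^{-1}$ at $\lambda=\infty$ via the row-sum identity $\begin{bmatrix}1&1\end{bmatrix}P^{\infty}=S^{\infty}$, collect the $\mathcal{O}(\lambda^{-1})$ coefficients of $S^{\infty}_1$, $g$ and $f$, apply $2\partial_x$, and pass to the $\Jac^2$ form through the second logarithmic derivative identity \eqref{theta-jacobi}. The only cosmetic difference is that you make explicit the cancellation of the $p(\lambda)/\lambda$ and $\nabla_{\Omega}$ terms in the row sum, which the paper leaves implicit.
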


\begin{proof}
We are interested in the first entry of the vector $Y(\lambda)$ (for $\lambda$ large), and we have, from (\ref{E_infinity}), 
\begin{eqnarray*}
Y(\lb)&=\le( \begin{bmatrix} 1& 1\end{bmatrix} +\displaystyle \frac{\mathcal{E}_{1}(x)}{x \lambda } \ + \mathcal{O}\left( \frac{1}{\lambda^{2}} \right) \ri) P^{\infty}(\lb)e^{-xg(\lambda) \sigma_3}f(\lb)^{-\sigma_3}\\
& =\left(S^{\infty}(\lb)+ \displaystyle \frac{\mathcal{E}_{1}(x)}{x \lambda } \ + \mathcal{O}\left( \frac{1}{\lambda^{2}} \right) \right)e^{-xg(\lambda) \sigma_3}f(\lb)^{-\sigma_3}\,.
\end{eqnarray*}
Hence
\begin{gather}
Y_1(\lambda) =\le[ S^{\infty}_{1}(\lb) + \frac{\left( \mathcal{E}_{1}\right)_{1}(x)}{x \lambda } \ + \mathcal{O}\left( \frac{1}{\lambda^{2}} \right) \ri] \frac{e^{-xg(\lambda) }}{f(\lb)} \, .
\end{gather}
From the expression  (\ref{ggg}) for the $g$ function, we have
\begin{gather}
e^{-xg(\lambda) } =  1 - \frac{x}{\lambda} \le[ \frac{\eta_1^2+\eta_2^2}{2}  +\eta_2^2\left(\dfrac{E(m)}{K(m)}-1\right) \ri] + \mathcal{O}\le(\frac{1}{\lambda^2}\ri) \, .
\end{gather}
From the formula of $f(\lb)$  in (\ref{f}) we have 
\[
f(\lb)=1+\frac{f_1}{\lb}+\mathcal{O}\le(\frac{1}{\lambda^2}\ri)\, ,
\]
where $f_1$ is  independent of $x$.
\edit{
Starting from the vector $S^{\infty}(\lb)$ in (\ref{Sinfty_sol}) we observe that $\gamma(\lambda) = 1 + \mathcal{O} \le(\lambda^{-2} \ri)$, using \eqref{omegadef} and \eqref{Abeldef} we have
\[
	2w(\lambda) = -\frac{1}{2} - \dfrac{1}{\lb}\dfrac{\Omega}{2\pi i}+\mathcal{O}\le(\frac{1}{\lambda^2}\ri)\, ,\;\quad \frac{\Omega}{2\pi i}=-\dfrac{\eta_2}{2K(m)}
\]
so expanding \eqref{Sinfty_sol} gives
\begin{align*}
S^{\infty}_{1}(\lambda) 
&= 1 - \frac{1}{\lambda} \frac{\Omega}{2\pi i} \left[ \frac{ \vartheta_3' \le( \frac{x\Omega+\Delta}{2\pi i}; 2 \tau \ri) }{ \vartheta_3 \le( \frac{x\Omega+\Delta}{2\pi i}; 2 \tau \ri) } - \frac{ \vartheta_3'(0; 2\tau)}{\vartheta_3(0;2\tau)}
 \right] + \le(\frac{1}{\lambda^2}\ri) \\
&= 1 - \dfrac{1}{\lb}\dfrac{\partial }{\partial x}\log \vartheta_3 \le(\frac{x\Omega+\Delta}{2\pi i };2\tau \ri)+ \mathcal{O}\le(\frac{1}{\lambda^2}\ri),
\end{align*}
where we have used the property that $ \vartheta_3'(0; 2\tau)=0$ because $ \vartheta_3(z; 2\tau)$ is an even function of $z$.}
Therefore
\begin{eqnarray*}
Y_1(\lb)=1+\dfrac{1}{\lb}\left(f_1-x \le[ \frac{\eta_1^2+\eta_2^2}{2}  +\eta_2^2\left(\dfrac{E(m)}{K(m)}-1\right) \ri] -\dfrac{\partial }{\partial x}\log \vartheta_3 \le(\frac{x\Omega+\Delta}{2\pi i };2\tau \ri)+ \frac{\left(\mathcal{E}_{1}(x) \right)_{1}}{x} \right)  +\mathcal{O}\le(\frac{1}{\lambda^2}\ri).
\end{eqnarray*}
From the above  expansions,   using  (\ref{uexp0}), and the explicit expression of  $\Delta$ in (\ref{Delta}),  
we obtain the expression of $u(x)$ in \eqref{u_theo}.
In order to obtain the expression (\ref{udn}) we need the following identity (see e.g. \cite{Lawden}   pg. 45 exercise 16 and 3.5.5)
\begin{equation}
\label{theta-jacobi}
\dfrac{1}{4K^2(m)}\dfrac{\d^2}{\d z^2} \log \vartheta_3(z;2\tau)=-\dfrac{E(m)}{K(m)}+\Jac^2\le( 2K(m) z+K(m)\le| \, m\ri. \ri)\, ,
\end{equation}
where $\Jac\le( z\le| \, m\ri. \ri)$ is the Jacobi elliptic function   of modulus $m$ and  period $2K(m)$  and we recall that $2\tau=iK(m')/K(m)$.
Then we can write 
\[
 \dfrac{\partial^2 }{\partial x^2}\log \vartheta_3 \le(\frac{x\Omega+\Delta}{2\pi i };2\tau\ri )=-\eta_2^2\dfrac{E(m)}{K(m)}+\eta_2^2\Jac^2\le( \eta_2(x+\phi) + K(m)\le| \, m\ri. \ri)\, ,
\]
so that  the expression for $u(x)$ in (\ref{u_theo}) can be written in the form (\ref{udn}).
\end{proof}

\section{Behaviour of the potential $u(x,t)$ as $t \to +\infty$}
\label{sec:4}
Letting the potential $u(x,t)$ evolve in time according to the KdV equation, the reflection coefficient evolves as $\edit{r}(\lambda; t) = \edit{r}(\lambda) e^{-8\lambda^3t}$. This will lead to the study of a Riemann--Hilbert  problem $Y$ for the soliton gas described as follows
\begin{align}
&Y_+ (\lambda) = Y_-(\lambda) \begin{cases}
\displaystyle \begin{bmatrix} 1 &0  \\ -i r(\lambda) e^{8\lambda t \le( \lambda^2 - \frac{x}{4t} \ri)}  & 1 \end{bmatrix} &\quad \lambda \in \Sigma_{1}\\
\\
\displaystyle \begin{bmatrix} 1 & ir(\lambda) e^{-8\lambda t \le( \lambda^2 -  \frac{x}{4t} \ri)} \\ 0  & 1 \end{bmatrix} & \quad \lambda \in  \Sigma_{2}
 \end{cases}\\
&Y(\lambda) = \begin{bmatrix}1&1 \end{bmatrix} + \mathcal{O}\le(\frac{1}{\lambda}\ri) \qquad \lambda \rightarrow \infty \ .
\end{align}
We are interested in the asymptotic behaviour of $Y(\lambda) $ in the long-time regime ($t\to + \infty$).

The phase appearing in the exponents in the jump matrix shows different sign depending on the value of the quantity 
\begin{gather}
\xi = \frac{x}{4t} \in \R \ .
\end{gather} 
It is clear that in the case $\xi > \eta_2^2$, the phases in the jumps are exponentially decaying in the regime $t \to + \infty$, therefore by a straightforward small norm argument we conclude
\begin{gather}
Y(\lambda) = \begin{bmatrix}1&1 \end{bmatrix} + \mathcal{O}\le( \edit{e^{-8\eta_1(\xi^2-\eta_2^2)t}} \ri) \qquad \text{as } t\to + \infty \edit{\text{ with }  \xi^2>\eta_2^2,}
\end{gather}
 and the potential $u(x,t)$ becomes trivial. 

The more interesting case $\xi \leq \eta_2^2$ will be studied below. It will become clear that we will observe the presence of a critical value $\xi_{\rm crit}$ at which a phase transition occurs when passing from $\xi> \xi_{\rm crit}$ (the ``super-critical" case) to $\xi\leq \xi_{\rm crit}$ (the ``sub-critical" case).  In the first case the asymptotic description gives an asymptotic solution that is a modulated travelling wave (the wave parameters are changing slowly in time),
 while in the  sub-critical case, the asymptotic solution is a travelling wave.

\section{Super-critical case: the $\alpha$-dependency}\label{supercritcase}
\label{sec:5}

We first consider the case 
\begin{gather} \xi  _{\rm crit} < \xi   < \eta_2^2 \end{gather}
where the value of $\xi  _{\rm crit}  \in \R$ will be defined in (\ref{xicrit}).

In order to study the Riemann--Hilbert  problem for $Y$ in this setting we need to split the contours in the following way: let $\alpha \in (\eta_1, \eta_2)$ and define the sub intervals 
\begin{gather} 
\Sigma_{1,\alpha} = (\alpha, \eta_2) \subseteq \Sigma_1 \qquad \text{and} \qquad \Sigma_{2,\alpha}= (-\eta_2, -\alpha) \subseteq \Sigma_2 \ .
\end{gather}
The value of $\alpha$ will be determined in equation (\ref{alpha}) as a function of $\xi  $.

We introduce again   scalar functions  $g(\lambda)$   and $f(\lb)$ (in a slight abuse of notation, we are using the same letter $g$  and  $f$ to denote these functions, 
though properly we should probably use  $g_{\alpha}$ and  $f_{\alpha}$).
   We make the first transformation $Y(\lb)\mapsto T(\lb)$ given by 
\begin{gather}
T(\lambda) =  Y(\lambda) e^{t g(\lambda)\sigma_3}f(\lb)^{\sigma_3} \label{RHPTtnew}
\end{gather}
such that 
\begin{align}
&g_+(\lambda) + g_-(\lambda) +8\lambda^3 - 8\xi  \lambda = 0  & \lambda \in \Sigma_{1,\alpha}\cup\Sigma_{2,\alpha} \label{c1}\\
&g_+(\lambda)-g_-(\lambda) =  \widetilde \Omega & \lambda \in  [-\alpha, \alpha] \label{c2}\\
&g(\lambda) = \mathcal{O}\le(\frac{1}{\lambda}\ri) & \lambda \rightarrow  \infty \ .\label{c3}
\end{align}
 We further require that $g(\lb)-4\lb^3+4\xi  \lb - \wt{\Omega}$ behaves as $(\lb \mp \alpha)^{\frac{3}{2}}$ near $\lb=\pm \alpha $.  In addition, there are two types of inequalities that must be satisfied by this function in order to have a successful Riemann--Hilbert analysis.  First we will need inequalities satisfied on the complement (relative to $\Sigma_1\cup \Sigma_2$) of the sets $\Sigma_{1,\alpha}$ and $\Sigma_{2,\alpha}$:
\begin{eqnarray}
&&\Re\left[ g_+(\lambda) + g_-(\lambda) +8\lambda^3 - 8\xi  \lambda \right] < 0  \ \ \ \  \lambda \in (\eta_{1},\alpha)  \\
&&\Re \left[ g_+(\lambda) + g_-(\lambda) +8\lambda^3 - 8\xi  \lambda  \right] > 0  \ \ \ \  \lambda \in (-\alpha, -\eta_{1}) \ .
\end{eqnarray}
Second, we will require monotonicity properties on $\Sigma_{1}$ and $\Sigma_{2}$:
\begin{eqnarray}
&&-i( g_{+}(\lambda) - g_{-}(\lambda)) \mbox{ is purely real and monotonically decreasing  on }(\alpha, \eta_{2}) \   \\
&&-i( g_{+}(\lambda) - g_{-}(\lambda)) \mbox{ is purely real and monotonically increasing  on }(-\eta_{2}, - \alpha) \ .
\end{eqnarray}

It is well-known that there is a unique function $g$ satisfying all these properties, which we will define explicitly here (we will actually define $g'$, which of course determines $g$).  We define

%
%
\begin{align}
\label{gprime}
g'(\lambda) = - 12\lambda^2 + 4\xi   + 12 \dfrac{Q_2(\lb)}{R_\alpha(\lambda) }-4\xi  \dfrac{Q_1(\lb)}{R_\alpha(\lambda) }\, ,
\end{align}
where 
\begin{gather}
 R_\alpha(\lambda) =\sqrt{(\lambda^2-\alpha^2)(\lambda^2-\eta_2^2)} 
\end{gather}
is taken to be analytic in $\C\backslash \le\{ \Sigma_{1,\alpha}\cup\Sigma_{2,\alpha} \ri\}$  and real and positive on $(\eta_2,+\infty)$; moreover, let
\begin{equation}
\label{P1P2}
Q_1(\lb)=\lb^2+c_1\, ,\quad \mbox{ and } \ Q_2(\lb)=\lb^4-\frac{1}{2}\lb^2(\alpha^2+\eta_2^2)+c_2\, .
\end{equation}
The constants $c_1$ and $c_2$  are chosen so that 
\begin{equation}
\label{eq:5.14}
\int_{-\alpha}^{\alpha}\dfrac{Q_2(\zeta)}{R_{\alpha +}(\zeta)}\d\zeta = 0 \ , \quad \int_{-\alpha}^{\alpha}\dfrac{Q_1(\zeta)}{R_{\alpha +}(\zeta)}\d\zeta = 0 \ .
  \end{equation}
Explicitly, we find
\begin{eqnarray}
\label{c12}
c_1=- \eta_2^2+\eta_2^2\dfrac{E(m_{\alpha})}{K(m_{\alpha})}\, ,\quad 
c_2=\dfrac{1}{3}\alpha^2\eta_2^2+\dfrac{1}{6}(\eta_2^2+\alpha^2)c_1 \ , \quad 
 m_{\alpha}=\dfrac{\alpha}{\eta_2}\, ,
\end{eqnarray}
where $K(m_{\alpha})$ and $E(m_{\alpha})$ are, respectively, the complete elliptic integrals of the first and second kind.

The parameter $\alpha$ is determined by requiring that the  function  $g(\lb)-4\lambda^3 + 4\xi  \lambda - \wt{\Omega}$ has a zero at $\lb=\pm \alpha$, which yields the equation
\begin{equation}
\label{alpha}
\xi  =3\dfrac{Q_2(\pm\alpha)}{Q_1(\pm\alpha)}=\dfrac{1}{2}(\alpha^2+\eta_2^2)+\dfrac{\alpha^2(\alpha^2-\eta_2^2)}{\alpha^2-\eta_2^2+\eta_2^2\frac{E(m_{\alpha})}{K(m_{\alpha})}}\, ,
\end{equation}
and this determines the constant $\alpha$ implicitly as a function of $\xi  $.  
 
 Before continuing our analysis we want to   comment  on equation (\ref{alpha}).
We can rewrite it in the form
\begin{equation}
\label{xim}
\xi  =\dfrac{x}{4t}=\dfrac{\eta_2^2}{2}W(m_\alpha)\, ,\quad W(m_{\alpha})=\left[1+m_{\alpha}^2+2\dfrac{m_{\alpha}^2(1-m_{\alpha}^2)}{1-m_{\alpha}^2-\frac{E(m_{\alpha})}{K(m_{\alpha})}}\right]\, .
\end{equation}
This relation  describes the modulation of the parameter $\alpha$ as a function of $\xi$.   The quantity $\eta_2^2W(m_{\alpha}) $ was  derived by Whitham in his modulation  theory of  the  traveling wave solution of  the  KdV equation  \cite{Whitham}.  In the general theory  there are three parameters involved,  while in our case, two parameters are fixed, one being zero and the other one $\eta_2$.
This specific case gives a self-similar solution to the Whitham equations. This solution   was derived and  used by  Gurevich-Pitaevskii  \cite{GP73}  to describe the  modulation   of the travelling wave  that is formed   in the solution of the KdV equation with step   
initial data $u(x)=-\eta_2^2$ for $x<0$ and $u(x)=0$ for $x>0$ and was called a dispersive shock wave in analogy with the shock wave that is formed in the solution of the Hopf equation $u_t+6uu_x=0$ for step initial data.   

 Using the expansion of the elliptic functions one has
\[
\dfrac{E(m_\alpha)}{K(m_\alpha)}=1-\dfrac{1}{2}m_\alpha^2+\mathcal{O}(m_\alpha^4)\, ,\ \text{as }  m_\alpha\to 0 \quad \text{and}\quad \dfrac{E(m_\alpha)}{K(m_\alpha)}\simeq \dfrac{2}{\log(8/(1-m_{\alpha}))}\, ,\ \text{as } m_\alpha\to 1\, ,
\]
so that 
\[
\lim_{\alpha\to 0 }\dfrac{3Q_2(\alpha)}{Q_1(\alpha)}=-\frac{3\eta_2^2}{2}\, , \quad \text{and} \quad  \lim_{\alpha\to\eta_2}\dfrac{3Q_2(\alpha)}{Q_1(\alpha)}=\eta_2^2 \, .
\]

The Whitham equations are strictly hyperbolic (\cite{Lev88}),  so  that $\dfrac{\partial }{\partial \alpha}W(m_\alpha)>0$ for $0<\alpha<\eta_2$. Hence by  the implicit   function theorem, the  equation (\ref{xim}) 
defines  $\alpha$ as a monotone increasing function of $\xi$ for $\xi \in [\xi_{\rm crit},\eta_2^2]$ where $\xi_{\rm crit}$ is given by
\begin{equation}
\label{xicrit}
\xi_{\rm crit}=\dfrac{3Q_2(\eta_1)}{Q_1(\eta_1)}=\dfrac{1}{2}(\eta_1^2+\eta_2^2)+\dfrac{\eta_1^2(\eta_1^2-\eta_2^2)}{\eta_1^2-\eta_2^2+\eta_2^2\frac{E(m)}{K(m)}}\, , \quad m = \frac{\eta_1}{\eta_2}\, .
\end{equation}
Then, clearly  $\xi_{\rm crit}>-\frac{3\eta_2^2}{2}$.

From $g'(\lb)$,  we also have a representation of $g(\lb)$:
\begin{equation}
\label{g_alpha}
g(\lb)=-4\lb^3+4\xi  \lb+12\int_{\eta_2}^\lb  \dfrac{Q_2(\zeta)}{R_\alpha(\zeta) }\d\zeta-4\xi   \int_{\eta_2}^\lb  \dfrac{Q_1(\zeta)}{R_\alpha(\zeta) }\d\zeta\, .
\end{equation}
 This, together with \eqref{c2}, yields the formula
 \begin{equation}
 \label{tildeOmega}
\wt{\Omega}=24\int_{\eta_2}^{\alpha}\dfrac{Q_2(\zeta)}{R_{\alpha +}(\zeta)}\d\zeta-8\xi\int_{\eta_2}^{\alpha}\dfrac{Q_1(\zeta)}{R_{\alpha +}(\zeta)}\d \zeta \, .
  \end{equation}

For future use we will need the $x$ derivatives of $t g(\lb)$ and $t\wt{\Omega}$.
Before calculating them, let us observe that 
\[
\wt \Omega =24\int_{\eta_2}^{\alpha}\dfrac{Q_2(\zeta)-Q_2(\alpha)}{R_{\alpha +}(\zeta)}\d\zeta-8\xi\int_{\eta_2}^{\alpha}\dfrac{Q_1(\zeta)-Q_1(\alpha)}{R_{\alpha +}(\zeta)}\d\zeta \, ,
\]
  which gives, using the Riemann bilinear relations (see e.g. \cite{Springer}),
\begin{gather}
\wt \Omega =  2\pi i \frac{4\xi  - 2(\alpha^2+\eta_2^2 )}{\int_{-\alpha}^{\alpha} \frac{\d \zeta}{R_\alpha(\zeta) }}=2\pi i \eta_2 \frac{\alpha^2+\eta_2^2 -2\xi  }{K(m_{\alpha})}   \in i\R \ ,\quad m_{\alpha}=\dfrac{\alpha}{\eta_2}\, . \label{wtomega}
\end{gather}

\begin{figure}
\centering
\scalebox{.9}{
\begin{tikzpicture}[>=stealth]
\path (0,0) coordinate (O);

\draw (-4,-1) -- (5,-1);
\draw (-2,1) -- (7,1);
\draw (-4,-1) -- (-2,1); 
\draw (5,-1) -- (7,1);
\node at (5.4,-0.1) {$\times$};
\node[above] at (5.5,-0.1) {$\infty^-$};

\draw (-4, 3) -- (5,3);
\draw (-2,5) -- (7,5);
\draw (-4,3) -- (-2,5);
\draw (5,3) -- (7,5);
\node at (5.4,3.9) {$\times$};
\node[above] at (5.5,3.9) {$\infty^+$};

\draw (-2,0) -- (.5,0);
\draw (1.5,0) -- (4,0);

\draw (-2,4) -- (.5,4);
\draw (1.5,4) -- (4,4);

\draw[dashed, black!30] (-2,0) -- (-2,4);
\draw[dashed, black!30] (.5,0) -- (.5,4);
\draw[dashed, black!30] (1.5,0) -- (1.5,4);
\draw[dashed, black!30] (4,0) -- (4,4);

\draw[fill] (-2,0) circle [radius=0.025];
\node[below ] at (-2,0) {\tiny $-\eta_2$};
\draw[fill] (0.5,0) circle [radius=0.025];
\node[below ] at (0.5,0) {\tiny $-\alpha$};
\draw[fill] (1.5,0) circle [radius=0.025];
\node[below ] at (1.5,0) {\tiny $\alpha$};
\draw[fill] (4,0) circle [radius=0.025];
\node[below ] at (4,0) {\tiny $\eta_2$};

\draw[fill] (-2,4) circle [radius=0.025];
\node[above ] at (-2,4) {\tiny $-\eta_2$};
\draw[fill] (0.5,4) circle [radius=0.025];
\node[above ] at (0.5,4) {\tiny $-\alpha$};
\draw[fill] (1.5,4) circle [radius=0.025];
\node[above ] at (1.5,4) {\tiny $\alpha$};
\draw[fill] (4,4) circle [radius=0.025];
\node[above ] at (4,4) {\tiny $\eta_2$};



\draw[->- = .25, red] (-1,4) .. controls + (70:.5cm) and + (70:.5cm) .. (2.5,4);
\draw[->- = .25, red] (2.5,0) .. controls + (-110:.5cm) and + (-110:.5cm) .. (-1,0);
\draw[red!30, dashed] (-1,0) -- (-1,4);
\draw[red!30, dashed] (2.5,0) -- (2.5,4);
\node[above, red] at (1,4.3) {\small $A$};

\draw[->- = .7, blue] (1,4) .. controls + (70:.5cm) and + (70:.5cm) .. (4.5,4);
\draw[->- = .7, blue] (4.5,4) .. controls + (-110:.5cm) and + (-110:.5cm) .. (1,4);
\node[blue, above] at (3.25,4.3) {\small $B$};

\end{tikzpicture}
}
\caption{Construction of the genus-$1$ Riemann surface $\mathfrak{X}_\alpha$ and its basis of cycles.}
\label{frakXa}
\end{figure}

\begin{lemma}
The following identities are satisfied
\begin{align}
\label{gprime_der}
&\dfrac{\partial}{\partial x}tg'(\lb)=1-\dfrac{Q_1(\lb)}{R_\alpha(\lambda)}\, ,\\
\label{Omega_der}
&\dfrac{\partial}{\partial x}t\wt \Omega =- \frac{ \pi i \eta_2 }{K(m_{\alpha})}\, .
\end{align}
\end{lemma}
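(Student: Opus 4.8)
The plan is to use the fact that $g$ — and with it $\wt\Omega$ — is characterised by a \emph{scalar} Riemann--Hilbert problem, and to differentiate that characterisation in $x$, keeping $t$ fixed so that $\xi = x/(4t)$ and $\alpha = \alpha(\xi)$ are the only quantities that vary; thus $\partial_x$ acts through the explicit $x$'s in \eqref{c1}--\eqref{c3} and through $\alpha(x)$.

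\textbf{First identity.} Set $\Phi(\lambda):=\partial_x(t g'(\lambda))$. Multiplying \eqref{c1}--\eqref{c3} by $t$ and using $8t\xi\lambda = 2x\lambda$, one reads off that $tg'$ is the unique function analytic in $\C\setminus(\Sigma_{1,\alpha}\cup\Sigma_{2,\alpha})$ with $(tg')_+ + (tg')_- = 2\lambda$ on $\Sigma_{1,\alpha}\cup\Sigma_{2,\alpha}$, no jump across $(-\alpha,\alpha)$, $tg'=\mathcal O(\lambda^{-2})$ at $\infty$, an inverse-square-root singularity at the hard edges $\pm\eta_2$, and bounded at the soft edges $\pm\alpha$ (the $3/2$-power condition). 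Since $\eta_2$ is fixed and the only $x$-dependence is in $\alpha(x)$ and the explicit factor $x$, differentiating gives: $\Phi$ is analytic in $\C\setminus(\Sigma_{1,\alpha}\cup\Sigma_{2,\alpha})$, $\Phi_+ + \Phi_- = 2$ on $\Sigma_{1,\alpha}\cup\Sigma_{2,\alpha}$, $\Phi$ has no jump across $(-\alpha,\alpha)$, $\Phi = \mathcal O(\lambda^{-2})$ at $\infty$, and $\Phi$ has at worst integrable $(\lambda\mp\eta_2)^{-1/2}$ and $(\lambda\mp\alpha)^{-1/2}$ singularities at the four branch points (differentiating the $\sqrt{\lambda\mp\alpha}$-correction in $g'$ through $\alpha(x)$ produces only an inverse square root). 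Now $1 - Q_1(\lambda)/R_\alpha(\lambda)$ has exactly these properties: $R_{\alpha+} = -R_{\alpha-}$ on $\Sigma_{1,\alpha}\cup\Sigma_{2,\alpha}$ produces the additive jump $2$; it is analytic across $(-\alpha,\alpha)$; from \eqref{P1P2} one has $Q_1/R_\alpha = 1 + \mathcal O(\lambda^{-2})$, giving $\mathcal O(\lambda^{-2})$ decay; and its endpoint singularities are inverse square roots. Hence $\Phi - (1 - Q_1/R_\alpha)$ has no jump anywhere, only isolated singularities that are $o((\lambda\mp\cdot)^{-1})$ and therefore removable, and vanishes at $\infty$; by Liouville it is identically $0$, which is \eqref{gprime_der}.

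\textbf{Second identity.} Put $\dot h(\lambda):=\partial_x(t g(\lambda))$ and $c:=\partial_x(t\wt\Omega)$. Differentiating \eqref{c1}--\eqref{c3} and \eqref{c2} in $x$, $\dot h$ solves the scalar problem: analytic off $\Sigma_{1,\alpha}\cup\Sigma_{2,\alpha}\cup[-\alpha,\alpha]$, with $\dot h_+ + \dot h_- = 2\lambda$ on $\Sigma_{1,\alpha}\cup\Sigma_{2,\alpha}$, $\dot h_+ - \dot h_- = c$ ($\lambda$-independent) on $(-\alpha,\alpha)$, at worst inverse-square-root endpoint singularities, and $\dot h = \mathcal O(\lambda^{-1})$ at $\infty$; this problem has no nontrivial homogeneous solution with these endpoint and decay properties (if $D$ is a difference of two solutions, then $D/R_\alpha$ is entire, $=o(|\lambda\mp\cdot|^{-1})$ at each branch point and $\to 0$ at $\infty$, hence $\equiv 0$). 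Its solution is therefore the Cauchy-type integral
\[
\dot h(\lambda) = \frac{R_\alpha(\lambda)}{2\pi i}\left[\int_{\Sigma_{1,\alpha}\cup\Sigma_{2,\alpha}}\frac{2\zeta}{R_{\alpha+}(\zeta)(\zeta-\lambda)}\,\d\zeta \;+\; c\int_{-\alpha}^{\alpha}\frac{\d\zeta}{R_\alpha(\zeta)(\zeta-\lambda)}\right].
\]
Expanding as $\lambda\to\infty$ with $R_\alpha(\lambda) = \lambda^2 - \tfrac12(\alpha^2+\eta_2^2) + \mathcal O(\lambda^{-2})$, the requirement $\dot h = \mathcal O(\lambda^{-1})$ kills the $\mathcal O(\lambda)$ and $\mathcal O(1)$ coefficients; by the $\zeta\mapsto-\zeta$ symmetry the $\mathcal O(\lambda)$ coefficient vanishes automatically ($\int_{\Sigma_{1,\alpha}\cup\Sigma_{2,\alpha}}\tfrac{2\zeta^2}{R_{\alpha+}}\,\d\zeta = 0$ and $\int_{-\alpha}^{\alpha}\tfrac{\zeta}{R_\alpha}\,\d\zeta = 0$), while the $\mathcal O(1)$ coefficient gives
\[
\int_{\Sigma_{1,\alpha}\cup\Sigma_{2,\alpha}}\frac{2\zeta}{R_{\alpha+}(\zeta)}\,\d\zeta \;+\; c\int_{-\alpha}^{\alpha}\frac{\d\zeta}{R_\alpha(\zeta)} = 0 .
\]
With the paper's branch of $R_\alpha$ (real positive on $(\eta_2,+\infty)$), the substitution $u=\zeta^2$ gives $\int_{\Sigma_{1,\alpha}\cup\Sigma_{2,\alpha}}\tfrac{2\zeta}{R_{\alpha+}}\,\d\zeta = 4\int_{\alpha}^{\eta_2}\tfrac{\zeta\,\d\zeta}{R_{\alpha+}} = -2\pi i$, and $\zeta=\alpha\sin\phi$ gives $\int_{-\alpha}^{\alpha}\tfrac{\d\zeta}{R_\alpha} = -\tfrac{2K(m_\alpha)}{\eta_2}$; hence $c = -\pi i\,\eta_2/K(m_\alpha)$, which is \eqref{Omega_der}. (Equivalently $c$ is the period $\oint(1-Q_1/R_\alpha)\,\d\zeta$ over a loop around $\Sigma_{1,\alpha}$, by \eqref{gprime_der}, with the same value; or one may differentiate the closed form \eqref{wtomega} directly using the Whitham relation \eqref{xim} and the standard derivative of $K(m_\alpha)$.)

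The main obstacle is the bookkeeping rather than any conceptual difficulty: one must confirm the precise local behaviour of the differentiated quantities at $\pm\alpha$ and $\pm\eta_2$, so that the singularities created by the moving soft edges $\alpha=\alpha(x)$ are genuinely integrable (this is exactly where the $3/2$-power/soft-edge condition on $g$ is used) and that the Cauchy-type solution has at worst the admissible inverse-square-root growth there, and one must keep careful track of the branch of $R_\alpha$ on $\Sigma_{1,\alpha}$, $\Sigma_{2,\alpha}$ and $(-\alpha,\alpha)$ so the two elementary integrals come out with the stated signs. Granting these, both identities reduce to a uniqueness (Liouville) argument together with two one-line integral evaluations.
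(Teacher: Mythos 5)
Your strategy---differentiate the scalar characterisation of $tg'$ in $x$ and pin down the derivative by a Liouville argument---is a legitimate reformulation of what the paper does with the one-form $g'(\lambda)\,\d\lambda$ on $\mathfrak{X}_\alpha$, but the uniqueness step contains a genuine gap. First, a small but real error: two solutions of the \emph{additive} jump relation $\Phi_++\Phi_-=2$ have a difference $D$ satisfying $D_+=-D_-$ on $\Sigma_{1,\alpha}\cup\Sigma_{2,\alpha}$, which is a nontrivial jump, not ``no jump anywhere''; you must first pass to $D/R_\alpha$, as you do in your argument for the second identity. Second, and more seriously, the conditions you impose do not force $D\equiv 0$: the function $D=c/R_\alpha$ satisfies $D_++D_-=0$ on the cuts, is analytic across $(-\alpha,\alpha)$, is $\mathcal{O}(\lambda^{-2})$ at infinity, and has exactly inverse-square-root singularities at all four branch points---precisely the class you allow. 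Equivalently, $D/R_\alpha$ may have simple poles at $\pm\alpha,\pm\eta_2$, and the three moment conditions coming from $D=\mathcal{O}(\lambda^{-2})$ leave a one-dimensional kernel spanned by $1/R_\alpha$; your claim that $D/R_\alpha$ is $o(|\lambda\mp\cdot|^{-1})$ at the branch points would require $D=o(|\lambda\mp\cdot|^{-1/2})$, which is not available, since $\partial_x(tg')$ genuinely acquires $(\lambda\mp\alpha)^{-1/2}$ singularities when the soft-edge term $(12tQ_2-xQ_1)/R_\alpha\sim(\lambda\mp\alpha)^{1/2}$ is differentiated through $\alpha(x)$, and $1-Q_1/R_\alpha$ has them too because $Q_1(\pm\alpha)\neq 0$. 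The same objection applies to the uniqueness claim in your derivation of \eqref{Omega_der} (there the homogeneous space is even two-dimensional), although your parenthetical remark---obtaining \eqref{Omega_der} from \eqref{gprime_der} as the period $-\oint_B(1-Q_1/R_\alpha)\,\d\lambda$---is exactly the paper's route and is sound once the first identity is established.

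The missing normalisation is the $A$-period condition. The defining relations \eqref{eq:5.14} hold identically in $x$, so (realising the $A$-cycle as a contour that can be held fixed for $x$ in a neighbourhood, to justify differentiating under the integral) one gets $\oint_A\partial_x\bigl(tg'(\lambda)\,\d\lambda\bigr)=0$, cf.\ \eqref{integrals}. This extra linear condition is satisfied by $1-Q_1(\lambda)/R_\alpha(\lambda)$ (again by \eqref{eq:5.14}) but not by $c/R_\alpha$ for $c\neq 0$, since $\oint_A\d\lambda/R_\alpha\neq 0$; adding it closes your Liouville argument. At that point your proof becomes, in substance, the paper's: there the residual term $\frac{\partial\alpha}{\partial x}\,\partial_\alpha\bigl[(12tQ_2-xQ_1)R_\alpha^{-1}\bigr]\d\lambda$ is identified as a holomorphic differential on $\mathfrak{X}_\alpha$ (necessarily a multiple of $\d\lambda/R_\alpha$) and is killed precisely by its vanishing $A$-period.
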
 
\begin{proof}
We observe that $g'(\lb)\d\lb$  defined in (\ref{gprime})  is a meromorphic one-form on the Riemann surface $\mathfrak{X}_{\alpha}$   defined  as 
\[
\mathfrak{X}_{\alpha}=\le\{(\eta,\lb)\in\C^2\;|\; \eta^2=R_\alpha^2(\lb)=(\lb^2-\alpha^2)(\lb^2-\eta_2^2)\ri\} \, .
\]
   We define a homology basis on $\mathfrak{X}_{\alpha}$ in the following way: the  $B$ cycle   encircles the cut $[\alpha,\eta_2]$ clockwise  and the $A$ cycle starts on the cut $[-\eta_2,-\alpha]$  on the upper semi-plane, 
   goes to the cut $[\alpha,\eta_2]$ and then goes back to $[-\eta_2,-\alpha]$  on the second sheet of $\mathfrak{X}_{\alpha}$. \edit{See Figure~\ref{frakXa}. }
 Then we have 
 \begin{equation}
 \label{integrals}
\oint_{A}g'(\zeta)\, \d\zeta =0\, ,\quad \oint_{B}g'(\zeta)\, \d\zeta=-\wt\Omega\, .
\end{equation}
Regarding the first relation in (\ref{gprime_der}) we have 
\begin{align}
\dfrac{\partial}{\partial x}tg'(\lb)\d\lb&=\dfrac{\partial}{\partial x}\left[  - 12t\lambda^2 \d\lb+ x\d\lb   + 12t \dfrac{Q_2(\lb)}{R_\alpha(\lambda) }\d\lb-x  \dfrac{Q_1(\lb)}{R_\alpha(\lambda) }\d\lb\right]\\
&=\d\lb-\dfrac{Q_1(\lb)}{R_\alpha(\lambda)}\d\lb +\dfrac{\partial}{\partial \alpha}\left[  12t \dfrac{Q_2(\lb)}{R_\alpha(\lambda) }\d\lb-x  \dfrac{Q_1(\lb)}{R_\alpha(\lambda) }\d\lb\right]\dfrac{\partial \alpha}{\partial x}\\
&=\d\lb-\dfrac{Q_1(\lb)}{R_\alpha(\lambda)}\d\lb \, ,
\end{align}
because the term $\dfrac{\partial}{\partial \alpha}\left[  12t \dfrac{Q_2(\lb)}{R_\alpha(\lambda) }\d\lb-x  \dfrac{Q_1(\lb)}{R_\alpha(\lambda) }\d\lb\right]$  vanishes since it is a holomorphic one-form  (no singularity at $\pm\alpha$ or infinity)
which is normalized to zero on the $A$ cycle  because of (\ref{integrals}); therefore it is identically zero \cite{Krichever} (see also \cite{G02},\cite{GravaTian}).  An alternative proof is to calculate the  derivative and  use the explicit formul\ae \ of the constants $c_1$ and $c_2$  in (\ref{c12}).  We conclude that 
\[
\dfrac{\partial}{\partial x}e^{-tg(\lambda) }=- \frac{1}{\lambda} \le[ \frac{\alpha^2+\eta_2^2}{2}  +\eta_2^2\left(\dfrac{E(m_\alpha)}{K(m_\alpha)}-1\right) \ri] + \mathcal{O}\le(\frac{1}{\lambda^2}\ri)\, .
\]
Regarding the  relation  (\ref{Omega_der}), by (\ref{gprime_der}) and (\ref{integrals}) we have 
\[
\dfrac{\partial}{\partial x}(t \wt \Omega)=-\dfrac{\partial}{\partial x} \oint_{B}tg'(\lb)\, \d\lb=- \oint_{B}\dfrac{\partial}{\partial x}(tg'(\lb)\, \d\lb)=- \frac{ \pi i \eta_2 }{K(m_{\alpha})} \, . 
\]
\end{proof}
 As we did in Section \ref{sec:3}, we choose the function $f$ to simplify the jumps on $\Sigma_{1,\alpha}$ and $\Sigma_{2,\alpha}$ via
 \begin{align}
& f_+(\lambda) f_-(\lambda) =\frac{1}{r(\lambda)} & \lambda \in \Sigma_{1,\alpha} \label{Tfcon1} \\
&f_+(\lambda) f_-(\lambda) = r(\lambda) & \lambda \in \Sigma_{2,\alpha} \\
& \frac{f_+(\lambda)}{ f_-(\lambda)} =e^{\wt{\Delta}} & \lambda \in [-\alpha,\alpha] \\
& f(\lambda) = 1+\mathcal{O}\le(\frac{1}{\lambda}\ri) & \lambda \rightarrow  \infty  \ .\label{Tfcon3}
\end{align}
It is easy to check that the function $f(\lb)$ is given by
\begin{equation}
\label{f_alpha}
f(\lb)=\exp\le\{ \frac{R_{\alpha}(\lb)}{2\pi i}\left[\int_{ \Sigma_{1,\alpha}}  \frac{\log  \frac{1}{r(\zeta)} }{R_{\alpha +}(\zeta)(\zeta-\lambda)} \d \zeta+
\int_{\Sigma_{2,\alpha}}  \frac{\log r(\zeta )}{R_{\alpha +}(\zeta)(\zeta-\lambda)} \d \zeta+\int^{\alpha}_{ -\alpha}  \frac{\wt{\Delta}}{R_{\alpha}(\zeta)(\zeta-\lambda)} \d \zeta
\right] \ri\} \, ,
\end{equation}
where the constraint  \eqref{Tfcon3} determines $\wt{\Delta}$ as 
\begin{align}
\nonumber
\wt{\Delta}&=\left[\int_{\Sigma_{1, \alpha}}  \frac{\log r(\zeta )}{R_{\alpha +}(\zeta)} \d \zeta-
\int_{\Sigma_{2, \alpha}}  \frac{\log r(\zeta )}{R_{\alpha +}(\zeta)} \d \zeta\right]\left[\int^{\alpha}_{ -\alpha}  \frac{ \d \zeta}{R_{\alpha}(\zeta)}\right]^{-1}\\
\label{TDelta}
&=2\left[\int_{\Sigma_{1, \alpha}}  \frac{\log r(\zeta)}{R_{\alpha +}(\zeta)} \d \zeta\right]\left[\int^{\alpha}_{ -\alpha}  \frac{ \d \zeta}{R_{\alpha}(\zeta)}\right]^{-1}\, ,
\end{align}
where in the last relation  in (\ref{TDelta}) we have used the fact that $r(-\lb)=r(\lb)$.

As a consequence, $T$ satisfies the following Riemann--Hilbert  problem:
\begin{align}
&T_+(\lambda) = T_-(\lambda)  \begin{cases}
\displaystyle \begin{bmatrix} e^{t\le(g_+(\lb) - g_-(\lb)\ri)} \frac{f_{+}(\lb)}{f_{-}(\lb)}& 0 \\ -i& e^{-t\le(g_+(\lb) - g_-(\lb)\ri)} \frac{f_{-}(\lb)}{f_{+}(\lb)}\end{bmatrix} &\quad \lambda \in  \Sigma_{1,\alpha}\\
\displaystyle \begin{bmatrix} e^{t\le(g_+(\lb) - g_-(\lb)\ri)}  \frac{f_{+}(\lb)}{f_{-}(\lb)}&i \\ 0 & e^{-t\le(g_+(\lb) - g_-(\lb)\ri)} \frac{f_{-}(\lb)}{f_{+}(\lb)} \end{bmatrix} & \quad \lambda \in \Sigma_{2,\alpha} \\
\displaystyle \begin{bmatrix} e^{\wt \Omega t + \wt{ \Delta} } & 0 \\ - i r(\lb) f_{+}(\lb)f_{-}(\lb) e^{t \le(g_+(\lb) + g_-(\lb) +8\lambda^3 - 8\xi  \lambda  \ri)}& e^{-\wt \Omega t - \wt{ \Delta} }  \end{bmatrix} &\quad \lambda \in  [\eta_1,\alpha]\\
\displaystyle \begin{bmatrix} e^{\wt \Omega t +\wt{ \Delta} }  & e^{-t \le(g_+(\lb) + g_-(\lb) +8\lambda^3 - 8\xi  \lambda  \ri)} \frac{ i r(\lb) }{f_{+}(\lb)f_{-}(\lb)} \\ 0 & e^{-\wt \Omega t- \wt{ \Delta} }  \end{bmatrix} & \quad \lambda \in [-\alpha, -\eta_1] \\
\displaystyle\begin{bmatrix} e^{\wt \Omega t+\wt{ \Delta} } &0 \\ 0 & e^{-\wt \Omega t-\wt{ \Delta} }  \end{bmatrix} &\quad \lambda \in   [-\eta_1, \eta_1] 
 \end{cases} \\
 &T(\lambda) = \begin{bmatrix}1&1 \end{bmatrix} + \mathcal{O}\le(\frac{1}{\lambda}\ri) \qquad \lambda \rightarrow \infty\, .
\end{align}

\subsection{Opening lenses}

It is useful to provide representations of the entries appearing in the jump matrix for $T(\lambda)$ in either $\Sigma_{1,\alpha}$ or $\Sigma_{2,\alpha}$, that clearly demonstrate their analytic continuation off  these intervals, as it was done in Section \ref{sec:3}.  The following formul\ae \ are valid on both intervals:
\begin{eqnarray}
&&g_{+}(\lb) - g_{-}(\lb) = 2 g_{+}(\lb) + 8 \lambda^{3} - 8 \xi \lambda \ , \\
&& g_{+}(\lb) - g_{-}(\lb) = -\left(2 g_{-}(\lb) + 8 \lambda^{3} - 8 \xi \lambda \right) \ .
\end{eqnarray}
The following formul\ae \ are valid on $\Sigma_{1,\alpha}$:
\begin{eqnarray}
&& \frac{f_{+}(\lb)}{f_{-}(\lb)} =- \frac{1}{  f_{-}^{2}(\lb) \hat{r}_{-}(\lb) }  \quad \text{and} \quad  \frac{f_{-}(\lb)}{f_{+}(\lb)} = \frac{1}{ f_{+}^{2}(\lb) \hat{r}_{+}(\lb)} \ .
\end{eqnarray}
And the following ones are valid on $\Sigma_{2,\alpha}$:
\begin{eqnarray}
&&\frac{f_{-}(\lb)}{f_{+}(\lb)} = - \frac{f_{-}^{2}(\lb)}{  \hat{r}_{-}(\lb)}  \quad \text{and} \quad \frac{f_{+}(\lb)}{f_{-}(\lb)} = \frac{f_{+}^{2}(\lb)}{ \hat{r}_{+}(\lb)} \ .
\end{eqnarray}

As was done in Section \ref{sec:3}, we can factor the jump matrix on $\Sigma_{1,\alpha}$ as follows
\begin{align*}
&\begin{bmatrix} e^{t\le(g_+(\lb) - g_-(\lb)\ri)}\dfrac{f_+(\lb)}{f_-(\lb)} & 0 \\ -i  & e^{-t\le(g_+(\lb) - g_-(\lb)\ri)}\dfrac{f_-(\lb)}{f_+(\lb)} \end{bmatrix} =\\
&\quad\quad
=\begin{bmatrix}1&-\dfrac{ie^{ -t \left(2 g_{-}(\lb) + 8 \lambda^{3} - 8 \xi \lambda \right))}}{f_-^2 (\lb)\hat{r}_-(\lambda)}\\ 0 &1 \end{bmatrix} 
\begin{bmatrix}0&-i \\ -i &0\end{bmatrix} 
\begin{bmatrix}1&\dfrac{ie^{ -t \left(2 g_{+}(\lb) + 8 \lambda^{3} - 8 \xi \lambda \right)}}{  \hat{r}_+(\lambda)f_+^2(\lb)}&  \\0&1 \end{bmatrix} 
\end{align*}
and on \edit{$\Sigma_{2,\alpha}$} as
\begin{align*}
& \begin{bmatrix} e^{t\le(g_+(\lb) - g_-(\lb)\ri)}\dfrac{f_+(\lb)}{f_-(\lb)}&i \\ 0 & e^{-t\le(g_+(\lb) - g_-(\lb)\ri)}\dfrac{f_-(\lb)}{f_+(\lb)} \end{bmatrix}=\\
&\quad\quad
=\begin{bmatrix}1&0\\i\dfrac{f_-^2 (\lb)}{\hat{r}_-(\lambda)}e^{ t \left(2 g_{-} (\lb)+ 8 \lambda^{3} - 8 \xi \lambda \right))} &1 \end{bmatrix} 
\begin{bmatrix}0&i\\i&0\end{bmatrix} 
\begin{bmatrix}1&0\\ -i\dfrac{f_+^2(\lb)}{ \hat{r}_+(\lambda)}e^{ t\left(2 g_{+}(\lb) + 8 \lambda^{3} - 8 \xi \lambda \right)}&  &1 \end{bmatrix}\, .
\end{align*}

\edit{
These factorizations motivate us to open lenses $\mathcal{C}_j$ around each $\Sigma_{j,\alpha}$, $j=1,2$.  We use these lenses to define the transformation
\begin{equation}
	S(z) = \begin{dcases}
		T(z) \begin{bmatrix} 1 & \frac{-i} {\hat{r}(\lambda) f^2(\lambda)}  e^{-2t( g(\lambda) +\lambda^3 -4\xi \lambda)} \\  0 & 1 \end{bmatrix}
		& \text{inside the lens $\mathcal{C}_1$} \\
		T(z) \begin{bmatrix} 1 & 0 \\ \frac{if^2(\lambda)} {\hat{r}(\lambda) }  e^{2t( g(\lambda) +\lambda^3 -4\xi \lambda)} & 1 \end{bmatrix}
		&  \text{inside the lens $\mathcal{C}_2$}\\ 
		T(z) 
		& \text{elsewhere}
	\end{dcases}
\end{equation}
The lens contours and the resulting jump relations for $S(z)$ are shown in \figurename~\ref{openlensesalpha}.
}

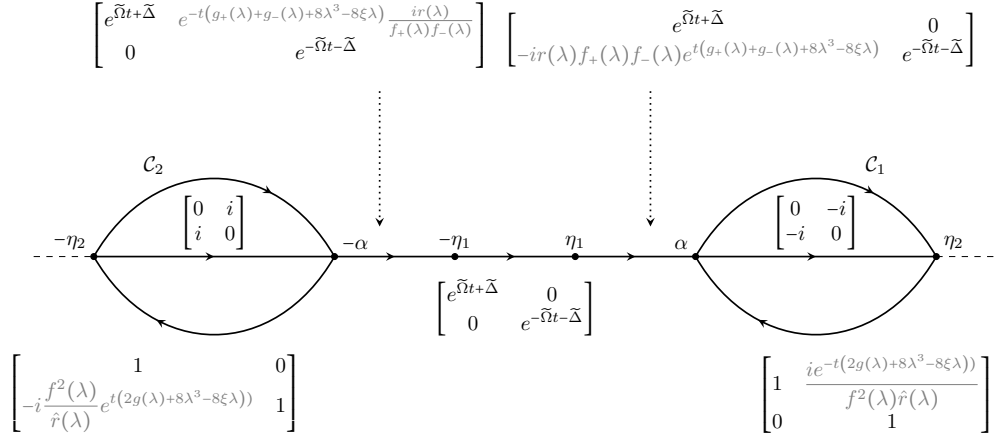
\begin{figure}[h!]
\centering
\scalebox{.8}{
\begin{tikzpicture}[>=stealth]
\path (0,0) coordinate (O);

\draw[dashed] (-8,0) -- (-7,0);
\draw[dashed] (7,0) -- (8,0);

\node [below] at (0,-.25) { $ \begin{bmatrix} e^{\wt \Omega t +\wt{ \Delta} } &0 \\ 0 & e^{-\wt \Omega t - \wt{ \Delta} }\end{bmatrix}$};

\draw[ ->, dotted, thick] (2.25,2.75) -- (2.25,.5);
\draw[ ->, dotted, thick] (-2.25,2.75) -- (-2.25,.5);
\node [above ] at (3.75,3.0) { $\begin{bmatrix} e^{\wt \Omega t + \wt{ \Delta} } & 0 \\ {\color{gray}- i r(\lb) f_{+}(\lb)f_{-}(\lb)  e^{t \le(g_+(\lb) + g_-(\lb) +8\lambda^3 - 8\xi  \lambda  \ri)}}& e^{-\wt \Omega t - \wt{ \Delta} }  \end{bmatrix}$};
\node [above ] at (-3.75,3.0) { $\begin{bmatrix} e^{\wt \Omega t +\wt{ \Delta}  }  & {\color{gray}e^{-t \le(g_+(\lb) + g_- (\lb)+8\lambda^3 - 8\xi  \lambda  \ri)} \frac{ i r(\lb) }{f_{+}(\lb)f_{-}(\lb)} }\\ 0 & e^{-\wt \Omega t- \wt{ \Delta} }  \end{bmatrix} $};

\draw[fill] (3,0) circle [radius=0.05];
\node[above left] at (3,0) {$\alpha$};
\draw[fill] (1,0) circle [radius=0.05];
\node[above ] at (1,0) {$\eta_1$};
\draw[fill] (7,0) circle [radius=0.05];
\node[above right] at (7,0) {$\eta_2$};

\draw[fill] (-3,0) circle [radius=0.05];
\node[above right] at (-3,0) {$- \alpha$};
\draw[fill] (-1,0) circle [radius=0.05];
\node[above ] at (-1,0) {$-\eta_1$};
\draw[fill] (-7,0) circle [radius=0.05];
\node[above left] at (-7,0) {$-\eta_2$};

\draw[ ->-=.5, thick] (-1,0)--(1,0);
\draw[ ->-=.5, thick] (-3,0)--(-1,0);
\draw[ ->-=.5, thick] (1,0)--(3,0);

\draw[->-=.5,thick] (3,0)--(7,0);
\draw[->-=.5,thick] (-7,0)--(-3,0);

\draw[->- = .7,thick] (3,0) .. controls + (60:2cm) and + (120:2cm) .. (7,0);
\draw[->- = .7,thick] (7,0) .. controls + (-120:2cm) and + (-60:2cm)  .. (3,0);

\draw[->- = .7,thick] (-7,0) .. controls + (60:2cm) and + (120:2cm)   .. (-3,0);
\draw[->- = .7,thick] (-3,0) .. controls + (-120:2cm) and + (-60:2cm) .. (-7,0);

\node [above] at (5,0) { $ \begin{bmatrix}0 & -i \\ -i & 0 \end{bmatrix}$};
\node [above] at (6,1.25) {$\mathcal{C}_{1}$};
\node [above] at (-6,1.25) {$\mathcal{C}_{2}$};
\node [below] at (6,-1.5) { $ \begin{bmatrix}1&{\color{gray}\dfrac{ie^{ -t \left(2 g(\lb) + 8 \lambda^{3} - 8 \xi \lambda \right))}}{f^2 (\lb)\hat{r}(\lambda)}}\\ 0 &1 \end{bmatrix} $};


\node [below] at (-6,-1.5) { $ \begin{bmatrix}1&0\\{\color{gray}-i\dfrac{f^2 (\lb)}{\hat{r}(\lambda)}e^{ t \left(2 g (\lb)+ 8 \lambda^{3} - 8 \xi \lambda \right))}} &1 \end{bmatrix} $};
\node [above] at (-5,0) { $ \begin{bmatrix}0 & i \\ i & 0 \end{bmatrix}$};

\end{tikzpicture}
}
\caption{Opening lenses: the gray entries in the jumps represent exponentially small quantities in the limit $t \to + \infty$.  The contours $\mathcal{C}_{1}$ and $\mathcal{C}_{2}$ are the lens boundaries.}
\label{openlensesalpha}
\end{figure}

\begin{lemma}
\label{g_signs}
The following inequalities are satisfied:
\begin{eqnarray} 
&&\Re \le[2 g(\lb) + 8 \lambda^{3} - 8 \xi \lambda \ri]  >0 \ \mbox{ for }\lambda \in \mathcal{C}_{1} \backslash \{ \alpha, \eta_{2} \} \ , 
\\
&&\Re \le[2 g(\lb) + 8 \lambda^{3} - 8 \xi \lambda \ri]  <0 \ \mbox{ for }\lambda \in \mathcal{C}_{2} \backslash \{ - \eta_{2}, - \alpha \} \ , \\
&&\Re \le[ g_+(\lambda) + g_-(\lambda) + 8\lambda^3 - 8\xi  \lambda\ri] < 0 \mbox{ for } \lambda \in [\eta_{1}, \alpha) \ , \\
&& \Re \le[ g_+(\lambda) + g_-(\lambda) + 8\lambda^3 - 8\xi  \lambda\ri] > 0 \mbox{ for } \lambda \in (-\alpha, - \eta_{1}] \ .
\end{eqnarray}
\end{lemma}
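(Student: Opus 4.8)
The plan is to reduce all four inequalities to sign statements for the single phase function
\[
\phi(\lambda):=2g(\lambda)+8\lambda^{3}-8\xi\lambda .
\]
Indeed, the exponents on the lens boundaries $\mathcal C_{1}$ and $\mathcal C_{2}$ in Figure~\ref{openlensesalpha} are $-t\phi(\lambda)$ and $+t\phi(\lambda)$, while on the real intervals $(\eta_{1},\alpha)$ and $(-\alpha,-\eta_{1})$ the Schwarz reflection $\overline{g(\bar\lambda)}=g(\lambda)$ (valid since $R_{\alpha}$ is real on $(\eta_{2},+\infty)$) gives $g_{+}(\lambda)+g_{-}(\lambda)+8\lambda^{3}-8\xi\lambda=\tfrac12(\phi_{+}+\phi_{-})=\Re\phi_{+}(\lambda)$. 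The first step is to differentiate \eqref{gprime}: one gets $\phi'(\lambda)=\bigl(24Q_{2}(\lambda)-8\xi Q_{1}(\lambda)\bigr)/R_{\alpha}(\lambda)$, whose numerator is an even quartic with leading coefficient $24$ vanishing at $\lambda=\pm\alpha$ by the defining relation \eqref{alpha}. Hence it factors as $24(\lambda^{2}-\alpha^{2})(\lambda^{2}-\gamma^{2})$, and substituting this into \eqref{eq:5.14}, together with the fact that $(\zeta^{2}-\alpha^{2})/R_{\alpha}(\zeta)$ has one sign on $(-\alpha,\alpha)$, exhibits $\gamma^{2}$ as a positive weighted average of $\zeta^{2}$ over $(-\alpha,\alpha)$; thus $\gamma$ is real with $0<\gamma<\alpha$, and $\phi'(\lambda)=24(\lambda^{2}-\alpha^{2})(\lambda^{2}-\gamma^{2})/R_{\alpha}(\lambda)$.

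For the two lens inequalities I would argue exactly as in the proof of Lemma~\ref{lemma3.2}. By \eqref{c1}, $\phi_{+}+\phi_{-}=0$ on $\Sigma_{1,\alpha}\cup\Sigma_{2,\alpha}$, so $\phi_{+}=g_{+}-g_{-}$ there, which is purely imaginary; hence $\Re\phi\equiv 0$ on $\Sigma_{1,\alpha}$ and on $\Sigma_{2,\alpha}$. The boundary-value computation of $R_{\alpha}$ shows that on $\Sigma_{1,\alpha}$ the quantity $\phi'_{+}$ is purely imaginary with $\Im\phi'_{+}<0$ (equivalently: $-i(g_{+}-g_{-})$ is monotone decreasing on $(\alpha,\eta_{2})$, one of the properties required of $g$). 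By the Cauchy--Riemann equations this forces $\Re\phi$ to increase strictly as $\lambda$ leaves $\Sigma_{1,\alpha}$ transversally; and since $\phi_{+}=-\phi_{-}$ across the slit, it increases on the lower side too. As $\phi'$ is analytic in each one-sided neighbourhood of $(\alpha,\eta_{2})$ and $\Im\phi'$ is continuous up to the slit, $\Im\phi'$ retains its sign in a thin half-neighbourhood of every compact subinterval; choosing $\mathcal C_{1}$ inside these gives $\Re\phi>0$ on $\mathcal C_{1}\setminus\{\alpha,\eta_{2}\}$. On $\Sigma_{2,\alpha}$ the corresponding boundary value of $R_{\alpha}$ has the opposite sign, so the same reasoning yields $\Re\phi<0$ on $\mathcal C_{2}\setminus\{-\eta_{2},-\alpha\}$. (Right at $\pm\alpha$ and $\pm\eta_{2}$ the phase $\phi$ has $(\lambda\mp\alpha)^{3/2}$, resp.\ $(\lambda\mp\eta_{2})^{1/2}$, behaviour; this fixes the admissible opening angles of the lenses there and is matched by the local parametrices.)

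For the two real-axis inequalities I would use a one-variable argument. Across $(\eta_{1},\alpha)$ the jump of $\phi$ is the purely imaginary constant $2\widetilde\Omega$ (by \eqref{c2} and \eqref{wtomega}), so $\phi'$ is analytic there and $q(\lambda):=g_{+}(\lambda)+g_{-}(\lambda)+8\lambda^{3}-8\xi\lambda$ is real with $q'(\lambda)=\phi'(\lambda)=24(\lambda^{2}-\alpha^{2})(\lambda^{2}-\gamma^{2})/R_{\alpha}(\lambda)$. On $(\eta_{1},\alpha)$ one has $\lambda^{2}-\alpha^{2}<0$ and $R_{\alpha}$ of a fixed sign, so $q'$ has the sign of $\lambda^{2}-\gamma^{2}$: $q$ is increasing throughout $(\eta_{1},\alpha)$ if $\gamma\le\eta_{1}$, and decreasing then increasing (minimum at $\gamma$) if $\gamma\in(\eta_{1},\alpha)$. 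Since $q(\alpha^{-})=0$ by \eqref{c1}, the first case gives $q<0$ on $[\eta_{1},\alpha)$ immediately. In the second case it remains to check $q(\eta_{1})<0$: writing $q(\eta_{1})=-\int_{\eta_{1}}^{\alpha}q'$ and using the evenness consequence $\int_{0}^{\alpha}(\zeta^{2}-\alpha^{2})(\zeta^{2}-\gamma^{2})R_{\alpha}(\zeta)^{-1}\,\d\zeta=0$ of \eqref{eq:5.14}, one rewrites $q(\eta_{1})=-24\int_{0}^{\eta_{1}}(\alpha^{2}-\zeta^{2})(\gamma^{2}-\zeta^{2})\,|R_{\alpha}(\zeta)|^{-1}\,\d\zeta<0$, the integrand being positive because $\gamma>\eta_{1}>\zeta$ there. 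Hence $q<0$ on $[\eta_{1},\alpha)$; the estimate on $(-\alpha,-\eta_{1}]$ follows by the same computation (or from the $\lambda\mapsto-\lambda$ symmetry of the construction, $R_{\alpha},Q_{1},Q_{2}$ being even), with conclusion $q>0$.

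The step I expect to be the real obstacle is not conceptual but bookkeeping: correctly identifying the boundary value of $R_{\alpha}$ on each of the four arcs and intervals, and — more delicately — handling the regime in which the extra root $\gamma$ of $\phi'$ falls inside $(\eta_{1},\alpha)$, where $q'$ changes sign and the statement collapses onto the sign of the single number $q(\eta_{1})$. It is precisely there that the normalization \eqref{eq:5.14} (equivalently the placement of $\gamma$) and the restriction $\xi>\xi_{\rm crit}$ do their work.
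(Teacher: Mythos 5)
Your proposal is correct and follows essentially the same route as the paper's proof: factor the derivative of the phase as $24(\lambda^2-\alpha^2)(\lambda^2-\gamma^2)/R_\alpha(\lambda)$ using \eqref{alpha}, place the extra root $\gamma$ (the paper's $\rho_+$) inside $(0,\alpha)$ via the normalization \eqref{eq:5.14}, obtain the lens inequalities from the Cauchy--Riemann argument of Lemma~\ref{lemma3.2}, and obtain the real-axis inequalities from the sign pattern of the integrand combined with the vanishing of its integral over $(0,\alpha)$. The only cosmetic difference is that the paper avoids your two-case discussion at $\eta_1$ by noting that the relevant antiderivative vanishes at both $0$ and $\alpha$ and has a single interior critical point, hence is negative on all of $(0,\alpha)$.
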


\begin{proof}
Using (\ref{alpha}) the function $g'(\lb)$ in (\ref{gprime}) can be written in the form 
\[
g'(\lb)=- 12\lambda^2 + 4\xi   + 12 \dfrac{Q_2(\lb)-Q_2(\alpha)}{R_\alpha(\lambda) }-4\xi  \dfrac{Q_1(\lb)-Q_1(\alpha)}{R_\alpha(\lambda) }\, , 
\]
so that we have
\begin{gather}
g'_+(\lambda) - g'_-(\lambda) = -  i24 \frac{\sqrt{\lambda^2-\alpha^2}}{ \sqrt{\eta_2^2-\lambda^2}} \le[ \lambda^2 - \le(\frac{\eta_2^2 -\alpha^2}{2} + \frac{\xi  }{3} \ri) \ri] 
\end{gather}
and from (\ref{eq:5.14}) we deduce that the quadratic polynomial has one root $\rho_+$ in the interval $[0, \alpha]$ which is positive for $\lambda > \alpha$. Therefore, for $\lambda \in \Sigma_{1,\alpha}$
\begin{gather}
\Im \le[g'_+(\lambda) - g'_-(\lambda)\ri] =-  24 \frac{\sqrt{\lambda^2-\alpha^2}}{ \sqrt{\eta_2^2-\lambda^2}} \le[ \lambda^2 - \le(\frac{\eta_2^2 -\alpha^2}{2} + \frac{\xi  }{3} \ri) \ri]<0 \ .
\end{gather}

From the formula (\ref{g_alpha}) for $g$ we also have that for $\lambda \in [\eta_1,\alpha]$
\begin{gather}
g_+(\lambda) + g_-(\lambda) + 8\lambda^3 - 8\xi  \lambda 
= - 24\int_\lambda^\alpha \frac{\sqrt{\alpha^2-\zeta^2}}{\sqrt{\eta_2^2-\zeta^2}} \le[ \zeta^2-  \le(\frac{\eta_2^2 -\alpha^2}{2} + \frac{\xi  }{3} \ri) \ri]  \d \zeta \ .
\end{gather}

Setting
\begin{gather} h_{\alpha, \xi}(\zeta) = \frac{\sqrt{\alpha^2-\zeta^2}}{\sqrt{\eta_2^2-\zeta^2}} \le[   \zeta^2 - \le(\frac{\eta_2^2 -\alpha^2}{2} + \frac{\xi  }{3} \ri)  \ri] \ ,\end{gather}
we need to show that the function
\begin{gather}
H_{\alpha, \xi}(\lambda) =\int_{\lambda}^\alpha - h_{\alpha,\xi}(\zeta) <0  \qquad  \text{for } \lambda \in [\eta_1,\alpha] \ . \label{Halphaxineg}
\end{gather}

It is easy to check that $H_{\alpha,\xi} (\alpha) =0$ and $  H_{\alpha,\xi} (0) = 0$ (see (\ref{eq:5.14})). Next, $H_{\alpha,\xi}'(\lambda) = h_{\alpha,\xi}(\lambda)$ is negative on $[0,\rho_+]$ and positive on $[\rho_+,\alpha]$. This implies that indeed the inequality (\ref{Halphaxineg}) is satisfied on $[\eta_1,\alpha]$.
\end{proof}

Because of Lemma~\ref{g_signs}, letting $t\to +\infty$, the jump matrices (as depicted in  \figurename \ \ref{openlensesalpha})  will converge to constant jumps exponentially fast outside neighbourhoods of $\pm\alpha$ and $\pm\eta_2$. We then obtain the following model Riemann--Hilbert  problem for $\wt{S}^{\infty}$:
\begin{gather}
\label{Stinfinity1}
\wt{S}^{\infty}_+(\lambda) = \wt{S}^{\infty}_-(\lambda) 
\begin{cases} 
\begin{bmatrix} e^{t\wt{\Omega}+\wt{ \Delta} } &0 \\ 0 & e^{-t\wt{\Omega}-\wt{ \Delta} }\end{bmatrix}  & \lambda \in [-\alpha,\alpha]  \\
 \begin{bmatrix}0 & -i \\ -i & 0 \end{bmatrix} &\lambda \in \Sigma_{1,\alpha}  \\
  \begin{bmatrix}0 & i \\ i & 0 \end{bmatrix} &\lambda \in \Sigma_{2,\alpha}
  \end{cases}\\
\wt{ S}^{\infty}(\lb)=\begin{bmatrix} 1&1\end{bmatrix}+\mathcal{O}\le(\frac{1}{\lb}\ri),\quad \lb\to\infty \, .
 \label{Stinfinity2}
\end{gather}
\subsection{The outer parametrix $\wt {P}^{\infty}$}\label{outerparametrixalpha}

Along the same lines as we did in Section \ref{sec:3}, we construct a (matrix) model problem whose solution will yield a solution of the above (vector) Riemann--Hilbert  problem.  Since the solution of this model problem will be invertible, one is able to arrive at a small-norm Riemann--Hilbert problem for the error in the large-time regime, more directly than if one considers only vector Riemann--Hilbert problems. 

We therefore seek a matrix valued function $\wt{P}^{\infty}$ that is analytic in $\C\backslash (-\eta_2,\eta_2)$ and satisfies the following Riemann--Hilbert problem
\begin{gather}
\label{Pinfty1b}
\wt{P}^{\infty}_+(\lambda) =\wt{ P}^{\infty}_-(\lambda) 
\begin{cases} 
\begin{bmatrix} e^{t\wt{\Omega}+\wt{ \Delta} } &0 \\ 0 & e^{-t\wt{\Omega}-\wt{ \Delta} }\end{bmatrix}  & \lambda \in [-\alpha,\alpha]  \\
 \begin{bmatrix}0 & -i \\ -i & 0 \end{bmatrix} &\lambda \in \Sigma_{1,\alpha}  \\
  \begin{bmatrix}0 & i \\ i & 0 \end{bmatrix} &\lambda \in \Sigma_{2,\alpha}
\end{cases}\\
\label{Pinfty2b}
 \wt{P}^{\infty}(\lb)=\begin{bmatrix}1&0\\0&1\end{bmatrix}+\mathcal{O}\le(\frac{1}{\lb}\ri),\quad \lb\to\infty \, .
\end{gather}
In order to get the solution of the above Riemann-Hilbert problem, let us introduce in analogy to Section~\ref{Sec_outer} the vector
 \begin{equation}
  \label{TildeSinfty_sol}
  \begin{split}
\wt{S}^{\infty}(\lb)=\gamma(\lb)\dfrac{\vartheta_3(0;2\tau)}{\vartheta_3\le( \frac{t \wt{\Omega}+\wt{\Delta}}{2\pi i};2\tau\ri)}
\begin{bmatrix} 
\displaystyle \frac{\vartheta_3 \le(2\wt{w}(\lambda) +\frac{t \wt{\Omega}+\wt{\Delta}}{2\pi i}-\frac{1}{2};2\tau\ri)}{\vartheta_3 \le(2\wt{w}(\lambda)  -\frac{1}{2};2\tau\ri)} & \displaystyle  \frac{\vartheta_3 \le(-2\wt{w}(\lambda) +\frac{t \wt{\Omega}+\wt{\Delta}}{2\pi i}-\frac{1}{2};2\tau\ri)}{\vartheta_3 \le(-2\wt{w}(\lambda)  -\frac{1}{2};2\tau\ri)}
\end{bmatrix}
\end{split} \ ,
\end{equation}
with $\wt{w}(\lambda)$ defined as
\begin{eqnarray}
\wt{w}(\lambda) = \int_{\eta_{2}}^{\lambda} \frac{ \Omega_\alpha}{R_{\alpha}(\lambda)} \frac{d \lambda}{4 \pi i} \,
\end{eqnarray}
where $\Omega_\alpha=- \frac{ \pi i \eta_2 }{K(m_{\alpha})}$.  Further from \eqref{wtomega} we have 
\[
\wt \Omega =2\pi i \eta_2 \frac{\alpha^2+\eta_2^2}{K(m_{\alpha})}+4\xi\Omega_\alpha
\] 
and
\[
 p_{\alpha}(\lambda)=\int_{\eta_2}^\lambda\dfrac{Q_1(\zeta)}{R_{\alpha}(\zeta)}\d\zeta,\quad \Omega_{\alpha}=-2p_{\alpha+}(\alpha)
 \]
where $Q_1$ has been defined in \eqref{P1P2}. We note that for $\lambda \in (-\alpha, \alpha)$, we have 
\begin{eqnarray}
p_{\alpha+}(\lb)  - p_{\alpha_-}(\lb)   = - \Omega_{\alpha} \ .
\end{eqnarray}

Then the  solution $\wt{P}^{\infty}(\lb)$ to the Riemann-Hilbert problem \eqref{Pinfty1b} and \eqref{Pinfty2b} is given explicitly by
\begin{equation}
\label{TildeP_infinity}
\wt{P}^{\infty}(\lb)=\frac{1}{2}\begin{bmatrix}
(1+\frac{p_{\alpha}(\lambda)}{\lambda})\wt{S}^{\infty}_1(\lambda)+\dfrac{1}{\lambda}\nabla_{\Omega_{\alpha}}\wt{S}^{\infty}_1(\lambda)
&(1-\frac{p_{\alpha}(\lambda)}{\lambda})\wt{S}^{\infty}_2(\lambda)+\dfrac{1}{\lambda}\nabla_{\Omega_{\alpha}}\wt{S}^{\infty}_2(\lambda)\\
(1-\frac{p_{\alpha}(\lambda)}{\lambda})\wt{S}^{\infty}_1(\lambda)-\dfrac{1}{\lambda}\nabla_{\Omega_{\alpha}}\wt{S}^{\infty}_1(\lambda)&
(1+\frac{p_{\alpha}(\lambda)}{\lambda})\wt{S}^{\infty}_2(\lambda)-\dfrac{1}{\lambda}\nabla_{\Omega_\alpha}\wt{S}^{\infty}_2(\lambda)
\end{bmatrix}\,,
\end{equation}
where $\wt{S}_{1}^{\infty}$ and $\wt{S}_{2}^{\infty}$ are the entries of the row vector $\wt{S}^\infty$ defined in (\ref{TildeSinfty_sol}),  and

 \[
\nabla_{\Omega_{\alpha}}\wt{S}^{\infty}_1(\lambda):=\gamma(\lb)\dfrac{\vartheta_3(0;2\tau)}{\vartheta_3 \le(2\wt{w}(\lambda)  -\frac{1}{2};2\tau\ri)} 
\dfrac{\Omega_{\alpha}}{2\pi i}\dfrac{\d}{\d z}\left[ \frac{\vartheta_3 \le(z+2\wt{w}(\lambda) +\frac{t \wt{\Omega}+\wt{\Delta}}{2\pi i}-\frac{1}{2};2\tau\ri)}{\vartheta_3\le(z+ \frac{t \wt{\Omega}+\wt{\Delta}}{2\pi i};2\tau\ri)}\right]\bigg|_{z=0} \ , 
\]

\[
\nabla_{\Omega_{\alpha}}\wt{S}^{\infty}_2(\lambda):=\gamma(\lb)\dfrac{\vartheta_3(0;2\tau)}{\vartheta_3 \le(-2\wt{w}(\lambda)  -\frac{1}{2};2\tau\ri)} 
\dfrac{\Omega_{\alpha}}{2\pi i}\dfrac{\d}{\d z}\left[ \frac{\vartheta_3 \le(z-2\wt{w}(\lambda) +\frac{t \wt{\Omega}+\wt{\Delta}}{2\pi i}-\frac{1}{2};2\tau\ri)}{\vartheta_3\le(z+ \frac{t \wt{\Omega}+\wt{\Delta}}{2\pi i};2\tau\ri)}\right]\bigg|_{z=0} \ .
\]
 The above construction has been obtained by modifying the construction of $P^{\infty}$ in (\ref{P_infinity}), in such a way that $\wt{P}^{\infty}(\lambda)$ solves the Riemann-Hilbert problem (\ref{Pinfty1b})-(\ref{Pinfty2b}), with $\mbox{det}\wt{P}^{\infty}(\lambda) = 1$, and $\wt{P}^{\infty}(-\lambda) = \pmtwo{0}{1}{1}{0} \wt{P}^{\infty}(\lambda)  \pmtwo{0}{1}{1}{0}$.

\subsection{The local parametrix $P^{\pm \alpha}$}
We will construct now a (matrix) local parametrix around the points $\lambda = \pm\alpha$. The construction of the local parametrices near $\lambda =\pm \eta_2$ is the same one as in  Section \ref{localparam}.

We focus again on a small but fixed neighbourhood $B^{(-\alpha)}_{\rho} = \le\{ \lambda \in \mathbb{C} \le| \,  \le|\lambda + \alpha\ri|< \rho \ri.  \ri\}$ of the endpoint $\lambda = -\alpha$. 
We define the conformal map
\begin{gather}
\zeta = \le(\frac{3}{4}\ri)^{\frac{2}{3}}  \le[ t\int_{-\alpha}^\lambda g'_+(s) - g'_-(s) \d s  \ri]^{\frac{2}{3}} =   \le[  18 t\int_{-\alpha}^\lambda \le(\frac{\sqrt{\alpha^2-s^2}}{\sqrt{\eta_2^2-s^2}}\ri)_+ \le(s^2 - \frac{\eta_2^2-\alpha^2}{2} - \frac{\xi  }{3}\ri) \d s  \ri]^{\frac{2}{3}} 
\end{gather}
locally in $B^{(-\alpha)}_\rho$. 

To define the local parametrix $P^{-\alpha}$ in $B^{(-\alpha)}_\rho$, we consider 
\begin{eqnarray*}
P(\lambda) = S (\lambda)e^{ \frac{\pi i }{4} \sigma_{3}}\left( \frac{\sqrt{ \pm \hat{r}(\lb)}}{f(\lb)} \right)^{\sigma_{3}} e^{ \mp \frac{1}{2} \left( \wt{\Omega} t + \wt{\Delta} \right)\sigma_{3}}\ , \ \lambda \in B^{(-\alpha)}_{\rho} \cap \mathbb{C}_{\pm},
\end{eqnarray*}
and then, using the inverse of the transformation $\zeta(\lambda)$, we define
\begin{gather*}
P^{(1)}(\zeta) = P(\lambda(\zeta)) e^{-\frac{2}{3} \zeta^{\frac{3}{2}} \sigma_3}
\ , \quad \zeta \in \C 
\end{gather*}
with branch cut $(-\infty,0]$. By construction, $P^{(1)}$ satisfies a Riemann--Hilbert  problem with jumps in a neighbourhood of $\zeta = 0$ as shown in \figurename \ \ref{conformalnew2}. 

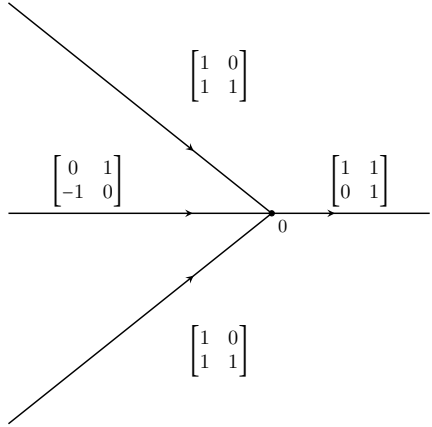
\begin{figure}[h!]
\centering

    \scalebox{.7}{
\begin{tikzpicture}[>=stealth]

\node [above] at (-1,2) {\large $ \begin{bmatrix}1& 0 \\ 1 & 1 \end{bmatrix}$};
\node [below] at (-1,-2) {\large $  \begin{bmatrix} 1 & 0 \\ 1 & 1\end{bmatrix}$};
\node [above] at (-3.5,0) {\large $  \begin{bmatrix}0 & 1 \\ -1 & 0 \end{bmatrix}$};
\node [above right] at (1,0) {\large $ \begin{bmatrix} 1 & 1  \\ 0& 1 \end{bmatrix}$};

\draw[->-=.7,thick] (-5,0)--(0,0);
\draw[->-=.7,thick] (-5,4)--(0,0);
\draw[->-=.7,thick] (-5,-4)--(0,0);
\draw[->-=.4,thick] (0,0)--(3,0);

\draw[fill] (0,0) circle [radius=0.05];
\node[below right] at (0,0) {$0$};

\end{tikzpicture}}
\caption{The contour setting under the conformal map $\zeta$ in a neighbourhood of $0$.}
\label{conformalnew2}
\end{figure}

We introduce the (local) Airy parametrix (see \cite{D99} and \cite{DKMcLVZ99}): let $\Psi_\Ai(\zeta)$ be the solution to the following Riemann--Hilbert  problem
\begin{enumerate}[(a)]
\item $\Psi_\Ai$ is analytic for $\zeta \in \mathbb{C} \backslash \Gamma_{\Psi}$, where the contours $\Gamma_\Psi$ are defined as $\Gamma_\pm = \le\{ \arg \zeta = \pm \frac{2\pi }{3}\ri\}$,  $\Gamma_{0, -} =\le\{ \arg \zeta = \pi \ri\}$ and $\Gamma_{0,+} =\le\{ \arg \zeta = 0 \ri\}$; 
\item $\Psi$ satisfies the following jump relations
\begin{gather}
\Psi_{\Ai \, +} (\zeta) = \Psi_{\Ai \, -}(\zeta) \begin{cases} \begin{bmatrix} 1 & 0 \\ 1 & 1 \end{bmatrix} & \text{on } \Gamma_+  \text{ and } \Gamma_-\\
\begin{bmatrix} 0 & 1 \\ -1\ & 0 \end{bmatrix} & \text{on }\Gamma_{0,-} \\
\begin{bmatrix} 1 & 1 \\ 0\ & 1 \end{bmatrix} & \text{on }\Gamma_{0,+}\, ; 
 \end{cases}\, 
\end{gather}
\item as $\zeta \rightarrow \infty$ 
\begin{gather}
\Psi_\Ai(\zeta) =   \zeta^{-\frac{1}{4}\sigma_3} \frac{1}{\sqrt{2}}\begin{bmatrix} 1& i \\ i&1 \end{bmatrix} \le(I + \mathcal{O}\le(\frac{1}{\zeta^{\frac{3}{2}}}\ri)\ri) e^{-\frac{2}{3}\zeta^{\frac{3}{2}}\sigma_3} \ ,
\end{gather}
\item $\Psi_\Ai$ remains bounded as $\zeta \rightarrow 0$, $\zeta \in \mathbb{C} \backslash \Gamma_{\Psi}$.
\end{enumerate}

The solution to this Riemann--Hilbert  problem is constructed with the help  of Airy functions. Setting $\omega = e^{\frac{2\pi i}{3}}$, we have
\begin{align}
&\Psi_\Ai(\zeta) = \sqrt{2\pi}\begin{bmatrix} \displaystyle \Ai(\zeta)&\displaystyle -\omega^2 \Ai (\omega^2\zeta) \\
\displaystyle -i\Ai'(\zeta) & \displaystyle i\omega \Ai'(\omega^2\zeta)
\end{bmatrix} & \text{for } 0< \arg \zeta < \frac{2\pi }{3} \\
&\Psi_\Ai(\zeta) = \sqrt{2\pi} \begin{bmatrix} \displaystyle -\omega \Ai(\omega\zeta)&\displaystyle -\omega^2 \Ai (\omega^2\zeta) \\
\displaystyle i\omega^2 \Ai'(\zeta) & \displaystyle i\omega \Ai'(\omega^2\zeta)
\end{bmatrix}  & \text{for }   \frac{2\pi }{3} < \arg \zeta < \pi \\
&\Psi_\Ai(\zeta) = \sqrt{2\pi} \begin{bmatrix} \displaystyle -\omega^2 \Ai (\omega^2\zeta) & \displaystyle \omega \Ai(\omega\zeta)\\
 \displaystyle i\omega \Ai'(\omega^2\zeta) & \displaystyle -i\omega^2 \Ai'(\zeta) 
\end{bmatrix}  & \text{for }  -\pi < \arg \zeta< - \frac{2\pi }{3} \\
&\Psi_\Ai(\zeta) = \sqrt{2\pi} \begin{bmatrix} \displaystyle \Ai (\zeta) & \displaystyle \omega \Ai(\omega\zeta)\\
 \displaystyle - i \Ai'(\zeta) & \displaystyle -i\omega^2 \Ai'(\zeta) 
\end{bmatrix}  & \text{for }  - \frac{2\pi }{3}< \arg \zeta< 0 \ ,
\end{align}
where $\Ai(\zeta)$ is the Airy function.

In conclusion, our local parametrix is then defined as
\begin{gather}
 P^{-\alpha}(\zeta(\lambda)) =  A(\lambda) \Psi_{\Ai}(\zeta(\lambda))  e^{\frac{2}{3}\zeta^{\frac{3}{2}} \sigma_3} e^{\pm \frac{1}{2} \left( \wt{\Omega} t + \wt{\Delta} \right)} \left( \frac{f(\lb)}{\sqrt{ \pm \hat{r}(\lb) } } \right)^{\sigma_{3}}e^{ - \frac{\pi i }{4} \sigma_{3}} \ , \ \lambda \in B^{(-\alpha)}_{\rho} \cap \mathbb{C}_{\pm} \, ,
 \end{gather}
where $A$ is an analytic prefactor whose expression is determined by imposing that 
\begin{gather}
P^{-\alpha} (\lambda)  \le(\wt P^{\infty}(\lambda) \ri)^{-1} = I + \mathcal{O}\le(t^{-1}\ri)  \qquad \text{as }  t  \to + \infty\, , \ \text{for } \lambda  \in \partial B^{(-\alpha)}_\rho \backslash \Gamma_\Psi \ .  
\end{gather}
In light of this asymptotic behaviour we set
\begin{gather}
A(\lambda) = \wt P^{\infty}(\lambda) e^{ \mp \frac{1}{2} \left( \wt{\Omega} t + \wt{\Delta} \right) \sigma_{3}} e^{ \frac{\pi i}{4} \sigma_{3}} \left( \frac{ \sqrt{ \pm \hat{r}(\lb) } }{f(\lb)} \right)^{\sigma_{3} } \frac{1}{\sqrt{2}} \begin{bmatrix}1& -i \\ - i  &  1\end{bmatrix}  \zeta(\lambda)^{\frac{1}{4}\sigma_3}  
  \ , \ \mbox{ for } \lambda \in B^{(-\alpha)}_{\rho} \cap \mathbb{C}_{\pm} \ .
\end{gather}
By construction, $A$ is defined and analytic in a neighbourhood of $-\alpha$, minus the cuts $(-\infty, -\alpha] \cup [-\alpha,+ \infty)$; moreover, $A$ is invertible ($\det A(\lambda) \equiv 1$). 

\begin{lemma}
$A(\lambda)$ is analytic in an open neighbourhood of $-\alpha$.
\end{lemma}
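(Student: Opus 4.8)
The plan is to follow verbatim the argument used for the analogous statement near $\eta_2$ in Section~\ref{localparam}. Since the formula for $A(\lambda)$ already exhibits it as analytic and invertible ($\det A\equiv 1$) in $B^{(-\alpha)}_{\rho}$ away from the segment $(-\infty,-\alpha]\cup[-\alpha,+\infty)$, it suffices to establish two facts: (i) $A$ has no jump across the real diameter $(-\alpha-\rho,-\alpha+\rho)$ of $B^{(-\alpha)}_{\rho}$; and (ii) the point $\lambda=-\alpha$ is at worst a removable singularity of $A$.

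For (i) I would treat the two halves of the diameter separately. On $(-\alpha-\rho,-\alpha)\subset\Sigma_{2,\alpha}$ one collects, and checks the cancellation of, the following jumps: the jump $\pmtwo{0}{i}{i}{0}$ of $\wt P^{\infty}$ across $\Sigma_{2,\alpha}$ from \eqref{Pinfty1b}; the jump of $f^{\mp\sigma_3}$ produced by $f_+f_-=r$ on $\Sigma_{2,\alpha}$ (see \eqref{Tfcon1}--\eqref{Tfcon3}); the jump of $\hat r$ across $\Sigma_{2,\alpha}$, namely $\hat r_\pm=\pm r$, which is exactly what makes $\sqrt{\pm\hat r}$ unambiguous across the cut; the jump of the diagonal exponential $e^{\mp\frac12(t\wt\Omega+\wt\Delta)\sigma_3}$ across the real axis; and the jump of $\zeta(\lambda)^{\frac14\sigma_3}$ coming from the branch cut $(-\infty,0]$ of $\zeta^{1/4}$, since the conformal map $\zeta$ carries $(-\alpha-\rho,-\alpha)$ into the negative real $\zeta$-axis. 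As in the $\eta_2$ case, a direct (if fiddly) computation — the ``with tears'' computation — gives $A_+=A_-$ there. On the other half $(-\alpha,-\alpha+\rho)\subset[-\alpha,\alpha]$ (choosing $\rho<\alpha-\eta_1$ so as to stay clear of $-\eta_1$), $\hat r$ and $\zeta^{1/4}$ are analytic — the latter because $\zeta>0$ there — and one only has to match the jump $e^{(t\wt\Omega+\wt\Delta)\sigma_3}$ of $\wt P^{\infty}$ on $[-\alpha,\alpha]$ against the jumps of $f^{\mp\sigma_3}$ ($f_+/f_-=e^{\wt\Delta}$) and of $e^{\mp\frac12(t\wt\Omega+\wt\Delta)\sigma_3}$; again these combine to the identity, so $A_+=A_-$.

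For (ii): the conformal map $\zeta(\lambda)$ has a simple zero at $\lambda=-\alpha$ (the defining integrand vanishes there only through the factor $\sqrt{\alpha^2-s^2}$, so the bracket behaves like $(\lambda+\alpha)^{3/2}$ and its $2/3$ power like $\lambda+\alpha$), hence $\zeta(\lambda)^{\frac14\sigma_3}$ has at most a fourth-root singularity at $-\alpha$. The outer parametrix $\wt P^{\infty}(\lambda)$ likewise has at most a fourth-root singularity at $-\alpha$, inherited from $\gamma$ and the $\vartheta_3$-quotients exactly as $S^{\infty}$ does at $\pm\eta_1$ in Section~\ref{outerparametrix}; and the remaining factors $e^{\mp\frac12(t\wt\Omega+\wt\Delta)\sigma_3}$, $e^{\frac{\pi i}{4}\sigma_3}$, $(\sqrt{\pm\hat r}/f)^{\sigma_3}$ and the constant matrix $\frac{1}{\sqrt2}\pmtwo{1}{-i}{-i}{1}$ are bounded and invertible near $-\alpha$. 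Therefore each entry of $A(\lambda)$ is $O(|\lambda+\alpha|^{-1/2})$ as $\lambda\to-\alpha$; combined with the analyticity of $A$ in the punctured disc coming from (i) and the prior construction, this forces $-\alpha$ to be a removable singularity, so $A$ is analytic throughout $B^{(-\alpha)}_{\rho}$. (Alternatively, $P^{-\alpha}$ and its prefactor $A$ could be obtained from the corresponding objects near $+\alpha$ through the symmetry $\lambda\mapsto-\lambda$, as was done for $\pm\eta_j$ in Section~\ref{sec:3.6}, whence analyticity near $-\alpha$ would be immediate from analyticity near $+\alpha$.)

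The only genuine difficulty is the bookkeeping in step (i): tracking the branch conventions ($\sqrt{+\hat r}$ versus $\sqrt{-\hat r}$ according to the half-plane, the $e^{\pm\frac12(\cdot)\sigma_3}$ split between $\mathbb{C}_{+}$ and $\mathbb{C}_{-}$, the orientations of $\Sigma_{2,\alpha}$ and $[-\alpha,\alpha]$, and the cut of $\zeta^{1/4}$) closely enough to see that the accumulated jump factors really do cancel. Everything else is routine and structurally identical to the $\eta_2$ case already treated.
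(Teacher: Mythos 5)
Your proof follows essentially the same route as the paper's: first verify that $A$ has no jump across the real diameter of $B^{(-\alpha)}_{\rho}$, then observe that the fourth-root singularities of $\zeta(\lambda)^{\frac{1}{4}\sigma_3}$ and of $\wt P^{\infty}(\lambda)$ combine to give at most a square-root singularity of $A$ at $\lambda=-\alpha$, which, being isolated, must be removable. The paper leaves the jump-cancellation bookkeeping to the reader, whereas you itemize the contributing factors (and note the $\lambda\mapsto-\lambda$ symmetry alternative), but the argument is the same.
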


\begin{proof}
The proof entails verifying that $A$ has no jumps across the interval $(-\alpha -\rho, -\alpha + \rho)$ and that it has at most a removable singularity at $\lambda = -\alpha$.  We leave the verification that $A_{+}(\lb) = A_{-}(\lb)$ across the interval $(-\alpha -\rho, -\alpha + \rho)$ to the reader, using the jump relations satisfied by $\wt P^{\infty}$ and the above definitions.

The conformal map $\zeta(\lambda)$ has a simple zero at $\lambda = -\alpha$ (by construction), therefore $\zeta(\lambda)^{-\frac{1}{4}\sigma_3}$ has at most a fourth-root singularity at $-\alpha$. Similarly, $\wt P^{\infty}(\lambda) $ has a fourth-root singularity at $- \alpha$, as well; therefore, all the entries of $A(\lambda)$ have at most a square-root singularity at $\lambda = -\alpha$, and  $A(\lambda)$ is analytic in $B^{(-\alpha)}_\rho \backslash \{ -\alpha \}$. The point $\lambda = - \alpha$ is a removable singularity. This implies that $A(\lambda)$ is indeed analytic everywhere in $B^{(-\alpha)}_\rho \ $. 
\end{proof}

The construction of the local parametrix for $\lambda$ near $\alpha$ is obtained from the parametrix near $-\alpha$ as follows.  We fix a disk of {\it the same radius as the radius of the disk used for the parametrix near $-\alpha$}, and within that disk, define
\begin{eqnarray}
\label{eq:AlphaLocalDef2}
&&P^{\alpha} := \pmtwo{0}{1}{1}{0} P^{-\alpha}(-\lambda) \pmtwo{0}{1}{1}{0} \ .
\end{eqnarray}

{\bf Remark}. We note that, as with the analysis presented for $t=0$ and $x \to - \infty$, we may choose the contours so that they are preserved under the transformation $\lambda \mapsto - \lambda$.  Moreover, the construction of $\wt{P}^{\infty}$ continues to enjoy the symmetry relation
\begin{eqnarray}
\wt{P}^{\infty}(-\lambda)  = \pmtwo{0}{1}{1}{0} \wt{P}^{\infty}(\lambda) \pmtwo{0}{1}{1}{0}  \ .
\end{eqnarray}

\subsection{Small norm argument and determination of $u(x,t)$ as $t\to + \infty$}
\label{finalchapteralpha}

As before, we define a global (matrix) parametrix $P$ replacing each model $P^\xi$ in \eqref{global_p} with $\widetilde{P}^\xi$, $\xi = \infty, \pm\alpha,  \pm\eta_2$. Then we define the following ``remainder" Riemann--Hilbert  problem:
 \begin{gather}
 \mathcal E(\lambda) = S(\lambda) P(\lambda)^{-1}\ .
 \end{gather}
\edit{For some $c>0$,} the vector $\mathcal E$ satisfies
 \begin{gather}
\mathcal E_+(\lambda) = \begin{cases}
\mathcal E_-(\lambda) \le( I + \mathcal{O}\le( \displaystyle \edit{ e^{-ct} }\ri) \ri) \qquad \text{on the upper and lower lenses, outside the discs } \\
\mathcal E_-(\lambda) \le( I + \mathcal{O}\le(\displaystyle t^{-1} \ri) \ri) \qquad \text{on the circles around the endpoints} \\
 \end{cases}
 \end{gather}
 and
 \begin{gather} \mathcal E(\lambda) = \begin{bmatrix} 1 & 1 \end{bmatrix} + \mathcal{O}\le(\frac{1}{\lambda}\ri) \qquad \text{as }\lambda \rightarrow \infty \ .
 \end{gather}
Furthermore, as in Section \ref{smaxinfty}, the solution $\mathcal{E}(\lambda)$ is analytic in a neighborhood of $\lambda=0$.  Moreover, the jumps $V_{\mathcal{E}}$ for $\mathcal{E}$ satisfy the symmetry
\begin{eqnarray}
V_{\mathcal{E}}(-\lambda) = \pmtwo{0}{1}{1}{0} V_{\mathcal{E}}(\lambda)  \pmtwo{0}{1}{1}{0}  \ .
\end{eqnarray} 

Therefore, by a small norm argument (see \cite[Section 5.1.3]{smallnormRH}), we learn that there is a unique $\mathcal{E}$ solving the Riemann-Hilbert problem, and (as in Section \ref{smaxinfty}), the solution satisfies the symmetry relation
\begin{eqnarray}
\mathcal{E}(-\lambda) = \mathcal{E}(\lambda)  \pmtwo{0}{1}{1}{0} ,
\end{eqnarray}
and has a complete asymptotic expansion, satisfying
\begin{gather}
\mathcal E(\lambda) = \begin{bmatrix} 1 & 1 \end{bmatrix}  + \frac{\mathcal{E}_{1}( x, t)}{ \lambda t}\ +  \mathcal{O} \left( \frac{1}{\lambda^{2}} \right),
\end{gather}
where $\mathcal{E}_{1}(x,t)$ and its derivatives are bounded.


Unraveling the transformations, we can again get back to the potential. Our original Riemann--Hilbert  problem, for the unknown $Y$, satisfies
\begin{gather*}
 Y(\lambda) = T(\lambda) e^{-tg(\lambda) \sigma_3} f(\lb)^{-\sigma_3} = S(\lambda) e^{-tg(\lambda) \sigma_3} f(\lb)^{-\sigma_3}\\
  =\le( \begin{bmatrix} 1& 1\end{bmatrix} + \frac{\mathcal{E}_{1}( x, t)}{ \lambda t}\ +  \mathcal{O} \left( \frac{1}{\lambda^{2}} \right) \ri)P(\lambda) e^{-tg(\lambda) \sigma_3} f(\lb)^{-\sigma_3}.
\end{gather*}
In particular we are interested in the  vector $Y(\lb)$  for large $\lb$  
\begin{gather*}
 Y(\lambda)  =\le( \begin{bmatrix} 1& 1\end{bmatrix} + \frac{\mathcal{E}_{1}( x, t)}{ \lambda t}\ +  \mathcal{O} \left( \frac{1}{\lambda^{2}} \right) \ri)\widetilde{P}^{\infty}(\lambda) e^{-tg(\lambda) \sigma_3} f(\lb)^{-\sigma_3}\\
  =\le( \widetilde{S}^{\infty}(\lambda) + \frac{\mathcal{E}_{1}( x, t)}{ \lambda t}\ +  \mathcal{O} \left( \frac{1}{\lambda^{2}} \right) \ri)e^{-tg(\lambda) \sigma_3} f(\lb)^{-\sigma_3},
\end{gather*}
so that 
\begin{gather}
Y_1(\lambda) =  \le[ \wt{S}^{\infty}_{1}(\lambda)+ \frac{\mathcal{E}_{1}( x, t)}{ \lambda t}\ +  \mathcal{O} \left( \frac{1}{\lambda^{2}} \right)  \ri]e^{-tg(\lambda) }f(\lb)^{-1} ,  
\label{Y1}
\end{gather}
where $\wt{S}^{\infty}_1$ refers to the the first row vector solution $\wt S^{\infty}$ \eqref{TildeSinfty_sol}. Since 
\begin{gather}
\label{uexp1}
u(x,t) = 2\frac{\d}{\d x} \le[ \lim_{\lambda\rightarrow\infty} \lambda(Y_1(\lambda;x,t) -1) \ri] \ ,
\end{gather}
we have the following theorem.
\begin{theorem}
Given $\xi = \frac{x}{4t}$, in the region $\xi_{\rm crit}<\xi<\eta_2^2$ the solution of the KdV equation in the large time limit is 
\begin{equation}
\label{txt}
u(x,t)= \eta_2^2-\alpha^2   -2\eta_2^2\dfrac{E(m_\alpha)}{K(m_\alpha)} -2\dfrac{\partial^2}{\partial x^2} \log \vartheta_3 \left(\frac{\eta_2}{2K(m_\alpha)}(x-2(\alpha^2+\eta_2^2)t+\wt{\phi});2\tau_\alpha\right)+\mathcal{O}(t^{-1})
\end{equation}
where $E(m_\alpha)$ and $K(m_\alpha)$ are the complete elliptic integrals of first and second kind respectively, with modulus $m_\alpha = \frac{\alpha}{\eta_2}$;  $2\tau_\alpha=i\dfrac{K(m'_\alpha)}{K(m_\alpha)}$, with $m'_\alpha = \sqrt{1-m_\alpha^2}$,
\[
\wt{\phi}=\int_{\alpha}^{\eta_2}\dfrac{\log r(\zeta)}{R_{\alpha +}(\zeta)}\dfrac{\d\zeta}{\pi i}\in\R
\]
and the parameter $\alpha=\alpha(\xi)$ is determined from the equation
\[
\xi=\dfrac{\eta_2^2}{2}\left[1+m_{\alpha}^2+2\dfrac{m_{\alpha}^2(1-m_{\alpha}^2)}{1-m_{\alpha}^2-\frac{E(m_{\alpha})}{K(m_{\alpha})}}\right].
\]
The error term $\mathcal{O}(t^{-1})$ is uniform for $t$ sufficiently large.

Alternatively, 
\begin{equation}
\label{u_dnt}
u(x,t)=\eta_2^2-\alpha^2-2\eta_2^2\Jac^2\le( \eta_2(x-2(\alpha^2+\eta_2^2)t+\wt{\phi}) + K(m_\alpha) \le| \, m_\alpha\ri.  \ri)+ \mathcal{O}\le(t^{-1}\ri)
\end{equation}
where $\Jac\le( z\le| \, m \ri.\ri)$ is the Jacobi elliptic function. 
\end{theorem}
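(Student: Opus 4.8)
The proof runs parallel to that of Theorem~\ref{thm:3.4}, with $t$ playing the role of the large parameter in place of $|x|$, and with the extra feature that all the spectral quantities now depend on $x$ (slowly) through $\xi=x/4t$, and hence through $\alpha=\alpha(\xi)$. The starting point is \eqref{Y1}: one needs the coefficient of $\lambda^{-1}$ in $Y_1(\lambda;x,t)$ as $\lambda\to\infty$, after which $u(x,t)$ is read off from \eqref{uexp1}. Thus the task reduces to expanding the three scalar factors $\wt S^\infty_1(\lambda)$, $e^{-tg(\lambda)}$ and $f(\lambda)^{-1}$ to order $\lambda^{-1}$ and assembling the pieces, keeping in mind that $\frac{\d}{\d x}$ is applied at the end.

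For $e^{-tg(\lambda)}$ I would use the representation \eqref{g_alpha} together with the identities \eqref{gprime_der}--\eqref{Omega_der}; their proof already yields $\frac{\partial}{\partial x}e^{-tg(\lambda)}=-\frac1\lambda\big[\tfrac{\alpha^2+\eta_2^2}{2}+\eta_2^2\big(\tfrac{E(m_\alpha)}{K(m_\alpha)}-1\big)\big]+\mathcal O(\lambda^{-2})$. For $f(\lambda)^{-1}=1-f_1/\lambda+\mathcal O(\lambda^{-2})$ the key remark is that $f_1$, being built from $r$, $R_\alpha$ and $\wt\Delta$, depends on $x$ only through $\xi$, so $\frac{\d}{\d x}f_1=\mathcal O(1/t)$, and likewise $\frac{\d}{\d x}\big((\mathcal E_1)_1/t\big)=\mathcal O(1/t)$ since $\mathcal E_1$ and its derivatives are bounded by the small-norm analysis of Section~\ref{finalchapteralpha}; both therefore enter only the error term. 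For $\wt S^\infty_1(\lambda)$ I would expand the theta-quotient in \eqref{TildeSinfty_sol} near $\lambda=\infty$, using the periodicity \eqref{periods}, the evenness of $\vartheta_3$, and $2\wt w(\lambda)=-\tfrac12+\tfrac1\lambda\tfrac{\eta_2}{2K(m_\alpha)}+\mathcal O(\lambda^{-2})$ coming from the definition of $\wt w$; this reproduces verbatim the $S^\infty_1$ expansion from the proof of Theorem~\ref{thm:3.4}, with $x\Omega+\Delta,\tau,m$ replaced by $t\wt\Omega+\wt\Delta,\tau_\alpha,m_\alpha$, so that the $x$-derivative of its $\lambda^{-1}$-coefficient equals $-\frac{\partial^2}{\partial x^2}\log\vartheta_3\big(\tfrac{t\wt\Omega+\wt\Delta}{2\pi i};2\tau_\alpha\big)+\mathcal O(1/t)$, the $\mathcal O(1/t)$ absorbing the slow variation of $\wt\Delta$ and $\tau_\alpha$ while the genuinely $\mathcal O(1)$ contribution $\frac{\partial}{\partial x}(t\wt\Omega)=-\pi i\eta_2/K(m_\alpha)$ is supplied by \eqref{Omega_der}.

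Collecting terms and applying \eqref{uexp1} gives
\[
u(x,t)=-2\Big[\tfrac{\alpha^2+\eta_2^2}{2}+\eta_2^2\big(\tfrac{E(m_\alpha)}{K(m_\alpha)}-1\big)\Big]-2\frac{\partial^2}{\partial x^2}\log\vartheta_3\Big(\tfrac{t\wt\Omega+\wt\Delta}{2\pi i};2\tau_\alpha\Big)+\mathcal O(t^{-1}),
\]
whose constant part simplifies to $\eta_2^2-\alpha^2-2\eta_2^2\tfrac{E(m_\alpha)}{K(m_\alpha)}$. To recognize the argument of $\vartheta_3$ I would use \eqref{wtomega} to write $\tfrac{t\wt\Omega}{2\pi i}=-\tfrac{\eta_2}{2K(m_\alpha)}\big(x-2(\alpha^2+\eta_2^2)t\big)$ and \eqref{TDelta} (with the evaluation $\int_{-\alpha}^{\alpha}\frac{\d\zeta}{R_\alpha(\zeta)}=-2K(m_\alpha)/\eta_2$) to write $\tfrac{\wt\Delta}{2\pi i}=-\tfrac{\eta_2}{2K(m_\alpha)}\wt\phi$; the evenness of $\vartheta_3(\cdot;2\tau_\alpha)$ then produces exactly the expression \eqref{txt}. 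The equivalent form \eqref{u_dnt} follows from the theta--Jacobi identity \eqref{theta-jacobi} with modulus $m_\alpha$, exactly as \eqref{udn} was deduced from \eqref{u_theo}, and the uniformity of the $\mathcal O(t^{-1})$ error in $t$ is inherited from the small-norm estimates in Section~\ref{finalchapteralpha}.

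The main obstacle is purely one of bookkeeping: one must keep straight which quantities depend on $x$ only through $\xi$ (namely $\alpha$, $f_1$, $\wt\Delta$, $\wt\phi$, $\tau_\alpha$, $\mathcal E_1$) and therefore contribute only at order $1/t$ after differentiation in $x$, versus the single $\mathcal O(1)$ contribution $\frac{\partial}{\partial x}(t\wt\Omega)$, which must be computed correctly \emph{including} the implicit dependence $\alpha=\alpha(\xi)$. The latter computation is precisely the content of the lemma establishing \eqref{gprime_der}--\eqref{Omega_der}, whose proof uses the Riemann bilinear relations and the vanishing of a normalized holomorphic differential on $\mathfrak X_\alpha$; with that in hand, the remaining work is organized expansion rather than anything new.
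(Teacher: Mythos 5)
Your proposal is correct and follows essentially the same route as the paper's proof: expanding $\wt S^\infty_1$, $e^{-tg}$ and $f^{-1}$ to order $\lambda^{-1}$ in \eqref{Y1}, isolating the single $\mathcal O(1)$ contribution $\tfrac{\partial}{\partial x}(t\wt\Omega)$ via \eqref{gprime_der}--\eqref{Omega_der} while relegating all quantities depending on $x$ only through $\xi$ to the $\mathcal O(t^{-1})$ error, and then identifying the theta argument via \eqref{wtomega} and \eqref{TDelta}. The bookkeeping points you flag (including the reduction to \eqref{u_dnt} via \eqref{theta-jacobi}) are exactly the ones the paper handles.
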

\begin{proof}
Starting from \eqref{Y1} we expand each term of   $Y_1(\lb)$  in a neighbourhood of infinity.
Regarding  $f(\lb)$  defined in (\ref{f_alpha}) we have 
\[
f(\lb)=1+\dfrac{f_1(\alpha,\eta_2)}{\lb}+\mathcal{O}\le(\frac{1}{\lambda^2}\ri)\, ,
\]
where 
\begin{equation*}
f_1(\alpha,\eta_2)=\left[\int_{\alpha}^{\eta_2}\dfrac{\zeta^2 \log r(\zeta)}{R_\alpha(\zeta)}\dfrac{\d\zeta}{\pi i }-\wt{\Delta}\int_{-\alpha}^{\alpha}\dfrac{\zeta^2}{R_\alpha(\zeta)}\dfrac{\d\zeta}{2\pi i }\right]\, .
\end{equation*}
 Regarding $e^{-tg(\lambda) }$ we are \edit{interested} in the $x$ derivative of this expression.  Using (\ref{gprime_der}) we have
\[
\dfrac{\partial}{\partial x}e^{-tg(\lambda) }=- \frac{1}{\lambda} \le[ \frac{\alpha^2+\eta_2^2}{2}  +\eta_2^2\left(\dfrac{E(m_\alpha)}{K(m_\alpha)}-1\right) \ri] + \mathcal{O}\le(\frac{1}{\lambda^2}\ri) \, .
\]
Regarding $\wt{S}^{\infty}_{1}(\lb)$, we have
\[
\wt{S}^{\infty}_{1}(\lb)=1+ \dfrac{1}{\lb}\left[\left( \log \vartheta_3 \le(\frac{t\wt\Omega+\wt\Delta}{2\pi i };2\tau\ri)\right)' - \frac{\vartheta_{3}'(0)}{\vartheta_{3}(0)}\right]\dfrac{\eta_2}{2K(m_\alpha)}+ \mathcal{O}\le(\frac{1}{\lambda^2}\ri)\, ,
\]
where $'$ stands for the derivative with respect to the argument  of the theta-function. By (\ref{Omega_der}) we have
\[
\dfrac{\partial}{\partial x}\wt{S}^{\infty}_{1}(\lb)= \dfrac{1}{\lb}\left[\log \vartheta_3 \le(\frac{t\wt\Omega+\wt\Delta}{2\pi i };2\tau\ri) \right]''\dfrac{\eta_2}{2K(m_\alpha)}\left(-\dfrac{\eta_2}{2K(m_\alpha)}+\dfrac{\partial}{\partial x}\dfrac{\wt\Delta}{2\pi i }\right)+ \mathcal{O}\le(\frac{1}{\lambda^2}\ri)\, ,
\]
where $''$ stands for   second derivative with respect to the argument of the theta-function.  Taking into account that  $\Delta=\Delta( \alpha(\xi),\eta_2)$ and therefore   the quantity $\frac{\partial}{\partial x} \Delta(\alpha(\xi),\eta_2)=\mathcal{O}(t^{-1})$ by (\ref{alpha}), we  
can write the above expression in the form
\[
\dfrac{\partial}{\partial x}\wt{S}^{\infty}_{1}(\lb)= -\dfrac{1}{\lb}\left[\dfrac{\partial}{\partial x^2}\log\vartheta_3 \le(\frac{t\wt\Omega+\wt\Delta}{2\pi i };2\tau\ri)+\mathcal{O}(\frac{1}{t})\right]+ \mathcal{O}\le(\frac{1}{\lambda^2}\ri)\, .
\]

Gathering the above expansions, using the fact that $\mathcal{E}_{1}(x,t)$ and its derivatives are bounded,  and using the explicit  expression of $\wt{\Omega}$ and $\wt{\Delta}$  in (\ref{wtomega}) and (\ref{TDelta})  respectively,  we obtain (\ref{txt}).
Also in this case, using the same calculations as in Theorem~\ref{thm:3.4}  we can reduce the expression of $u(x,t)$ to the form (\ref{u_dnt}).

The equivalence of formulas \eqref{txt} and  \eqref{u_dnt} is  slightly more delicate \edit{than} in the case of Theorem~\ref{thm:3.4}. 
It follows from \eqref{theta-jacobi} and  the relation \eqref{Omega_der},  that is a particular case of the more general relations
obtained in \cite{Krichever} in the context of Whitham modulation theory.
\end{proof}

 The equivalence of the formulas \eqref{txt} and  \eqref{u_dnt} is  a well \edit{known} fact in the theory of dispersive shock waves   for the KdV  equation
where modulated travelling waves are developed  (see e.g. the review \cite{GK}). Such equivalence is valid for any solution of \edit{Whitham's modulation} equations.
For  the  particular  solution of the Whitham modulation equations appearing in the long time asymptotic analysis of KdV this equivalence was obtained in \cite{Teschls}.

\section{Sub-critical case}
\label{sec:6}

As the parameter  $\xi   < \eta_2^2$ decreases, we proved that there is a critical value $\xi  _{\rm crit}$ (see Section \ref{sec:5}, equation (\ref{xicrit})) such that 
\begin{gather} \alpha(\xi  _{\rm crit}) = \eta_1 \ .\end{gather}

For $\xi < \xi_{\rm crit}$, we define 
\begin{align}
\label{gprimea}
g'(\lambda) = - 12\lambda^2 + 4\xi   + 12 \dfrac{Q_2(\lb)}{R(\lambda) }-4\xi  \dfrac{Q_1(\lb)}{R(\lambda) }
\end{align}
where $R$ is defined in (\ref{eq:Rdef}), specifically $R(\lambda) =\sqrt{(\lambda^2 - \eta_{1}^2) (\lambda^2-\eta^2_2)}$, and
\begin{equation}
Q_1(\lb)=\lb^2+c_1\, ,\quad Q_2(\lb)=\lb^4-\frac{1}{2}\lb^2(\eta_{1}^2+\eta_2^2)+c_2\, ,
\end{equation}
with the constants $c_1$ and $c_2$ chosen so that 
\begin{equation}
\int_{0}^{\eta_{1}}\dfrac{Q_2(\zeta)}{R_{+}(\zeta)}\d\zeta = 0 \ , \quad \int_{0}^{\eta_{1}}\dfrac{Q_1(\zeta)}{R_{+}(\zeta)}\d\zeta = 0 \ .
  \end{equation}
Integration yields
\begin{gather}
g(\lambda) = - 4 \lambda^{3} + 4\xi   \lambda + \int_{\eta_1}^{\lambda} \frac{12 Q_{2}(\zeta) - 4 \xi Q_{1}(\zeta)}{R(\zeta)} \d \zeta \ .
\end{gather}

By construction, $g$ satisfies the following constraints:
\begin{align}
&g_+(\lambda) + g_-(\lambda) +8\lambda^3 - 8\xi  \lambda = 0  & \lambda \in \Sigma_{1}\cup\Sigma_{2} \label{c1xneg}\\
&g_+(\lambda)-g_-(\lambda) = \overline{ \Omega} & \lambda \in [-\eta_1,\eta_1] \label{c2xneg}\\
&g(\lambda) =  \mathcal{O}\le(\frac{1}{\lambda}\ri) & \lambda \rightarrow  \infty \ .\label{c3xneg}
\end{align}
with
\begin{gather}
\ov \Omega = 2\pi i \eta_2 \frac{2\xi  -  (\eta_1^2 +\eta_2^2) }{K(m)} \in i\R \ .
\label{Omega_bar}
\end{gather}
\begin{remark}
The reader may verify that for $\xi = \xi_{\rm crit}$ the above function $g(\lb;\eta_1,\eta_2)$ in (\ref{gprimea}) agrees with the function $g(\lb;\alpha=\eta_1,\eta_2)$ in (\ref{g_alpha}).
\end{remark}

In order to show that the usual contour deformations can be carried out, as they were in Sections \ref{sec:3} and \ref{sec:5}, we need to verify that the quantity $\Re \le[2 g(\lb) + 8 \lambda^{3} - 8 \xi^{2} \lambda \ri]$ is positive on the contour $\mathcal{C}_{1}$, and negative on the contour $\mathcal{C}_{2}$, where these contours are as shown in \figurename \ \ref{openinglenses}.  

To accomplish this, we consider the quadratic polynomial
\begin{gather}
q(r; \xi  ) = 12\left( r^2-\frac{1}{2}r(\eta_{1}^2+\eta_2^2)+c_2\right) - 4 \xi (r + c_{1}) \ ,
\end{gather}
with $r\in[0,\eta_1^2]$. A quick inspection shows that $q(\eta_1^2; \xi  _{\rm crit}) = 0$ and $q(0; \xi_{\rm crit}) >0$, and moreover,
for all $\xi   \in \R$
\begin{gather}
\frac{\partial q}{\partial \xi  }(0; \xi  ) > 0 \quad \text{and} \quad  \frac{\partial q}{\partial \xi  }(\eta_1^2; \xi  ) <0\ ;
\end{gather}
therefore, $0= q(\eta_1^2; \xi_{\rm crit}) < q(\eta_1^2; \xi  )$ for all $\xi  <\xi  _{\rm crit}$.
So, for all $\xi < \xi_{\rm crit}$, there are two roots of $q(r; \xi  )$ within $(0, \eta_{1}^{2})$, and the polynomial is strictly positive on $[\eta_{1}^{2}, \eta_{2}^{2}]$.

This in turn implies, using arguments nearly identical to those used to prove Lemma \ref{g_signs}, that
\begin{eqnarray}
& & \Re \le[2 g(\lb) + 8 \lambda^{3} - 8 \xi^{2} \lambda \ri] > 0 \ \mbox{ for } \lambda \in \mathcal{C}_{1}\backslash\{\eta_1,\eta_2\} \, , \\
& & \Re \le[2 g(\lb) + 8 \lambda^{3} - 8 \xi^{2} \lambda \ri] < 0 \ \mbox{ for } \lambda \in \mathcal{C}_{2} \backslash\{-\eta_1,-\eta_2\}\ .
\end{eqnarray}

The use of this function, and the sequence of steps in the Riemann--Hilbert analysis which have been carried out for $t=0$ in Section \ref{sec:3}, may be applied directly to the present situation, and we use the same outer model $P^\infty(\lambda)$ (cf. \eqref{P_infinity}) as was used in Section \ref{sec:3}, with $x\Omega$ replaced by $t \ov{\Omega}$, with $\ov \Omega$ as defined by \eqref{Omega_bar}, along with the same local parametrices near each of the endpoints $\pm \eta_{1}, \pm \eta_{2}$.  Therefore we arrive at the following result.

\begin{thm}\label{thm6.3}
In the regime $t \to + \infty$, $\xi  < \xi  _{\rm crit}$, the potential $u(x,t)$ has the following asymptotic expansion
\begin{eqnarray}
u(x,t)=\eta_2^2-\eta_1^2-2\eta_2^2\Jac^2 \le( \eta_2(x-2(\eta_1^2+\eta_2^2)t+\phi) + K(m)\le| \, m\ri. \ri)+ \mathcal{O}\le(t^{-1}\ri) \ , 
\end{eqnarray}
where  $m = \eta_{1}/\eta_{2}$, and 
\begin{equation}
\phi = \int_{\eta_{1}}^{\eta_{2}} \frac{ \log{r( \zeta)}}{R_{+}(\zeta)} \frac{\d \zeta}{\pi i} \ .
\end{equation}
\end{thm}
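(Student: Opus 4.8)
The plan is to repeat, essentially verbatim, the chain of transformations carried out in Section \ref{sec:3} (equivalently, the super-critical analysis of Section \ref{sec:5} with $\alpha$ frozen at $\eta_1$), with $x$ replaced by $t$ and the $g$- and $\Omega$-data replaced by the ones built in the present section. First I would set $T(\lb)=Y(\lb)e^{tg(\lb)\sigma_3}f(\lb)^{\sigma_3}$, where $g$ is the function \eqref{gprimea} and $f$ is the \emph{same} function \eqref{f} used in Section \ref{sec:3}; since $f$ is unchanged, the ratio $f_+/f_-=e^{\Delta}$ on $[-\eta_1,\eta_1]$ involves the very same constant $\Delta$ of \eqref{Delta}, and the phase $\phi$ in the statement is exactly the quantity this produces. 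By \eqref{c1xneg}--\eqref{c3xneg}, the oscillatory jumps of $Y$ on $\Sigma_1\cup\Sigma_2$ become triangular, with constant off-diagonal entries $\mp i$ and oscillatory diagonal entries $e^{\pm t(g_+-g_-)}(f_+/f_-)^{\pm1}$, together with the constant diagonal jump $e^{\pm(t\ov\Omega+\Delta)}$ on $[-\eta_1,\eta_1]$; the $Y$-problem itself is solvable by the Appendix.

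Next I would open lenses along $\mathcal{C}_1$ and $\mathcal{C}_2$ as in \figurename \ \ref{openinglenses}, using the analytic continuation $\hat r$ and the factorizations of the jump matrices displayed there. The inequalities $\Re[2g(\lb)+8\lb^3-8\xi\lb]>0$ on $\mathcal{C}_1$ and $\Re[2g(\lb)+8\lb^3-8\xi\lb]<0$ on $\mathcal{C}_2$, which are established just above the theorem from the sign analysis of the quadratic $q(r;\xi)$, make the lens jumps $I+\mathcal{O}(e^{-ct})$ uniformly outside small discs about $\pm\eta_1,\pm\eta_2$. What remains is the model problem \eqref{Sinfinity1}--\eqref{Sinfinity2} with $x\Omega+\Delta$ replaced by $t\ov\Omega+\Delta$, solved by the vector $S^\infty$ of \eqref{Sinfty_sol} and the matrix outer parametrix $P^\infty$ of \eqref{P_infinity} under the same substitution, together with Bessel-type local parametrices at the four branch points built exactly as in Section \ref{localparam}, the ones at $-\eta_j$ obtained from those at $\eta_j$ via the $\lb\mapsto-\lb$ symmetry.

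Then I would form the error vector $\mathcal{E}(\lb)=S(\lb)P(\lb)^{-1}$, check (as in Lemma \ref{lemma4.1} and the symmetry discussion of Sections \ref{smaxinfty} and \ref{finalchapteralpha}) that it is analytic at $\lb=0$ despite the pole of $P^\infty$ there, note that its jumps are $I+\mathcal{O}(e^{-ct})$ on the lens boundaries and $I+\mathcal{O}(t^{-1})$ on the circles, and apply the small-norm theorem \cite[Section 5.1.3]{smallnormRH} to get a unique $\mathcal{E}=\begin{bmatrix}1&1\end{bmatrix}+\mathcal{E}_1(x,t)/(\lb t)+\mathcal{O}(\lb^{-2})$ with bounded $x$-derivatives. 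Unravelling $Y=\mathcal{E}\,P^\infty e^{-tg\sigma_3}f^{-\sigma_3}$ for large $\lb$ and feeding it into \eqref{uexp1}, one expands $f$, $e^{-tg}$ and $S^\infty_1$ near $\lb=\infty$ exactly as in the proof of Theorem \ref{thm:3.4}; the one ingredient needed is the identity $\partial_x(tg'(\lb))=1-Q_1(\lb)/R(\lb)$, proved by the same Riemann-bilinear/normalization argument used for \eqref{gprime_der} (a holomorphic differential with vanishing $A$-period is identically zero), now with $R_\alpha$ replaced by $R$ and with \emph{no} $\partial\alpha/\partial x$ term since $\alpha\equiv\eta_1$ does not depend on $x$, together with $\partial_x(t\ov\Omega)=-\pi i\eta_2/K(m)$ from \eqref{Omega_bar}. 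The non-oscillatory part of the expansion collapses to $\eta_2^2-\eta_1^2-2\eta_2^2E(m)/K(m)$, the $(x,t)$-linear content of the $\vartheta_3$ argument is $x-2(\eta_1^2+\eta_2^2)t$, and the identity \eqref{theta-jacobi} rewrites the $\vartheta_3$-logarithm as the $\Jac^2$ form in the statement, with $\phi$ as claimed.

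I do not expect a genuinely new obstacle: the $g$-function, its sign inequalities, the outer and local parametrices, and the small-norm estimate are all either established verbatim in Sections \ref{sec:3}--\ref{sec:5} or recorded in the present section. The only place demanding care is the final asymptotic bookkeeping — confirming that $\partial_x(tg)$ and $\partial_x(t\ov\Omega)$ combine into precisely the shift $x-2(\eta_1^2+\eta_2^2)t$ inside $\vartheta_3$ (the Gurevich--Pitaevskii group velocity) and that the constant terms reproduce \eqref{u_theo} with $x$ replaced by $x-2(\eta_1^2+\eta_2^2)t$ — and verifying the $\partial_x(tg')$ identity above; for the latter it is worth emphasising that the $\alpha$-derivative contributions present in the super-critical computation simply drop out here because $\alpha$ is frozen at $\eta_1$.
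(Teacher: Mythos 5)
Your proposal is correct and follows essentially the same route as the paper: the paper's own proof of Theorem \ref{thm6.3} consists precisely of verifying the sign conditions via the polynomial $q(r;\xi)$ and then invoking verbatim the Section \ref{sec:3} machinery (same $f$, hence same $\Delta$ and $\phi$; same outer parametrix with $x\Omega$ replaced by $t\ov\Omega$; same Bessel local parametrices at all four fixed endpoints; small-norm argument and the bookkeeping of Theorem \ref{thm:3.4}). Your observation that the $\partial\alpha/\partial x$ terms drop out because the band edges are frozen, so that $\partial_x(tg')=1-Q_1/R$ and $\partial_x(t\ov\Omega)$ is constant, is exactly the simplification the paper relies on implicitly.
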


\section{Conclusions}
In this paper we have considered the Riemann--Hilbert problem of \cite{ZakZakDya} in the case of one non-trivial reflection coefficient.  We have shown how this Riemann--Hilbert problem describes a soliton gas as the limit of a finite $N$-soliton configuration as $N$ tends to $+\infty$.  Then we established rigorous asymptotics of the KdV potential in several different regimes.  First, for the initial configuration, we studied the challenging behaviour as $x \to - \infty$, and obtained a universal asymptotic description in terms of the periodic travelling wave solution of KdV. Then, we provided a complete analysis of the long-time behavior of the solution of the KdV equation determined by the Riemann--Hilbert problem of \cite{ZakZakDya}.  For large $t$, there are three fundamental spatial domains, in which the solution $u(x,t)$ displays different asymptotic behaviours depending on the value of the parameter $\xi = x/(4t)$: for $\xi > \eta_2^2$ the solution decays exponentially, while for $\xi < \xi_{\text{cirt}}$ the solution is described by the periodic travelling wave solution of KdV with fixed parameters; between, for $\xi \in (\xi_{\text{crit}}, \eta_2^2)$ these two extreme asymptotic states are connected by a periodic travelling wave solution of KdV with slowly varying parameters.

Several challenges remain, like  the asymptotic analysis when there are two nontrivial reflection coefficients  or  the case where the spectral parameters of the soliton gas accumulates in disconnected components of the imaginary axis.
Beyond these, it is enticing to consider the interaction of one large soliton with this gas  like in \cite{CDE16} or the interaction between two such soliton gases.

\appendix
\section{Existence of solution to the soliton gas Riemann-Hilbert problem} \label{appendix}

We will provide a proof of existence and uniqueness for the Riemann-Hilbert problem
\begin{align}
\label{symYApp1}
 & Y(\lb) \text{ is analytic for } \lambda \in  \C \backslash \le\{ \Sigma_1 \cup \Sigma_2 \ri\} \nonumber \\
&Y_+ (\lambda) = Y_-(\lambda) \begin{cases}
\displaystyle \begin{bmatrix} 1 &0  \\ -i r(\lb;x,t)  & 1 \end{bmatrix} &\quad \lambda \in  \Sigma_1\\
\displaystyle \begin{bmatrix} 1 & i r(\lb;x,t)\\ 0  & 1 \end{bmatrix} & \quad \lambda \in  \Sigma_2
 \end{cases}\\
&Y(\lambda) = \begin{bmatrix}a&b \end{bmatrix} + \mathcal{O}\le(\frac{1}{\lambda}\ri) \qquad \lambda \rightarrow \infty\\
\label{symYApp}
&Y(-\lambda) = Y(\lb)\begin{bmatrix}0&\frac{b}{a}\\\frac{a}{b}&0 \end{bmatrix} \,,
\end{align}
with parameters $a>0$ and $b>0$, and with the contours shown in \figurename \ \ref{RHPY}, and $r(\lb;x,t)=r(\lambda) e^{8\lambda t \le( \lambda^2 - \frac{x}{4t} \ri)}$, and where we as usual seek a solution with at worst logarithmic singularities at the endpoints $\pm \eta_{j}$.

To establish that there is a solution to the Riemann-Hilbert problem (\ref{symYApp1})-(\ref{symYApp}), we seek $y_{1}$ with the following representation:
\begin{eqnarray}
\label{eq:Xform}
y_{1} = a + \frac{1}{2 \pi i } \int_{\eta_{1}}^{\eta_{2}} \frac{  \sqrt{r(s;x,t)} f(s)}{s - \lambda} \d s  .
\end{eqnarray}

This is consistent with the jump relations in (\ref{symYApp1}), in which the first entry is analytic across $(-\eta_{2}, -\eta_{1})$, with a jump across $(\eta_{1}, \eta_{2})$. Plugging (\ref{eq:Xform}) into the jump relation across $(\eta_{1}, \eta_{2})$, we find
\begin{eqnarray}
\label{eq:ExistIntOp}
f(\lambda) + \frac{b}{a} \frac{\sqrt{r( \lambda;x,t)}}{2\pi} \int_{\eta_{1}}^{\eta_{2}} \frac{\sqrt{r(s;x,t)}f(s)}{s + \lambda} \d s = -  i b \sqrt{r(\lambda;x,t)} \ .
\end{eqnarray}
The reader may verify that this integral equation appears (after some manipulation) in both entries of the jump relationships.

Now the integral operator appearing on the left hand side of (\ref{eq:ExistIntOp}) is compact (since it can obviously be approximated by a sequence of finite dimensional operators) and hence the index is zero.  It is also positive definite, which shows that this integral equation is uniquely invertible.

To see that the integral operator is positive definite, we follow the classic \cite[Formula 2.9]{KayMoses}, starting with the simple identity
\begin{eqnarray}
\frac{1}{s+\lambda} =\int_{-\infty}^{0} e^{ ( s + \lambda) z } \d z \ .
\end{eqnarray}  
We have
\begin{gather}
\int_{\eta_{1}}^{\eta_{2}}   \sqrt{r(\lambda;x,t)}  \ \overline{f(\lambda)} \int_{\eta_{1}}^{\eta_{2}} \frac{\sqrt{r(s;x,t)}f(s)}{s + \lambda} \d s \d \lambda =  \nonumber \\
\int_{-\infty}^{0}
\int_{\eta_{1}}^{\eta_{2}} \int_{\eta_{1}}^{\eta_{2}}  \sqrt{r(\lambda;x,t)}  \   \overline{f(\lambda)}\sqrt{r(s;x,t)}f(s) e^{(s + \lambda) z} \d s \d \lambda \d z \nonumber \\ 
= \int_{-\infty}^{0} \left|
\int_{\eta_{1}}^{\eta_{2}} \sqrt{r(s;x,t)} f(s) e^{ s z } \d s
\right|^{2} \d z \ > \ 0 \ ,
\end{gather}
provided $f$ is not identically equal to $0$.

Regarding uniqueness, if $\tilde{Y}$ is a solution to the Riemann-Hilbert problem (\ref{symYApp1})-(\ref{symYApp}), then setting $\tilde{f}(s) = - i \sqrt{r(s;x,t)} \tilde{Y}_{2}(s)$ for $s \in (\eta_{1}, \eta_{2})$, one verifies that $\tilde{f}$ must satisfy the integral equation (\ref{eq:ExistIntOp}), which obviously possesses a unique solution.

Returning to the Riemann-Hilbert problem (\ref{symYApp1})-(\ref{symYApp}), if we take $(a,b)=(1,1)$ we have established the existence and uniqueness of the solution to the soliton gas Riemann-Hilbert problem.  But more importantly, if we separately consider $(a,b) = (1,2)$, we find a second independent solution of the Riemann-Hilbert problem, which combined yields a $2 \times 2$ matrix solution to the following Riemann-Hilbert problem:
\begin{align}
 & {\bf Y}(\lb) \text{ is analytic for } \lambda \in  \C \backslash \le\{ \Sigma_1 \cup \Sigma_2 \ri\} \nonumber \\
&{\bf Y}_+ (\lambda) = {\bf Y}_-(\lambda) \begin{cases}
\displaystyle \begin{bmatrix} 1 &0  \\ -i r(\lb;x,t)  & 1 \end{bmatrix} &\quad \lambda \in  \Sigma_1\\
\displaystyle \begin{bmatrix} 1 & i r(\lb;x,t)\\ 0  & 1 \end{bmatrix} & \quad \lambda \in  \Sigma_2
 \end{cases}\\
&{\bf Y}(\lambda) = \begin{bmatrix}1&1 \\
1 & 2 \end{bmatrix} + \mathcal{O}\le(\frac{1}{\lambda}\ri) \qquad \lambda \rightarrow \infty \ .
\end{align}
This solution is invertible for all $\lambda \in \mathbb{C}$ since $\mbox{det} {\bf Y} \equiv 1$.

\vspace{3mm}

\noindent{\bf Acknowledgements.}
T.G. and M.G. acknowledges  the support of the H2020-MSCA-RISE-2017 PROJECT No. 778010 IPADEGAN. K.M. was supported in part by the National Science Foundation under grant DMS-1733967.  Part of the work of M.G. and K.M. was done during their visits at SISSA and while T.G., K.M. and R.J. were visiting CIRM, Luminy, France.  We acknowledge SISSA and CIRM for excellent working conditions and generous support.
We wish to thank  Marco Bertola, and Alexander Minakov  for useful feedback in constructing the matrix Riemann-Hilbert problem  outer parametrix   of KdV. In particular  in  Section~\ref{Outer} we have  implemented  the suggestions by 
Alexander Minakov.

\bibliographystyle{alpha}
\bibliography{DZZSolitonGas_Revision6}

\end{document}